\newcommand\encircle[1]{%
  \tikz[baseline=(X.base)] 
    \node (X) [draw, shape=circle, inner sep=0] {\strut #1};}
\newcommand{\gen}{\textsf{GEN}}
\newcommand{\kgen}{\textsf{KGEN}}
\newcommand{\qgen}{\textsf{QGEN}}
\newcommand{\query}{\textsf{Query}}
\newcommand{\md}{\text{ mod}}
\newcommand{\mq}{\mathrm{MQ}}
\newcommand{\pex}{\mathrm{PEX}}
\newcommand{\qmq}{\mathrm{QMQ}}
\newcommand{\qpex}{\mathrm{QPEX}}
\newcommand{\rpex}{\mathrm{RPEX}}
\newcommand{\sample}{\mathrm{SAMPLE}}
\newcommand{\qsample}{\mathrm{QSAMPLE}}
\newcommand{\bin}{\mathrm{BIN}}
\newtheorem{theorem}{Theorem}
\newtheorem{question}{Question}
\newtheorem{corollary}{Corollary}[theorem]
\newtheorem{lemma}{Lemma}
\newtheorem{observation}{Observation}
\newtheorem{conjecture}{Conjecture}
\newtheorem{definition}{Definition}
\newtheorem{claim}{Claim}
\begin{document}

\author{Ryan Sweke}
\affiliation{\mbox{Dahlem Center for Complex Quantum Systems, Freie Universit\"{a}t Berlin, D-14195 Berlin, Germany}}

\author{Jean-Pierre Seifert}
\affiliation{\mbox{Department of Electrical Engineering and Computer Science,
TU Berlin, D-10587 Berlin, Germany}}
\affiliation{\mbox{FhG SIT, D-64295 Darmstadt, Germany}}

\author{Dominik Hangleiter}
\affiliation{\mbox{Dahlem Center for Complex Quantum Systems, Freie Universit\"{a}t Berlin, D-14195 Berlin, Germany}}

\author{Jens Eisert}
\affiliation{\mbox{Dahlem Center for Complex Quantum Systems, Freie Universit\"{a}t Berlin, D-14195 Berlin, Germany}}

\affiliation{\mbox{Helmholtz Center Berlin, D-14109 Berlin, Germany}}
\affiliation{\mbox{Department of Mathematics and Computer Science, Freie Universit{\"a}t Berlin, D-14195 Berlin, Germany}}


\title{On the Quantum versus Classical Learnability of Discrete Distributions}

\begin{abstract}
Here we study the comparative power of classical and quantum learners for generative modelling within the Probably Approximately Correct (PAC) framework. More specifically we consider the following task: Given samples from some unknown discrete probability distribution, output with high probability an efficient algorithm for generating new samples from a good approximation of the original distribution. Our primary result is the explicit construction of a class of discrete probability distributions which, under the decisional Diffie-Hellman assumption, is provably not efficiently PAC learnable by a classical generative modelling algorithm, but for which we construct an efficient quantum learner. This class of distributions therefore provides a concrete example of a generative modelling problem for which quantum learners exhibit a provable advantage over classical learning algorithms. In addition, we discuss techniques for proving classical generative modelling hardness results, as well as the relationship between the PAC learnability of Boolean functions and the PAC learnability of discrete probability distributions.
\end{abstract}

\maketitle

\section{Introduction}\label{s:intro}
Since its introduction, Valiant's model of ``Probably Approximately Correct" (PAC) learning \cite{valiant1984theory}, along with a variety of natural extensions and modifications, has provided a fruitful framework for studying both the computational and statistical aspects of machine learning \cite{kearns1994introduction,shalev2014understanding}. Importantly, the PAC framework also provides a natural setting for the rigorous comparison of quantum and classical learning algorithms \cite{arunachalam2017survey,ciliberto2020statistical}. In fact,  while the recent availability of ``noisy intermediate scale quantum" (NISQ) devices has spurred a huge interest in the potential of quantum enhanced learning algorithms \cite{dunjko2018machine, schuld2018supervised, biamonte2017quantum}, it is interesting to note that there is a rich history of quantum learning theory, beginning as early as 1995 with the seminal work of Bshouty and Jackson \cite{arunachalam2017survey, bshouty1998learning}. Despite this rich history, the majority of previous work on quantum learning theory has focused on the classical versus quantum learnability of different classes of Boolean functions, which provides an abstraction of supervised learning \cite{shalev2014understanding}. 

In this work, we study the classical versus quantum PAC learnability of discrete probability distributions. More specifically, at an informal level we explore the following question, from the perspective of both classical and quantum learning algorithms: Given samples from some unknown probability distribution, output with high probability an efficient algorithm for generating new samples from a good approximation of the original distribution. 
We refer to this task as \textit{generative modelling}. Note that one could also consider the related problem not of generating new samples from a distribution, but of learning a description of the distribution itself -- a problem known as \textit{density estimation} \cite{canonne2020short, kamath2015learning, diakonikolas2016learning}. 

Here, we focus exclusively on generative modelling for a variety of reasons. Firstly, from a purely classical perspective, modern heuristic models and algorithms for generative modelling, such as Generative Adversarial Networks (GANS) \cite{goodfellow2014generative}, Variational Auto-Encoders~\cite{kingma2013auto} and Normalizing Flows \cite{Kobyzev_2020} have proven extremely successful, with a wide variety of practical applications, and as such understanding whether quantum algorithms may be able to offer an advantage for this task is of natural interest. Additionally, a variety of quantum models and algorithms for generative modelling have recently been proposed, such as Born Machines~\cite{coyle2019born, liu2018differentiable, Benedetti_2019, Gaoeaat9004}, Quantum GANS~\cite{Dallaire_Demers_2018,Hueaav2761, lloyd2018quantum, chakrabarti2019quantum} and Quantum Hamiltonian-based models~\cite{verdon2019quantum}. While the majority of these approaches remain ill-understood from a theoretical perspective, Ref. \cite{Gaoeaat9004} has indeed already established evidence for a meaningful generative modelling advantage, in some specific instances, when using a particular tensor network inspired quantum generative model. Furthermore, we know that there exist probability distributions which cannot be efficiently sampled from classically, but which can be efficiently sampled from using quantum devices \cite{BosonSampling, bremner_average-case_2016, arute2019quantum}. In light of this fact, and the emergence of quantum algorithms for generative modelling, Ref.~\cite{coyle2019born} has formalized the question, within the PAC framework, of whether there also exist classes of probability distributions which can be efficiently \textit{PAC learned} (in a generative sense) with quantum resources, but not with purely classical approaches. Our primary contribution in this work is to answer this question in the affirmative, through the explicit construction of a concept class of discrete probability distributions which, under the Decisional Diffie-Hellman (DDH) assumption~\cite{boneh1998decision}, is provably not efficiently PAC learnable from classical samples by a classical learning algorithm, but for which we provide an efficient quantum PAC learning algorithm. This class of distributions therefore provides a concrete example of a generative modelling problem for which quantum learners exhibit a provable advantage over classical learning algorithms, within the PAC framework.

The following important points regarding the setting of the result are worth clarifying. Firstly, although it might seem natural to consider the learnability of probability distributions describing the outcome of quantum processes, such as the measurement of a parameterized quantum state or random quantum circuit, we focus exclusively in this work on probability distributions describing the outcomes of \textit{classical} circuits. Apart from allowing us to exploit existing results concerning the hardness of learnability for specific classes of discrete probability distributions \cite{Kearns:1994:LDD:195058.195155}, this restriction also allows us to demonstrate a quantum generative modelling advantage for purely classical problems. Additionally, in the context of Boolean function learning, it is often of interest to consider quantum learning algorithms which have access to \textit{quantum examples} -- in essence a superposition of input/output pairs from the unknown function to be learned \cite{arunachalam2017survey}. In the setting we are concerned with here, it is also possible to consider a notion of \textit{quantum samples} from a distribution, however once again we choose to restrict ourselves to quantum learning algorithms with access to classical samples from the unknown probability distribution, which provides the fairest comparison of quantum versus classical learners for generative modelling. Finally, it is important to stress that the efficient quantum learning algorithm which we provide is expected to require a universal fault-tolerant quantum computer, as it makes use of the exact efficient quantum algorithm for discrete logarithms \cite{mosca2004exact}. As a result, we \emph{do not} expect that the separation we show in this work can be experimentally demonstrated on NISQ devices. Studying the learnability of probability distributions generated by quantum processes, the power of learners with quantum samples, and the power of near-term quantum learning algorithms remain interesting open problems, and as such we will also discuss the consequences of our results and techniques for approaching these questions. 

To provide a generative modelling task for which there exists a definitive provable separation between the power of quantum and classical learners, we rely heavily on techniques at the rich interface of computational learning theory and cryptography \cite{kearns1994introduction}. More specifically, we start from the prior work of Kearns, Mansour, Ron, Rubinfeld, Schapire and Sellie (KMRRSS) \cite{Kearns:1994:LDD:195058.195155}, who have shown that given any pseudorandom function collection it is possible to construct a class of probability distributions for which no efficient classical generative modelling algorithm exists. We show that in order for such a class of distributions to be efficiently quantum learnable, one requires a pseudorandom function collection for which there exists a quantum adversary, who in addition to distinguishing keyed instances of the pseudorandom function collection from random functions via membership queries, can also learn the secret key using only random examples. By using the DDH assumption as a primitive, we are then able to construct such a pseudorandom function collection via a slight modification of the Goldreich-Goldwasser-Micali (GGM) construction~\cite{Goldreich:1986:CRF:6490.6503}.

Although the classical hardness result of KMRRSS \cite{Kearns:1994:LDD:195058.195155} is a sufficient starting point for our purposes, we also address in this work the possibility of obtaining similar classical hardness results for generative modelling from primitives other than pseudorandom function collections. More specifically, we formulate and discuss conjectures concerning the possibility of proving classical hardness results for generative modelling from both \textit{weak} pseudorandom function collections, and from existing hardness results for the PAC learnability of \textit{Boolean functions}. Apart from being of conceptual interest, in the former case these considerations are motivated by the possibility of using such results to address questions concerning generative modelling with near-term quantum learners, as well as quantum learners with quantum samples. In the latter case, these considerations are motivated by a desire to understand better the relationship between the PAC learnability of discrete probability distributions, and the PAC learnability of Boolean functions.

From the above outline one can see that both the results and techniques of this work lie at the intersection of quantum machine learning, computational learning theory and cryptography. In particular, while our primary result is very much in the spirit of computational learning theory, and contributes new ideas and techniques in this vein, it is also certainly of interest to the quantum machine learning community, and largely motivated by a desire to understand more clearly the potential and limitations of quantum enhanced machine learning. As a result, in order for this work to be accessible to readers with differing backgrounds and interests, we will provide a detailed and pedagogical presentation of the foundational material necessary for understanding both the context and details of our main result.

We proceed in this work as follows: Firstly, we begin in Section \ref{s:PAC} with an introduction to the PAC framework, both for concept classes consisting of Boolean functions, and for the generative modelling of concept classes consisting of probability distributions over discrete domains. Given these foundations, we conclude Section \ref{s:PAC} with the statement of Question~\ref{q:sample_vs_sample}, which provides a precise technical description of the primary question that we address in this work, and which we have described informally above. With this in hand, we then proceed in Section~\ref{s:Seperation} to answer Question~\ref{q:sample_vs_sample} in the affirmative. More specifically, after providing an overview of the necessary cryptographic notions in Section~\ref{ss:crypto}, we present in Section~\ref{ss:kearns} a technique due to KMRRSS\ \cite{Kearns:1994:LDD:195058.195155} for constructing from any pseudorandom function collection a distribution class which is provably not efficiently learnable by classical learning algorithms. This technique then allows us to construct in Section~\ref{ss:DDH_result} a distribution class which, under the DDH assumption, is provably not efficiently learnable by any classical learning algorithm, but for which we provide explicitly an efficient quantum learner for the generative modelling task. We then briefly discuss in Section \ref{ss:verification} a method for the \textit{verification} of the advantage exhibited by the quantum learner we provide. Having fully addressed Question \ref{q:sample_vs_sample} at this point, we then shift gears and explore in Section \ref{s:alt_classical_hardness} the possibility of obtaining classical generative modelling hardness results from primitives other than pseudorandom function collections. In particular, in Section \ref{ss:weak_PRFs} we discuss whether weak pseudorandom function collections would be sufficient, and in Section \ref{ss:from_bool_to_dist} we examine the relationship between PAC learnability of Boolean functions, and the PAC generative modelling of associated probability distributions. Finally, in Section \ref{s:conclusion} we summarize our results, and provide an overview of interesting related and open questions, focusing specifically on the setting of probability distributions generated by quantum processes.

\section{Quantum and Classical PAC Learning}\label{s:PAC}

In this section, we begin by defining the notion of \textit{probably approximately correct} (PAC) learnability, both for concept classes consisting of Boolean functions, and concept classes consisting of probability distributions over discrete domains. As we will see, these notions provide a meaningful abstract framework for studying computational aspects of both supervised learning and probabilistic/generative modelling respectively. While the main result of this work is concerned with the latter setting, we begin with the more familiar context of Boolean functions in order to introduce both the fundamental ideas, and a variety of oracle models which will be important throughout this work. Additionally, as mentioned in the introduction, after presentation of our main distribution learning results in Section \ref{s:Seperation}, we will in Section \ref{ss:from_bool_to_dist} discuss in detail the relationship between PAC learnability of Boolean function classes, and PAC learnability of discrete distribution classes.   

\subsection{PAC Learning of Boolean Functions}\label{ss:PAC_Boolean}

Let us denote by $\mathcal{F}_n$ the set of all Boolean functions on $n$ bits -- i.e. $\mathcal{F}_n = \{f|f:\{0,1\}^n \rightarrow \{0,1\}\}$. Notice that any function in $\mathcal{F}_n$ can be specified via its truth table, and therefore $\mathcal{F}_n \simeq \{0,1\}^{2^n}$. We call any subset $\mathcal{C} \subseteq \mathcal{F}_n$ a \emph{concept class}. For any $f\in \mathcal{F}_n$ we can define various types of classical and quantum oracle access to $f$. Classically, we define the \emph{membership query} oracle $\mq(f)$ as the oracle which on input $x$ returns $f(x)$, and the \emph{random example} oracle $\pex(f,D)$ as the oracle which when invoked returns a tuple $(x,f(x))$, where $x$ is drawn from the distribution $D$ over $\{0,1\}^n$. It will also be useful to us later to define the oracle $\rpex(f,D)$ which when invoked returns only $f(x)$, with $x$ drawn from $D$. This can be summarized as follows: 
\begin{align}
    &\query[\mq(f)](x) = f(x),\\
    &\query[\pex(f,D)] = (x,f(x)) \text{ with } x \leftarrow D, \\
    &\query[\rpex(f,D)] = f(x) \text{ with } x \leftarrow D,
\end{align}
where we have used the notation $x\leftarrow D$ to indicate that $x$ is drawn from $D$. Additionally, we define the \emph{quantum membership query} oracle $\qmq(f)$ as the oracle which on input $|x\rangle\otimes|y\rangle$ returns $|x\rangle\otimes|f(x)\oplus y\rangle$, and the \emph{quantum random example} oracle $\qpex(f,D)$ as the oracle which when invoked returns the quantum state $\sum_x\sqrt{D(x)}|f(x)\rangle$, where again $D$ is some distribution over $\{0,1\}^n$. This can be summarized as
\begin{align}
    &\query[\qmq(f)](|x\rangle\otimes|y\rangle) = |x\rangle|\otimes |f(x)\oplus y\rangle,\\
    &\query[\qpex(f,D)] = \sum_{x\in\{0,1\}^n}\sqrt{D(x)}|f(x)\rangle.
\end{align}
As it will be convenient later, we also define $\mq(f,D) := \mq(f)$ and $\qmq(f,D) := \qmq(f)$ for all distributions $D$. 
For a more detailed discussion of these oracles, and in particular the motivation behind their definitions and the relationships between them, we refer to Ref.\ \cite{arunachalam2017survey}. Given these notions, we can then formulate the following definition of a PAC learner for a given concept class:

\begin{definition}[PAC Learners]\label{d:pac_learners}
An algorithm $\mathcal{A}$ is an $(\epsilon,\delta,O, D)$-PAC learner for a concept class $\mathcal{C} \subseteq \mathcal{F}_n$, if for all $c\in \mathcal{C}$, when given access to oracle $O(c,D)$, with probability at least $1-\delta$, the learner $\mathcal{A}$ outputs a hypothesis $h \in \mathcal{F}_n$ such that
\begin{equation}
\underset{x\leftarrow D}{\mathrm{Pr}}[h(x) \neq c(x)] \leq \epsilon.
\end{equation}
An algorithm $\mathcal{A}$ is an $(\epsilon,\delta,O)$-PAC learner for a concept class $\mathcal{C}$, if it is an $(\epsilon,\delta,O, D)$-PAC learner for all distributions $D$.
\end{definition}
\noindent Before continuing, it is useful to make some comments concerning this definition. Firstly, note that the above formulation allows us to consider both classical learning algorithms, with either classical membership query or classical random example oracle access, as well as quantum learning algorithms, with either classical or quantum oracle access of any type. Additionally, it is important to point out that for a given model of oracle access $O$ we can consider either distribution dependent learners -- i.e. learners which are required to succeed (in the sense of being probably approximately correct) only with respect to samples drawn from some fixed distribution $D$, or distribution independent learners, which should succeed with respect to samples drawn from all possible distributions. In light of these observations, we see that Definition \ref{d:pac_learners} provides for us a flexible abstraction of supervised learning, which allows for the comparison of a variety of different learning algorithms, each of which models the supervised learning problem in a different context. In order to perform a meaningful \textit{computational} comparison of these different learning algorithms, we need the following notions of sample and time complexity.

\begin{definition}[Complexity of PAC Learners]\label{d:pac_complexity}
The sample (time) complexity of an $(\epsilon,\delta,O, D)$-PAC learner $\mathcal{A}$ for a concept class $\mathcal{C}$ is the maximum number of queries made by $\mathcal{A}$ to the oracle $O(c,D)$ (maximum run-time required by $\mathcal{A}$) over all $c\in\mathcal{C}$, and over all internal randomness of the learner. The sample (time) complexity of an $(\epsilon,\delta, O)$-PAC learner $\mathcal{A}$ for a concept class $\mathcal{C}$ is the maximum number of queries made by $\mathcal{A}$ to the oracle $O(c,D)$ (maximum run-time required by $\mathcal{A}$) over all $c\in\mathcal{C}$, all possible distributions $D$ and all internal randomness of the learner. Both an $(\epsilon,\delta,O,D)$-PAC and an  $(\epsilon,\delta,O)$-PAC learner for a concept class $\mathcal{C} \subseteq \mathcal{F}_n$ is called efficient if its time complexity is $\mathcal{O}(\mathrm{poly}(n,1/\delta,1/\epsilon))$. 
\end{definition}
\noindent Given this, the following definition formalizes a variety of notions for the efficient PAC learnability of a concept class:
\begin{definition}[Efficient PAC Learnability of a Concept Class]\label{d:pac_concept}
We say that a concept class $\mathcal{C}$ is efficiently classically (quantum) PAC learnable with respect to distribution $D$ and oracle $O$ if for all $ 0< \epsilon,\delta < 1$ there exists an efficient classical (quantum) $(\epsilon,\delta,O,D)$-PAC learner for $\mathcal{C}$. Similarly, $\mathcal{C}$ is efficiently classically (quantum) PAC learnable with respect to oracle $O$ if for all $ 0< \epsilon,\delta < 1$ there exists an efficient classical (quantum) $(\epsilon,\delta,O)$-PAC learner for $\mathcal{C}$.
\end{definition}
\noindent For a complete overview of known results and open questions concerning classical versus quantum learnability of Boolean function concept classes, we again refer to Ref.\ \cite{arunachalam2017survey}. 

\subsection{PAC Learning of Discrete Distributions}\label{ss:PAC_Dist}

\noindent In the previous section we provided definitions for the PAC learnability of concept classes consisting of Boolean functions, which provides an abstract framework for studying and comparing computational properties of different supervised learning algorithms. In this section, we formulate a generalization to concept classes consisting of discrete distributions, which builds on and refines the prior work of Refs. \cite{coyle2019born, Kearns:1994:LDD:195058.195155}, and provides an abstract framework for studying probabilistic modelling from a computational perspective. Additionally, this formulation allows us to state precisely the primary question that we address in this work. For simplicity (and without loss of generality) we will consider distributions over bit strings, and as such we denote the set of all distributions over $\{0,1\}^n$ as $\mathcal{D}_n$, and we call any $\mathcal{C} \subseteq \mathcal{D}_n$ a \emph{distribution class}. We also denote the uniform distribution over $\{0,1\}^n$ as $U_n$. In order to provide a meaningful generalization of PAC learning to this setting, the first thing that we require is a meaningful notion of a query to a distribution. To do this, given some distribution $D \in \mathcal{D}_n$, we define the \emph{sample} oracle $\sample(D)$ as the oracle which when invoked returns some $x$ drawn from $D$. More specifically, we have that 
\begin{equation}
    \query[\sample(D)] = x \leftarrow D.
\end{equation}
Additionally, it is natural to define the \emph{quantum sample} oracle $\qsample(D)$ via
\begin{equation}\label{e:qsample}
    \query[\qsample(D)] = \sum_{x\in \{0,1\}^n}\sqrt{D(x)}|x\rangle.
\end{equation}
In particular, note that given access to $\qsample(D)$ one can straightforwardly simulate access to $\sample(D)$ by simply querying $\qsample(D)$, and then performing a measurement in the computational basis. At this point, it is important to point out that unlike in the case of function concept classes -- where what it means to ``learn a function" is relatively straightforward -- there are two distinct notions of what it means to ``learn a distribution" \cite{Kearns:1994:LDD:195058.195155}. Informally, given some unknown distribution $D$, as well as access to either a classical or quantum sample oracle, we could ask that a learning algorithm outputs an \emph{evaluator} for $D$ -- i.e. some function $\tilde{D}:\{0,1\}^n \rightarrow [0,1]$ which on input $x \in \{0,1\}^n$ outputs an estimate for $D(x)$, and therefore provides an approximate description of the distribution. This is perhaps the most intuitive notion of what it means to learn a probability distribution, and one can indeed construct a corresponding notion of PAC learnability \cite{Kearns:1994:LDD:195058.195155}, for which a variety of results are known for different distribution classes~\cite{Kearns:1994:LDD:195058.195155,canonne2020short,kamath2015learning,diakonikolas2016learning}. However, in many practical settings one might not be interested in learning a full description of the probability distribution (an evaluator for the probability of events) but rather in being able to generate samples from the distribution. As such, instead of asking for a description of the unknown probability distribution (an evaluator) we could ask that the learning algorithm outputs a \emph{generator} for $D$ -- i.e. a probabilistic (quantum or classical) algorithm which when run generates samples from $D$. From a heuristic perspective we note that many of the most widely utilized probabilistic modelling architectures and algorithms, such as generative adversarial networks, are precisely learning algorithms of this type. Additionally, there has recently been a surge of interest in \emph{quantum} learning algorithms of this type -- so called \textit{Born machines}~\cite{coyle2019born, liu2018differentiable, Benedetti_2019} -- which are based on the simple observation that one can sample from a given distribution by preparing and measuring an appropriate quantum state (such as the state provided by the $\qsample$ oracle). We note that interestingly ``learning to evaluate" and ``learning to generate" are incomparable learning problems, in the sense that being able to learn an evaluator does not imply being able to learn a generator and vice versa~\cite{Kearns:1994:LDD:195058.195155}. While the learning of evaluators is certainly both interesting and important, with many open questions \cite{diakonikolas2016learning}, in this work we will focus exclusively on the problem of learning generators for distribution classes. To this end, we start with the following definition, adapted from Refs.~\cite{coyle2019born,Kearns:1994:LDD:195058.195155}, which formalizes the notions of efficient classical and quantum generators: 

\begin{definition}[Efficient Classical and Quantum Generators]\label{d:eff_gen}  Given some probability distribution $D$ over $\{0,1\}^n$, we say that a classical algorithm $\gen_D$ (or quantum algorithm $\qgen_D$) is an efficient classical (quantum) generator for $D$ if $\gen_D$ ($\qgen_D$) produces samples in $\{0,1\}^n$ according to $D$, using $\mathrm{poly}(n)$ resources. In the case of a classical algorithm, we allow the generator to receive as input $m=\mathrm{poly}(n)$ uniformly random input bits.
\end{definition}

\noindent We say that a distribution class $\mathcal{C}$ can be efficiently classically (quantum) generated if for all $D\in \mathcal{C}$ there exists an efficient classical (quantum) generator for $D$. Additionally, again following Refs. \cite{coyle2019born, Kearns:1994:LDD:195058.195155}, we can define the following notion of an approximate generator, which is necessary for a meaningful notion of PAC learnability:

\begin{definition}[Classical and Quantum $(d,\epsilon)$-Generator]\label{d:approx_generator}
Let $d$ be some distance measure on the space of probability distributions over $\{0,1\}^n$, and $D$ some probability distribution over $\{0,1\}^n$. Given some other distribution $D'$ over $\{0,1\}^n$, as well as an efficient classical generator $\gen_{D'}$ (or quantum generator $\qgen_{D'}$) for $D'$, we say that $\gen_{D'}$ ($\qgen_{D'}$) is an efficient classical (quantum) $(d,\epsilon)$-generator for $D$ if $d(D,D') \leq \epsilon$. 
\end{definition}
\noindent In this work we will use primarily the Kullback-Leibler (KL) divergence, defined via
\begin{equation}
    d_{\mathrm{KL}}(D,D') := \sum_{x}D(x)\log\left(\frac{D(x)}{D'(x)}\right),
\end{equation}
as well as the total variation (TV) distance
\begin{equation}
    d_{\mathrm{TV}}(D,D') := \frac{1}{2}\sum_{x}|D(x) - D'(x)|.
\end{equation}
We note that by virtue of its asymmetry the KL-divergence is not strictly a metric, however, via Pinsker's inequality we have that
\begin{equation}\label{e:pinsker}
    d_{\mathrm{TV}}(D,D') \leq \left(\sqrt{\frac{\ln(2)}{2}}\right)\sqrt{d_{\mathrm{KL}}(D,D')}.
\end{equation}
For more details on the interpretation of these and other relevant distance measures, we refer to Ref.~\cite{lehmann2006testing}. Given these preliminaries the following definition provides a natural generalisation of Definition~\ref{d:pac_learners} to the generative modelling context in which we are interested:

\begin{definition}[PAC Generator Learners]\label{d:pac_gen_learner}  A learning algorithm $\mathcal{A}$ is an $(\epsilon,\delta, O, d)$-PAC $\gen$-learner ($\qgen$-learner) for a distribution class $\mathcal{C}$, if for all $D \in \mathcal{C}$, when given access to oracle $O(D)$, with probability $1-\delta$ the learner $\mathcal{A}$ outputs a classical $(d,\epsilon)$-generator $\gen_{D'}$ (quantum $(d,\epsilon)$-generator $\qgen_{D'}$) for $D$.
\end{definition}

\noindent Before continuing it is worth clarifying two important aspects of Definition \ref{d:pac_gen_learner} (also illustrated in Figure~\ref{f:gen_pac_fig}):
\begin{enumerate}
    \item The learning algorithm $\mathcal{A}$ could be either classical or quantum, and in the quantum case the learner could have access to either the classical or quantum sample oracle (i.e. in the quantum case $O$ could be either $\sample$ or $\qsample$).
    \item Both classical and quantum learning algorithms could output either a classical generator or a quantum generator. In the former case we refer to the learner as a $\gen$-learner, and in the latter case as a $\qgen$-learner. In particular, while perhaps counterintuitive, we could consider classical $\qgen$-learners (which could for example output a description of a quantum sampling circuit) as well as quantum $\gen$-learners (which could output descriptions of classical circuits).
\end{enumerate}
\noindent Given the above definition, we can now define the sample/time complexity of  PAC generator learners analogously to how we have defined these notions in Definition \ref{d:pac_complexity}:

\begin{definition}[Complexity of PAC Generator Learners]\label{d:pac_gen_complexity}
The sample (time) complexity of either an $(\epsilon,\delta,O, d)$-PAC $\gen$-learner or an $(\epsilon,\delta,O, d)$-PAC $\qgen$-learner $\mathcal{A}$ for a distribution class $\mathcal{C}$ is the maximum number of queries made by $\mathcal{A}$ to the oracle $O(D)$ (maximum run-time required by $\mathcal{A}$) over all $D\in\mathcal{C}$, and over all internal randomness of the learner. Both an  $(\epsilon,\delta,O,d)$-PAC $\gen$-learner or an $(\epsilon,\delta,O, d)$-PAC $\qgen$-learner for a concept class $\mathcal{C} \subseteq \mathcal{F}_n$ is called efficient if its time complexity is $\mathcal{O}(\mathrm{poly}(n,1/\delta,1/\epsilon))$.
\end{definition}
\noindent Additionally, we can define both the efficient \textit{classical} PAC generator learnability of a distribution class (Definition \ref{d:pac_gen_concept_classical}) as well as the efficient \emph{quantum} PAC generator-learnability of a distribution class (Definition~\ref{d:pac_gen_concept_quantum}):
\begin{definition}[Efficient Classical PAC Generator-Learnability of a Distribution Class]\label{d:pac_gen_concept_classical}
We say that a distribution class $\mathcal{C}$ is efficiently  classically PAC $\gen$-learnable ($\qgen$-learnable) with respect to oracle $O$ and distance measure $d$ if for all $\epsilon > 0$, $ 0< \delta < 1$ there exists an efficient classical $(\epsilon,\delta,O,d)$-PAC $\gen$-learner ($\qgen$-learner) for $\mathcal{C}$.
\end{definition}
\begin{definition}[Efficient Quantum PAC Generator-Learnability of a Distribution Class]\label{d:pac_gen_concept_quantum}
We say that a distribution class $\mathcal{C}$ is efficiently quantum PAC $\gen$-learnable ($\qgen$-learnable) with respect to oracle $O$ and distance measure $d$ if for all $\epsilon > 0$, $ 0< \delta < 1$ there exists an efficient quantum $(\epsilon,\delta,O,d)$-PAC $\gen$-learner ($\qgen$-learner) for $\mathcal{C}$.
\end{definition}
\noindent Given these definitions, we are finally in a position to state precisely the primary question that we explore in this work:
\begin{question}\label{q:sample_vs_sample}
Does there exist a distribution class $\mathcal{C}$, which can be efficiently classically generated, and which 
\begin{enumerate}[label=(\alph*)]
    \item is not efficiently classically PAC $\gen$-learnable with respect to the $\sample$ oracle and the TV-distance,
    \item is efficiently quantum PAC $\gen$-learnable with respect to the $\sample$ oracle and the TV-distance. 
\end{enumerate}
\end{question}

\begin{figure}
		\centering
		\includegraphics[width=\linewidth]{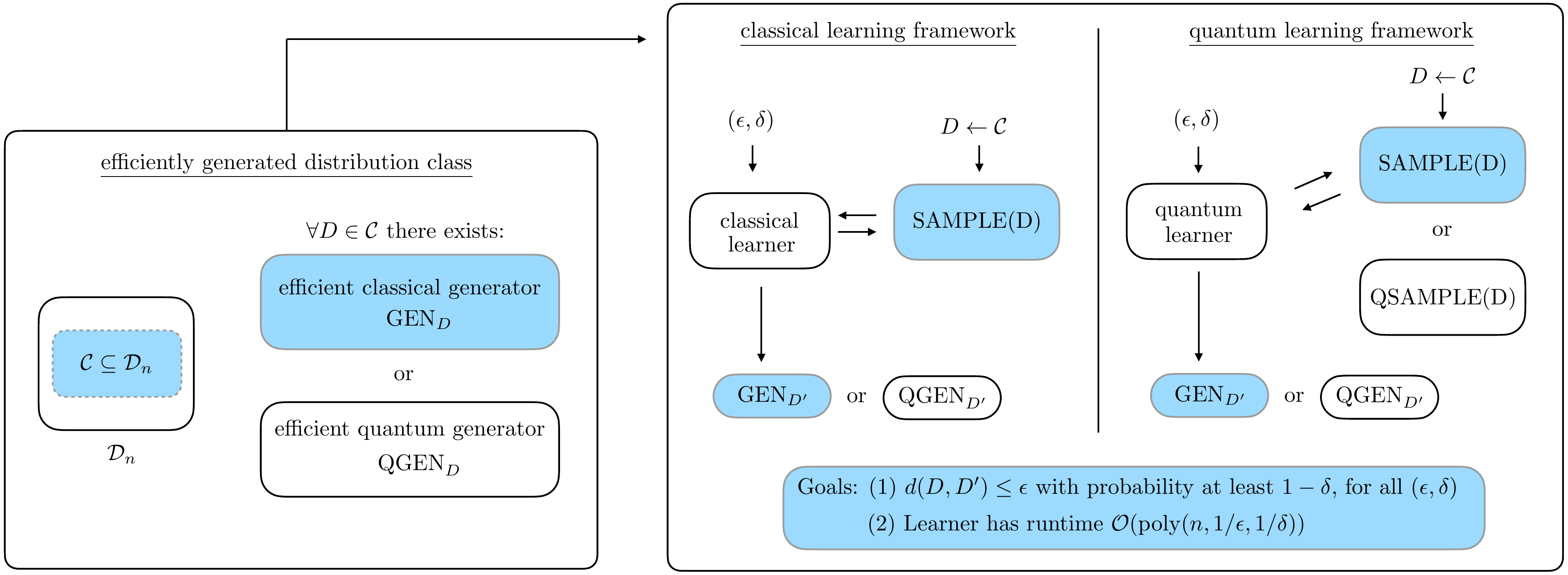}
		\caption{An overview of the PAC framework for generative modelling. As discussed in the main text, one could consider either classically generated or quantum generated distribution classes, quantum learners with access to either classical or quantum sample oracles, and learners which output either classical or quantum hypothesis generators. In this work, we focus on the setting indicated by the shaded boxes -- i.e. distribution classes which are efficiently classically generated, quantum learners with access only to classical samples, and both classical and quantum learners which output classical hypothesis generators.}
		\label{f:gen_pac_fig}
\end{figure}
	
\noindent In the following section we will show, via an explicit construction of such a distribution class,  that up to a standard cryptographic assumption, the answer to this question is ``Yes". Before continuing though, it is again worth clarifying a few potentially subtle aspects of the above question (also illustrated in Fig. \ref{f:gen_pac_fig}):\newline

\noindent\noindent\textbf{(1) Classically generated distribution class:} We have restricted our attention in Question \ref{q:sample_vs_sample} to distribution classes which can be \textit{efficiently classically generated} -- i.e. distribution  classes $\mathcal{C}$ with the property that for all $D \in \mathcal{C}$ there exists an efficient \textit{classical} generator $\gen_D$. We note that in principle this restriction is not necessary and if at a high level one's goal is simply to construct a generative modelling problem (i.e. distribution class) which is solvable using quantum resources (i.e. either efficiently quantum PAC $\gen$-learnable or efficiently quantum PAC $\qgen$-learnable), but not efficiently solvable with purely classical resources (i.e. not efficiently classically PAC $\gen$-learnable), then in principle this restriction is not necessary, and one could indeed also consider concept classes which are efficiently quantum generated. However, we have chosen to consider this additional constraint here both in order to make clear the conceptual distinction between generative modelling problems defined by underlying classical processes and generative modelling problems defined by underlying quantum processes, and to demonstrate clearly that quantum learning algorithms can obtain a clear advantage even for problems defined by underlying classical processes. However, despite the focus here on distribution classes which can be efficiently classically generated, the question of whether one can prove a quantum learning advantage for distribution classes which are specified by efficient quantum generators (such as those used for the demonstration of ``quantum computational supremacy" \cite{bremner_average-case_2016, BosonSampling, arute2019quantum}) remains an interesting open question, which we discuss in Section \ref{s:conclusion}.\newline

\noindent\noindent\textbf{(2) Classical sample oracle:} We note that in Question \ref{q:sample_vs_sample} we have restricted both the classical and quantum learning algorithms to the classical $\sample$ oracle access to the unknown distributions. Once again, as mentioned before, this is not strictly necessary, and one could also consider quantum learners with access to the $\qsample$ oracle. However, we have chosen to restrict ourselves here to classical $\sample$ oracle access both because this is the most natural abstraction of a typical applied generative modelling problem, and because this provides the ``fairest" playing field on which to compare the power of classical and quantum learners. That said, understanding the additional power that quantum samples might offer a quantum generator-learner is indeed also an interesting open question, which we formulate and  discuss in Section \ref{s:conclusion}.\newline

\noindent\textbf{(3) Classical output generators:} Given that we have restricted ourselves to concept classes which can be efficiently classically generated, it is natural as a first step to consider both classical and quantum $\gen$-learners - i.e. learning algorithms which are restricted to outputting classical generators. It is however also of great interest to determine whether the answer to Question \ref{q:sample_vs_sample} is still ``yes" if one considers quantum $\qgen$-learners, as most recent proposals for quantum generative modelling algorithms are of this type. Once again we discuss this possibility further in Section \ref{s:conclusion}.\newline

\noindent\noindent\textbf{(4) Total variation distance:} In order to motivate our choice of the TV-distance in Question \ref{q:sample_vs_sample} we note that Pinkser's inequality (Eq. \eqref{e:pinsker}) implies that if a distribution class is efficiently PAC $\gen$-learnable with respect to the KL-divergence, then it is efficiently PAC $\gen$-learnable with respect to the TV distance. As a result, if a concept class is \textit{not} efficiently classically PAC $\gen$-learnable with respect to the TV-distance, then it is \textit{not} efficiently classically PAC $\gen$-learnable with respect to KL-divergence. Given this, providing a positive answer to Question \ref{q:sample_vs_sample} with the TV-distance provides a stronger result then if one were to use the KL-divergence. In particular, if one were to only prove that a concept class was not PAC $\gen$-learnable with respect to the KL-divergence, this would not rule out its efficient PAC $\gen$-learnability with respect to the TV-distance.\newline

\noindent Finally, in addition to the above points, we note that Ref.\ \cite{coyle2019born} has defined ``quantum learning supremacy" as the existence of a distribution class for which, for some distance measure $d$ (not necessarily the total variation distance), for all $0<\delta<1$, and for some \textit{fixed} $\epsilon > 0$, there exists an efficient $(\epsilon,\delta,\sample, d)$ quantum PAC generator-learner (either a $\gen$-learner or a $\qgen$-learner), while for at least one value of $0<\delta<1$ there does not exist an efficient classical $(\epsilon,\delta,\sample, d)$ PAC $\gen$-learner. As such we see that a positive answer to Question \ref{q:sample_vs_sample} provides not only a clear example of what Ref.~\cite{coyle2019born} has called ``quantum learning supremacy", but in fact something slightly stronger, as a result of the fact that in order for a distribution class to be efficiently quantum PAC $\gen$-learnable (as per Definition \ref{d:pac_gen_concept_quantum}) there should exist an efficient $(\epsilon,\delta,\sample, d)$ quantum PAC $\gen$-learner not just for some fixed $\epsilon > 0$, but for \textit{all} $\epsilon > 0$. 

\section{A Quantum/Classical Distribution Learning Separation}\label{s:Seperation}

In this section we answer Question \ref{q:sample_vs_sample} in the affirmative, providing the main result of this work. To this end, we use the result of KMRRSS \cite{Kearns:1994:LDD:195058.195155} as a starting point, who have shown that any pseudorandom function (PRF) can be used to construct a distribution class which is not efficiently classically PAC $\gen$-learnable, with respect to $\sample$ and the KL-divergence. In addition, each distribution in the concept class defined by KMRRSS admits an efficient classical generator, which is fully specified by a key of the underlying PRF.
In light of this, we begin by strengthening the above result to hold also for the TV-distance. We then design a keyed function which, under the decisional Diffie-Hellman (DDH) assumption for the group family of quadratic residues \cite{boneh1998decision}, is pseudorandom from the perspective of classical adversaries, but not pseudorandom from the perspective of quantum adversaries, who in addition to distinguishing keyed instances of the function from random with membership queries, can also learn the secret key using only random examples. 
Instantiating a slight modification of the KMRRSS construction with this DDH based PRF yields a distribution class which answers Question \ref{q:sample_vs_sample} in the affirmative, under the DDH assumption for quadratic residues.

We proceed by introducing all the necessary cryptographic primitives in Section \ref{ss:crypto}. Equipped with these preliminaries, we then present in Section \ref{ss:kearns} the classical hardness result of KMRRSS \cite{Kearns:1994:LDD:195058.195155}, along with some important corollaries and modifications. Finally, given this result, in Section \ref{ss:DDH_result} we use the DDH assumption to explicitly construct a distribution class which, due to the results in Section \ref{ss:kearns}, is provably not efficiently classically learnable, but for which we are able to construct explicitly an efficient quantum learner.

\subsection{Cryptographic Primitives}\label{ss:crypto}

We begin here with a brief overview of the cryptographic notions which are necessary to understand the constructions in the following sections. For a more detailed introduction to these concepts and constructions, we refer to Ref.\ \cite{goldreich2007foundations}. The first notion that we need is that of a parameterization set.

\begin{definition}[Parameterization Set]\label{d:param_set} We call some infinite set $\mathcal{P}$ a parameterization set if: 
\begin{enumerate}
    \item $\mathcal{P}$ is the union of countably many pairwise disjoint sets, i.e. $\mathcal{P} = \bigcup_{n \in \mathbb{N}}\mathcal{P}_n$ with $\mathcal{P}_i\cap \mathcal{P}_j = \emptyset$ for all $i\neq j$.
    \item There exists an efficient (possibly probabilistic) instance generation algorithm $\mathcal{IG}$ which, for all $n\in\mathbb{N}$, on input $1^n$ outputs some $P\in\mathcal{P}_n$.
    \item There exists an efficient algorithm which for all $P\in\mathcal{P}$, on input $P$ outputs the unique $n\in\mathbb{N}$ such that $P\in\mathcal{P}_n$.
 \end{enumerate}
\end{definition}
\noindent We note that two particularly simple ``textbook" examples of parameterization sets, useful for gaining intuition, are $\mathcal{P} = \mathbb{N}$ and $\mathcal{P} = \{p\in\mathbb{N}\,|\, p \text{ is prime}\}$. In the former case one has $\mathcal{P}_n = \{n\}$, along with the deterministic instance generation algorithm $\mathcal{IG}(1^n) = n$, and in the latter case one can define $\mathcal{P}_n$ as the set of $n$-bit primes, and use as an instance generation algorithm existing efficient algorithms for sampling from the $n$-bit primes \cite{Katz:2007:IMC:1206501}. In particular, we note that in general, on input $1^n$, the instance generation algorithm $\mathcal{IG}$ effectively samples from some implicitly defined distribution over $\mathcal{P}_n$ -- i.e. $\mathcal{IG}(1^n)$ is a random variable taking values in $\mathcal{P}_n$. Using such parameterization sets we can then define indexed collections of efficiently computable functions:

\begin{definition}[Indexed Collection of Efficiently Computable Functions]\label{d:indexed_collection_f}
Given some parameterization set $\mathcal{P} $, we say that the set of functions $\{F_P\, | \, P \in \mathcal{P}\}$ is an indexed collection of efficiently computable functions if for all $P \in \mathcal{P}$ we have that $F_P:\mathcal{X}_P \rightarrow \mathcal{Y}_P$, and there exists:
\begin{enumerate}
    \item An efficient instance description algorithm which, for all $P\in\mathcal{P}$, on input $P$ outputs a description of the domain $\mathcal{X}_P$ and the codomain $\mathcal{Y}_P$.
    \item An efficient evaluation algorithm which, for all $P\in \mathcal{P}$ and all $x \in \mathcal{X}_P$, on input $P$ and $x$, outputs $F_P(x)$.
\end{enumerate}
\end{definition}

\noindent Using this, we are then able to define the notion of a collection of pseudorandom generators as follows:

\begin{definition}[Collection of Pseudorandom Generators]\label{d:PRG}
An indexed collection of efficiently computable functions $\{G_P\}$ is called a collection of pseudorandom generators if for all classical probabilistic polynomial time algorithms $\mathcal{A}$, all polynomials $p$ and all sufficiently large $n$ it holds that

\begin{equation}\label{eq:prg_property}
    \left|\mathrm{Pr}_{\substack{P \leftarrow \mathcal{IG}(1^n) \\
    x \leftarrow U(\mathcal{X}_P)}}[\mathcal{A}(P,G_P(x)) = 1] - \mathrm{Pr}_{\substack{P \leftarrow \mathcal{IG}(1^n) \\
    y \leftarrow U(\mathcal{Y}_P)}}[\mathcal{A}(P,y) = 1]\right| < \frac{1}{p(n)}
\end{equation}
where $U(\mathcal{X})$ denotes the uniform distribution over the set $\mathcal{X}$, and $\mathcal{IG}$ is the instance generation algorithm for $\{G_P\}$.
\end{definition}

\noindent We would now like to define pseudorandom functions. To do this we will need the slightly modified notion of an efficiently computable indexed collection of \textit{keyed} functions:
\begin{definition}[Indexed Collection of Efficiently Computable Keyed Functions]\label{d:indexed_collection_kf}
Given some parameterization set $\mathcal{P}$, we call a collection of functions $\{F_P\,|\,P\in \mathcal{P}\}$ an indexed collection of efficiently computable keyed functions if for all $P \in \mathcal{P}$ we have that $F_P: \mathcal{K}_P\times\mathcal{X}_P \rightarrow \mathcal{Y}_P$, and there exists:
\begin{enumerate}
    \item An efficient instance description algorithm which, for all $P\in\mathcal{P}$, on input $P$ outputs a description of the key space $\mathcal{K}_P$, effective domain $\mathcal{X}_P$, and codomain $\mathcal{Y}_P$.
    \item An efficient probabilistic key selection algorithm which, for all $P\in\mathcal{P}$, on input $P$ can sample efficiently from the uniform distribution over $\mathcal{K}_P$.
    \item An efficient evaluation algorithm which, for all $P\in \mathcal{P}$, all $k \in \mathcal{K}_P$ and all $x \in \mathcal{X}_P$, on input $P,k,x$ outputs $F_P(k,x)$.
\end{enumerate}
\end{definition}

\noindent Given this, following Refs. \cite{Zhandry:2012:CQR:2417500.2417838,bogdanov2017pseudorandom}, we can define various types of pseudorandom function collections via the following:
\begin{definition}[Classical-Secure, Weak-Secure, Standard-Secure and Quantum-Secure Pseudorandom Function Collection]\label{d:s_ss_qs_PRF}
An indexed collection of efficiently computable keyed functions $\{F_P\}$ is called a (a) classical-secure (b) weak-secure (c) standard-secure or (d) quantum-secure pseudorandom function collection if for all (a,b) classical probabilistic (c,d) quantum polynomial time adversaries $\mathcal{A}$, all polynomials $p$, and all sufficiently large $n$, it holds that
\begin{equation}\label{eq:prf_property}
    \left|\mathrm{Pr}_{\substack{P \leftarrow\mathcal{IG}(1^n)\\
    k \leftarrow U(\mathcal{K}_P)}}[\mathcal{A}^{O(F_P(k,\cdot))}(P) = 1] - \mathrm{Pr}_{\substack{P \leftarrow\mathcal{IG}(1^n)\\
    R \leftarrow U(F:\mathcal{X}_P \rightarrow \mathcal{Y}_P)}}[\mathcal{A}^{O(R)}(P) = 1]\right| < \frac{1}{p(n)}
\end{equation}
where $U(F:\mathcal{X}_P \rightarrow \mathcal{Y}_P)$ denotes the uniform distribution over all functions from $\mathcal{X}_P$ to $\mathcal{Y}_P$ and $\mathcal{A}$ is given oracle access to (a,c) $O(f) = \mq(f)$, (b) $O(f) = \pex(f,U)$ or (d) $O(f) = \qmq(f)$.
\end{definition}
\noindent In order to clarify the above definition, we summarize informally below, using the abbreviation ``AECA" for ``all efficient classical algorithms" and the abbreviation ``AEQA" for ``all efficient quantum algorithms":
\begin{align}
    &\text{AECA with classical random example oracle access satisfy Eq. \eqref{eq:prf_property}} \implies \text{ weak-secure}. \nonumber \\
    &\text{AECA with classical membership query oracle access satisfy Eq.\ \eqref{eq:prf_property}} \implies \text{ classical-secure}. \nonumber \\
    &\text{AEQA with classical membership query oracle access satisfy Eq.  \eqref{eq:prf_property}} \implies \text{ standard-secure}. \nonumber \\
    &\text{AEQA with quantum membership query oracle access satisfy Eq. \eqref{eq:prf_property}} \implies \text{ quantum-secure}. \nonumber 
\end{align}
While at first glance the above naming conventions may seem extremely confusing, we note that if one assumes the existence of quantum computers, then the ``standard" setting in which one would like to prove pseudorandomness of a function collection -- i.e. the setting which corresponds to most realistic physical scenarios -- is the setting in which any possible adversary (including quantum adversaries) has classical membership query access to the unknown functions \cite{Zhandry:2012:CQR:2417500.2417838}.

\subsection{Classical Hardness from Classical-Secure Pseudorandom Functions}\label{ss:kearns}

Given the preliminaries from the previous section, we present below -- in Theorem \ref{t:kearns} -- a construction due to KMRRSS\ \cite{Kearns:1994:LDD:195058.195155}, which allows one to use (almost) any classical-secure pseudorandom function collection to construct a distribution class -- in fact, infinitely many such classes -- which were proven in Ref.\ \cite{Kearns:1994:LDD:195058.195155} to be not efficiently classically $\gen$-learnable, with respect to the $\sample$ oracle and the KL-divergence. Before presenting this construction, however, a few remarks are necessary. Firstly, we note that Theorem \ref{t:kearns} as presented below, is in fact both a slight generalization and strengthening of the original result from Ref.~\cite{Kearns:1994:LDD:195058.195155}. More specifically, Theorem \ref{t:kearns} below makes it explicit that (a) one can use classical-secure pseudorandom function collections parameterized by arbitrary parameterization sets (as opposed to simply $\mathcal{P} = \mathbb{N}$), provided the domain and co-domain satisfy mild requirements, and (b) this construction actually results in distribution classes which are not efficiently classically $\gen$-learnable with respect to the $\sample$ oracle and the \textit{TV-distance}. While the motivation for strengthening the result is clear, the generalization will be necessary for us, as in the following section we wish to instantiate this distribution class construction using a concrete pseudorandom function candidate, based on the DDH assumption. Additionally, in this work we wish to construct a distribution class which is not only provably hard to learn classically, but which is also provably efficiently quantum learnable. To do this we will require another modification of the construction from Ref.~\cite{Kearns:1994:LDD:195058.195155}, which is presented as Corollary~\ref{c:kearns}, and whose significance will be discussed at length in the following section.  Finally, we note that KMRRSS\ have provided in Ref.\ \cite{Kearns:1994:LDD:195058.195155} only a sketch of a proof that their construction yields distribution classes which are classically hard to learn. As we both generalize and strengthen this result, as well as ultimately require a modification (Corollary \ref{c:kearns}) of this construction, we provide here a full proof for Theorem \ref{t:kearns}, based on the original sketch from Ref.~\cite{Kearns:1994:LDD:195058.195155}.

At this stage we are almost ready to present the construction, in a language sufficiently general for our requirements. As a final preliminary consideration, we note that for all non-negative integers $x \in \mathbb{N}^0$ we will denote by $\bin(x)$ the shortest possible binary representation of $x$, and by $\bin_n(x)$ the $n$-bit binary representation obtained by padding $\bin(x)$ with zeros. We also denote by $x||y$ the concatenation of bit strings $x$ and $y$. Additionally, for any set $X \subset \mathbb{N}^0$ we write $X\subseteq \{0,1\}^n$ when $\bin_n(x)$ exists for all $x \in X$. Given these definitions we state the following theorem, which is a reformulation, generalization and strengthening of the original result from KMRRSS~\cite{Kearns:1994:LDD:195058.195155}:

\begin{theorem}[Classical Hardness from Classical-Secure Pseudorandom Functions]\label{t:kearns}
Let $\{F_P\}$ be a classical-secure pseudorandom function collection with the property that for all $n$, for all $P \in \mathcal{P}_n$, it is the case that $F_P:\mathcal{K}_P\times\{0,1\}^n \rightarrow \mathcal{Y}_P$, with $\mathcal{Y}_P \subseteq \{0,1\}^n$. For all $P$, and all $k \in \mathcal{K}_P$, we then define 
\begin{equation}
    \kgen_{(P,k)}:\{0,1\}^n \rightarrow \{0,1\}^{2n}
\end{equation}
via
\begin{equation}
    \kgen_{(P,k)}(x) = x||\bin_n(F_P(k,x)).
\end{equation}
Additionally, we denote by $\tilde{D}_{(P,k)}$ the discrete distribution over $\{0,1\}^{2n}$ for which $\kgen_{(P,k)}$ is a classical generator. For all sufficiently large $n$ the distribution class $\tilde{\mathcal{C}}_n := \{\tilde{D}_{(P,k)}|P \in \mathcal{P}_n,k\in\mathcal{K}_p\}$ is not efficiently classically PAC $\gen$-learnable with respect to the $\sample$ oracle and the TV-distance.
\end{theorem}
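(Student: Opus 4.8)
The plan is to argue the contrapositive: from an efficient classical $(\epsilon,\delta,\sample,d_{\mathrm{KL}})$ PAC generator learner $\mathcal{A}$ for $\tilde{\mathcal{C}}_n$ I will build an efficient classical next‑point predictor for $\{F_P\}$ in the sense of Definition~\ref{d:poly_inference}, which by Lemma~\ref{l:gold_predictability} contradicts the assumed classical‑security of $\{F_P\}$. Fix once and for all a small constant $\epsilon$ (say with $\ln 2\sqrt{\epsilon}\le 1/8$) and a small constant $\delta$ (say $\delta\le 1/20$); by efficient PAC learnability such an $\mathcal{A}$ exists, running in $\mathrm{poly}(n)$ time and making $q=q(n)=\mathrm{poly}(n)$ sample queries for all large $n$.

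The key observation is that a single draw from $\sample(\tilde{D}_{(P,k)})$ is exactly a string $x\|\bin_n(F_P(k,x))$ with $x\leftarrow U_n$, i.e.\ a random example for $F_P(k,\cdot)$ under the uniform distribution, which is manufactured from one membership query. So the predictor $\mathcal{B}$, given $P$ and membership‑query access to a challenge $g:\{0,1\}^n\to\mathcal{D}'_P$ (either $g=F_P(k,\cdot)$ for $k\leftarrow U(\mathcal{K}_P)$, or $g=R$ a uniformly random function), runs $\mathcal{A}$, answering each sample query by drawing a fresh $x\leftarrow U_n$, querying $g(x)$, and returning $x\|\bin_n(g(x))$; let $Q$ (with $|Q|\le q$) be the set of $x$‑values so used. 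When $\mathcal{A}$ halts with an efficient generator $\gen_{D'}$, run it once to obtain $(u,v)$; if $u\in Q$ output $\bot$, otherwise output the prediction that $g(u)$ equals the element of $\mathcal{D}'_P$ whose $n$‑bit binary representation is $v$. This is manifestly classical and efficient.

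It remains to bound the success probability in the two cases. If $g=F_P(k,\cdot)$, the samples fed to $\mathcal{A}$ are genuine draws from $\tilde{D}_{(P,k)}\in\tilde{\mathcal{C}}_n$, so with probability $\ge 1-\delta$ we have $d_{\mathrm{KL}}(\tilde{D}_{(P,k)},D')\le\epsilon$; condition on this. Writing $s_u:=u\|\bin_n(F_P(k,u))$, the identity $d_{\mathrm{KL}}(\tilde{D}_{(P,k)},D')=-n-\sum_u 2^{-n}\log D'(s_u)$ together with concavity of $\log$ gives $D'\big(\mathrm{supp}\,\tilde{D}_{(P,k)}\big)=\sum_u D'(s_u)\ge 2^{-\epsilon}$, so the draw $(u,v)$ is a correct pair ($v=\bin_n(F_P(k,u))$) with probability $\ge 2^{-\epsilon}$; moreover the first marginal of $\tilde{D}_{(P,k)}$ is exactly $U_n$, so by the chain rule for KL and Pinsker's inequality $d_{\mathrm{TV}}(\mathrm{marg}_1 D',U_n)\le \ln 2\sqrt{\epsilon}$, hence $\Pr[u\in Q]\le q2^{-n}+\ln 2\sqrt{\epsilon}$. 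Combining, $\mathcal{B}$ outputs a correct prediction at a point outside $Q$ with probability $\ge (1-\delta)\big(2^{-\epsilon}-q2^{-n}-\ln 2\sqrt{\epsilon}\big)$, which exceeds $3/4$ for all large $n$ by our choice of $\epsilon,\delta$. If instead $g=R$ is uniformly random, then $\mathcal{B}$'s entire view depends on $R$ only through its restriction to $Q$, so for any $u\notin Q$ the value $R(u)$ is uniform on $\mathcal{D}'_P$ and independent of $(u,v)$; hence $\mathcal{B}$ succeeds with probability at most $1/|\mathcal{D}'_P|\le 1/2$ (if $|\mathcal{D}'_P|=1$ then $\{F_P\}$ is trivially not classical‑secure and the statement is vacuous). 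The resulting $\ge 1/4$ gap contradicts classical‑security via Lemma~\ref{l:gold_predictability}; equivalently, appending one extra query $g(u)$ and outputting $[\,g(u)=\text{guess}\,]$ gives a membership‑query distinguisher with advantage $\ge 1/4-o(1)$, contradicting Definition~\ref{d:s_ss_qs_PRF}(a). Hence no efficient classical PAC generator learner for $\tilde{\mathcal{C}}_n$ exists.

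I expect the main obstacle to be the random‑function case: one must rule out a learner that simply memorizes the $q$ samples it has seen, which is precisely why the prediction is forced onto a fresh point $u\notin Q$ and why the argument needs the KL data‑processing bound together with Pinsker — and the freedom to take $\epsilon$ as small as we like — to ensure a draw from $D'$ lands outside $Q$ with good probability. A secondary point of care is the asymptotic bookkeeping: "efficiently learnable" must be negated as "no single $\mathrm{poly}(n)$‑time learner succeeds for all large $n$", and $\epsilon,\delta$ must be frozen as constants before the reduction so that the final distinguishing advantage is genuinely constant.
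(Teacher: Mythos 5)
Your proposal is correct, and while it shares the same overall structure as the paper's proof (simulate the assumed learner by answering its $\sample$ queries with membership queries, draw once from the learned generator, and use that sample to break pseudorandomness), the key technical step is handled in a genuinely different and arguably cleaner way. The paper fixes $\epsilon=\log n$, $\delta=1/2$ and invokes Lemma~\ref{l:proof_inter_lemma}, a pointwise statement that at least $2^n/n$ of the support strings receive $D'$-mass $\geq 2^{-(2+\epsilon+n)}$, which yields a $Q$-inference algorithm with advantage only $\Omega(1/n^2)$. You instead freeze $\epsilon,\delta$ as small constants, apply Jensen's inequality to get the average bound $D'(\mathrm{supp}\,\tilde{D}_{(P,k)})\geq 2^{-\epsilon}$, and apply the KL chain rule plus Pinsker to the first marginal to bound $\Pr[u\in Q]$; this yields a \emph{constant} distinguishing advantage and also lets you sidestep the exam-pair bookkeeping of Definition~\ref{d:poly_inference} by converting the predictor directly into a membership-query distinguisher contradicting Definition~\ref{d:s_ss_qs_PRF}(a). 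Two minor remarks: your claim that $|\mathcal{D}'_P|=1$ makes $\{F_P\}$ ``trivially not classical-secure'' is backwards — in that degenerate case every adversary has zero advantage, so the collection is vacuously secure but the theorem's conclusion fails; the paper has the same latent assumption ($\Pr[\encircle{a2}]\leq 1/2^n$ implicitly needs $|\mathcal{D}'_P|$ large), so this is an imprecision in the theorem statement rather than a flaw specific to your argument. Also, if you want to route through Lemma~\ref{l:gold_predictability} as you initially suggest, you do need to specify what the inference algorithm does when $u\in Q$ (it must still submit a legal exam string and then guess at random), as the paper does; your direct-distinguisher route avoids this altogether.
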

\noindent In order to simplify the presentation of the proof of Theorem \ref{t:kearns}, it will be convenient to begin with a few preliminary lemmas. The first result that we need is an alternative characterization of classical-secure pseudorandom function collections, which we develop below, and illustrate in Fig. \ref{f:PRF_equivalence}.

\begin{definition}[Polynomial Inference \cite{Goldreich:1986:CRF:6490.6503}]\label{d:poly_inference} Let $\{F_P\}$ be an indexed collection of keyed functions, and let $\mathcal{A}$ be some probabilistic polynomial time classical algorithm capable of oracle calls. On input $P\in \mathcal{P}_n$, algorithm $\mathcal{A}$ is given oracle access to $\mq(F_P(k,\cdot))$, and carries out a computation in which it queries the oracle on $x_1,\ldots,x_j \in \mathcal{X}_P$. Algorithm $\mathcal{A}$ then outputs some $x\in \mathcal{X}_P$, which must satisfy $x \notin \{x_1,\ldots,x_j$\}. We call $x$ the ``exam string". At this point, $\mathcal{A}$ is then disconnected from $\mq(F_P(k,\cdot))$ and presented the two values $F_P(k,x)$ and $y \leftarrow U(\mathcal{Y}_P)$ in random order. We say that $\mathcal{A}$ ``passes the exam" if it correctly guesses which of the two values is $F_{P}(k,x)$. Let $Q$ be some polynomial. We then say that $\mathcal{A}$ $Q$-infers the collection $\{F_P\}$ if for infinitely many $n$, given input $P\in \mathcal{P}_n$, it passes the exam with probability at least $1/2 + 1/Q(n)$, where the probability is taken uniformly over all possible choices of $P\in \mathcal{P}_n$, $k\in \mathcal{K}_P$, all possible choices of $y\in \mathcal{Y}_P$ and all possible orders of the exam strings $F_P(k,x)$ and $y$. We say that an indexed collection of keyed functions $\{F_P\}$ can be polynomially inferred if there exists a polynomial $Q$ and a probabilistic polynomial time algorithm $\mathcal{A}$ which $Q$-infers $\{F_P\}$.
\end{definition}

\begin{figure}
		\centering
		\includegraphics[width=\linewidth]{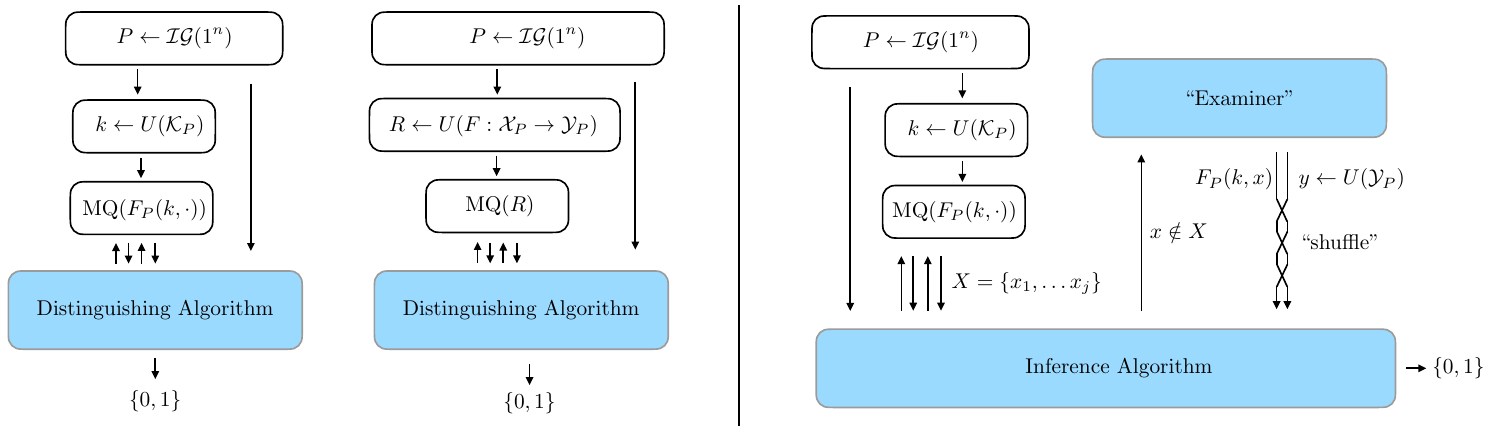}
		\caption{Illustration of equivalent notions of a classical-secure PRF collection. The left panel illustrates the setting as per Definition \ref{d:s_ss_qs_PRF}, in which we consider distinguishing algorithms which, with membership query access, try to distinguish between a randomly drawn instance of the keyed function collection $F_{P}(k,\cdot)$ and a function $R$ drawn uniformly at random. The right panel illustrates Definition \ref{d:poly_inference}, in which we consider an inference algorithm, which after a learning phase, should try to pass an ``exam" of its own choosing. As per Lemma \ref{l:gold_predictability}, for a given indexed collection of keyed functions, there exists a suitable distinguishing algorithm, if and only if there exists a suitable inference algorithm.  } 
		\label{f:PRF_equivalence}
\end{figure}

\begin{lemma}[\cite{Goldreich:1986:CRF:6490.6503}]\label{l:gold_predictability}
Let $\{F_P\}$ be an indexed collection of efficiently computable keyed functions. Then, $\{F_p\}$ cannot be polynomially inferred if and only if it is a classical-secure pseudorandom function collection.
\end{lemma}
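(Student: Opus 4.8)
The plan is to prove the equivalence in Lemma~\ref{l:gold_predictability} by establishing its contrapositive in both directions: an indexed collection of keyed functions $\{F_P\}$ \emph{can} be polynomially inferred if and only if it is \emph{not} a classical-secure pseudorandom function collection. Both directions amount to turning an ``inference'' algorithm (the learning-then-exam picture of Definition~\ref{d:poly_inference}, right panel of Fig.~\ref{f:PRF_equivalence}) into a ``distinguishing'' algorithm (the membership-query picture of Definition~\ref{d:s_ss_qs_PRF}, left panel) and vice versa, following Goldreich--Goldwasser--Micali \cite{Goldreich:1986:CRF:6490.6503}.

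For the easier direction, suppose $\{F_P\}$ can be $Q$-inferred by some probabilistic polynomial time $\mathcal{A}$. I would build a distinguisher $\mathcal{B}^{O(\cdot)}(P)$ as follows: $\mathcal{B}$ simulates $\mathcal{A}$, forwarding $\mathcal{A}$'s membership queries $x_1,\dots,x_j$ to its own oracle $O$; when $\mathcal{A}$ outputs an exam string $x \notin \{x_1,\dots,x_j\}$, $\mathcal{B}$ queries $O$ on $x$ to obtain a value $v$, draws a fresh $y \leftarrow U(\mathcal{D}'_P)$, presents $v$ and $y$ to $\mathcal{A}$ in a random order, and outputs $1$ iff $\mathcal{A}$ correctly identifies $v$ (resp.\ outputs $1$ with probability $1/2$ in a suitable convention, or one analyzes the two cases directly). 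When $O = \mq(F_P(k,\cdot))$, the value $v = F_P(k,x)$, so $\mathcal{A}$ sees exactly the exam distribution and passes with probability $\ge 1/2 + 1/Q(n)$; when $O = R$ for a uniformly random function, $v = R(x)$ is uniform in $\mathcal{D}'_P$ and independent of everything $\mathcal{A}$ has seen (because $x$ was never queried), so $v$ and $y$ are i.i.d.\ and $\mathcal{A}$ passes with probability exactly $1/2$. Hence $\mathcal{B}$ has distinguishing advantage $\ge 1/Q(n)$ for infinitely many $n$, violating the classical-secure PRF property. A minor technical point to handle carefully: one must argue the exam-passing event is well-defined and measurable, and that ``random order'' is handled symmetrically so that the random-function case gives advantage exactly $0$; this is routine but worth a sentence.

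For the converse — the direction I expect to be the main obstacle — suppose $\{F_P\}$ is not classical-secure, so there is a probabilistic polynomial time $\mathcal{B}^{\mq(\cdot)}(P)$ with advantage $\ge 1/p(n)$ on infinitely many $n$. The difficulty is that a generic distinguisher need not behave like an inference algorithm: it may never isolate a single ``unqueried'' point on which it has a prediction bias, and its advantage may be diffusely spread across its entire query transcript. The standard fix is a hybrid argument over the queries of $\mathcal{B}$: consider hybrids $H_0,\dots,H_T$ where in $H_i$ the first $i$ oracle queries are answered by the true $F_P(k,\cdot)$ and the remaining ones by fresh uniform random values (with consistency on repeated queries). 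Since $H_0$ and $H_T$ are the two extremes and $|\Pr[\mathcal{B}=1 \text{ in } H_0] - \Pr[\mathcal{B}=1 \text{ in } H_T]| \ge 1/p(n)$, by an averaging (hybrid) argument there is an index $i$ with $|\Pr[H_i] - \Pr[H_{i+1}]| \ge 1/(T \cdot p(n))$, which is still inverse-polynomial since $T = \mathrm{poly}(n)$. This identifies a single query — the $(i{+}1)$-st — whose answer being real versus random shifts $\mathcal{B}$'s output bias. The inference algorithm $\mathcal{A}$ then runs $\mathcal{B}$, guessing $i$ uniformly at random, answering the first $i$ queries with its own $\mq$ oracle and the rest with fresh random values, and declaring the $(i{+}1)$-st query point $x_{i+1}$ to be the exam string; upon being handed the challenge pair it feeds one of the two values as the answer to query $i{+}1$, completes the simulation of $\mathcal{B}$, and uses $\mathcal{B}$'s output bit together with a standard prediction-from-distinguishing argument to guess which value was real. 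One must also confirm $x_{i+1}$ was not among $x_1,\dots,x_i$ (if $\mathcal{B}$ repeats a query, $\mathcal{A}$ can reuse the stored answer and effectively skip that hybrid, or one simply conditions on non-repetition, absorbing the polynomial loss). Tracking the exact constants and the $1/2$-baseline through the distinguishing-to-prediction reduction — ensuring the final exam-pass probability is $1/2 + 1/Q(n)$ for some polynomial $Q$ — is the fiddly part, but it is entirely standard; the conceptual content is just ``hybrid argument $+$ distinguisher-to-predictor''. Combining the two directions yields the stated ``if and only if''.
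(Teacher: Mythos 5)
The paper takes this lemma as a black box from GGM and does not supply its own proof; the only proof material it offers is a sketch, in the discussion at the end of Section~\ref{ss:weak_PRFs} just before the difficulties with Conjecture~\ref{c:weak_prf_construction} are explained, of the direction ``polynomial inference implies not classical-secure.'' Your Part~1 reproduces exactly that sketch: simulate the inference algorithm forwarding its queries, prepare the exam from the $\mq$ oracle, and output $1$ iff the inference algorithm picks the oracle value; in the truly random world the two exam strings are i.i.d.\ uniform and independent of the transcript, so the pass probability is $1/2$ by symmetry, giving advantage $\ge 1/Q(n)$. Your Part~2 supplies the converse via the standard GGM hybrid over the distinguisher's $T = \mathrm{poly}(n)$ queries, which the paper does not spell out at all; this is the right argument and is correct in spirit. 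The two details you flag as fiddly do dissolve cleanly. Repeated queries are harmless because if $x_{i+1}$ coincides with an earlier query then the consistency rule forces $H_i$ and $H_{i+1}$ to be identical oracles, so those indices contribute zero to the telescoping sum and $\mathcal{A}$ may safely guess at random there. For the ``prediction-from-distinguishing'' step, the clean way to avoid estimating the sign of the bias is to fix it once and for all by possibly complementing $\mathcal{B}$'s output (this choice is correct for infinitely many $n$), then have $\mathcal{A}$ pick $i$ uniformly at random, feed a uniformly random one of the two challenge values as the answer to query $i+1$, finish simulating $\mathcal{B}$ with fresh consistent random answers, and declare ``the fed value is real'' iff $\mathcal{B}$ outputs $1$; averaging over $i$ and the fed-value coin gives pass probability exactly $\tfrac{1}{2} + \tfrac{1}{2T}(p_T - p_0) \ge \tfrac{1}{2} + \tfrac{1}{2Tp(n)}$, which is $1/2 + 1/Q(n)$ with $Q = 2Tp$. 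With those clarifications your outline is a correct reconstruction of the GGM proof of the full equivalence.
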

\noindent Additionally, we will need the following observation.

\begin{lemma}\label{l:proof_inter_lemma}
Let $\gen_D$ be a $(d_{\mathrm{TV}},\epsilon)$-generator for $\tilde{D}_{(P,k)}$, for some $\epsilon < 1/5$. 
Then, for at least $2^n/2$ of the $2^n$ possible strings of the form $y = x||\bin_n(F_p(k,x)) \in \{0,1\}^{2n}$ with $x\in\{0,1\}^n$ it is the case that
\begin{equation}
    D(x||\bin_n(F_{P}(k,x))) 
    \geq \frac{\epsilon}{2^n}.
\end{equation}
\end{lemma}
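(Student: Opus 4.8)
The plan is to argue by contrapositive: suppose that fewer than $2^n/n$ of the $2^n$ strings $y = x||\bin_n(F_P(k,x))$ satisfy $D(y) \geq 2^{-(2+\epsilon+n)}$, and derive a lower bound on $d_{\mathrm{KL}}(\tilde D_{(P,k)}, D)$ that exceeds $n-2$, contradicting the hypothesis that $\gen_D$ is a $(d_{\mathrm{KL}},\epsilon)$ generator for $\tilde D_{(P,k)}$ with $\epsilon < n-2$. The key structural fact to exploit is that $\tilde D_{(P,k)}$ is the uniform distribution over the $2^n$ strings of the form $x||\bin_n(F_P(k,x))$, since $\kgen_{(P,k)}$ applies a deterministic function to a uniformly random $x \in \{0,1\}^n$; hence $\tilde D_{(P,k)}(y) = 2^{-n}$ on its support and $0$ elsewhere. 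Therefore
\begin{equation}
    d_{\mathrm{KL}}(\tilde D_{(P,k)}, D) = \sum_{x \in \{0,1\}^n} 2^{-n} \log\!\left(\frac{2^{-n}}{D(x||\bin_n(F_P(k,x)))}\right).
\end{equation}

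**Main steps.** First I would make the observation above about $\tilde D_{(P,k)}$ being uniform on a support of size $2^n$ (this is essentially immediate from Definition \ref{d:eff_gen} and the form of $\kgen_{(P,k)}$, but injectivity of $\kgen_{(P,k)}$ — which holds because the first $n$ bits of the output are $x$ itself — is what guarantees the support has exactly $2^n$ elements, each of mass $2^{-n}$). Second, I would partition the $2^n$ support strings into the ``good'' set $G = \{x : D(x||\bin_n(F_P(k,x))) \geq 2^{-(2+\epsilon+n)}\}$ and its complement; the assumption for contradiction is $|G| < 2^n/n$, so $|G^c| > 2^n(1 - 1/n)$. Third, I would lower-bound the KL sum by discarding the (possibly negative, but bounded) contribution from $G$ and bounding the contribution from $G^c$ from below: for $x \in G^c$ each summand is $2^{-n}\log(2^{-n}/D(\cdots)) > 2^{-n}\log(2^{-n}/2^{-(2+\epsilon+n)}) = 2^{-n}(2+\epsilon)$. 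For the $G$ terms one needs a crude lower bound on each summand; since $D(\cdots) \leq 1$ we get each $G$-summand $\geq 2^{-n}\log(2^{-n}/1) = -2^{-n}n$. Combining, $d_{\mathrm{KL}} > 2^{-n}\big[(2+\epsilon)|G^c| - n|G|\big] > 2^{-n}\big[(2+\epsilon)2^n(1-1/n) - n \cdot 2^n/n\big] = (2+\epsilon)(1-1/n) - 1$. Fourth, I would check that this quantity exceeds $n-2$; unfortunately a quick look shows $(2+\epsilon)(1-1/n) - 1$ is far from $n-2$, so the naive bound on the $G$-terms is too lossy and must be improved — see the obstacle below.

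**The main obstacle.** The crude bound $D(y) \leq 1$ for $y \in G$ is wasteful: in fact $\sum_{y \in G} D(y) \leq 1$, so by convexity (Jensen / the log-sum inequality) the total $G$-contribution $\sum_{x \in G} 2^{-n}\log(2^{-n}/D(\cdots))$ is minimized when the $D(\cdots)$ are as large as possible subject to summing to at most $1$, i.e. roughly $D(\cdots) \approx 2^n/|G|$ — wait, that is $> 1$, impossible, so in fact the constraint that forces the contradiction is different. The cleaner route is: the map $\kgen_{(P,k)}$ is injective, so the $2^n$ strings $x||\bin_n(F_P(k,x))$ are distinct, and $D$ is a probability distribution, so $\sum_{x} D(x||\bin_n(F_P(k,x))) \leq 1$; combined with $|G| < 2^n/n$ this gives $\sum_{x \in G^c} D(x||\bin_n(F_P(k,x)))$ is close to the whole mass only if... actually the productive observation is that most support strings have tiny $D$-mass, forcing the KL sum up. The real work is in choosing the right convexity/averaging argument: I expect the clean proof applies the log-sum inequality to the $G^c$ block, $\sum_{x\in G^c} 2^{-n}\log\frac{2^{-n}}{D(\cdot)} \geq \frac{|G^c|}{2^n}\log\frac{|G^c|/2^n}{\sum_{x\in G^c}D(\cdot)} \geq (1-1/n)\log\frac{(1-1/n)2^n}{1}$, which grows like $n$, and then shows the bounded $G$-contribution cannot cancel this. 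Getting the constants to line up so that the final bound is $\geq n-2$ (using $\epsilon < n-2$ only at the very end to close the contradiction) is the delicate part; I would expect to need the log-sum inequality rather than term-by-term bounds, and to track the $\log(1-1/n) \geq -O(1/n)$ correction carefully.
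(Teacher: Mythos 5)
Your computation in the ``main steps'' paragraph is in fact correct and complete; the ``main obstacle'' paragraph is chasing a phantom. The mistake is in the target: at the outset you say you will ``derive a lower bound on $d_{\mathrm{KL}}(\tilde D_{(P,k)}, D)$ that exceeds $n-2$,'' but the hypothesis to contradict is only that $\gen_D$ is a $(d_{\mathrm{KL}},\epsilon)$ generator, i.e.\ that $d_{\mathrm{KL}}(\tilde D_{(P,k)}, D) \leq \epsilon$. You therefore only need the lower bound to exceed $\epsilon$; the condition $\epsilon < n-2$ is used at that last step, not as the target itself. And indeed your own bound
\begin{equation}
d_{\mathrm{KL}}(\tilde D_{(P,k)}, D) > (2+\epsilon)\left(1 - \tfrac{1}{n}\right) - 1 = \epsilon + \left(1 - \tfrac{2+\epsilon}{n}\right)
\end{equation}
strictly exceeds $\epsilon$ precisely when $1 - (2+\epsilon)/n > 0$, i.e.\ when $\epsilon < n-2$, which is exactly the assumption. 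The contradiction is already there; you simply compared against the wrong threshold and concluded, incorrectly, that the term-by-term bound was too lossy.

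The paper's proof is your naive bound, reorganized slightly: it writes $d_{\mathrm{KL}} = \sum_x 2^{-n}\log\bigl(1/D(x||\bin_n(F_P(k,x)))\bigr) - n$, uses $\log(1/D(\cdot)) \geq 0$ to drop the $G$-terms entirely (equivalent to your per-term bound of $-2^{-n}n$ once the uniform $-n$ shift is accounted for), and bounds each $G^c$-term from below by $2+\epsilon+n$; this yields the same expression $\epsilon + [1 - (2+\epsilon)/n]$. No log-sum inequality, Jensen, or averaging over the $G$-block is needed, and the alternative route you sketch at the end of the obstacle paragraph should be dropped.
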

\begin{proof}
Assume that the claim is false, i.e. that for at least $2^n/2$ of the strings $y = x||\bin_n(F_p(k,x)) \in \{0,1\}^{2n}$  one has that $D(x||\bin_n(F_{P}(k,x))) < \epsilon/2^n$. Then
\begin{align}
    d_{\mathrm{TV}}(D,\tilde{D}_{(P,k)}) &= \frac{1}{2}\sum_{y\in\{0,1\}^{2n}}|\tilde{D}_{(P,k)}(y)-D(y)| \\
    &= \frac{1}{2}\sum_{x\in\{0,1\}^{n}}|\frac{1}{2^n}-D(x||\bin_n(F_p(k,x)))| + \frac{1}{2}\sum_{\text{other } y}|D(y)| \\
    &\geq  \frac{1}{2}\sum_{x\in\{0,1\}^{n}}|\frac{1}{2^n}-D(x||\bin_n(F_p(k,x)))|  \\
    & > \frac{1}{2}\left(\frac{2^n}{2}\left[\frac{1}{2^n}-\frac{\epsilon}{2^n}\right]\right)\\
    &= \frac{1}{4}(1-\epsilon)\\
    & > \epsilon.        \qquad \qquad\qquad\qquad \qquad\qquad\qquad \qquad\qquad\qquad \qquad(\text{via } \epsilon < 1/5)
\end{align}
This contradicts the assumption that $\gen_D$ is a $(d_\mathrm{TV},\epsilon)$-generator for $\tilde{D}_{(P,k)}$.
\end{proof}
\noindent Finally, given these preliminary results and observations, we can present a full proof of Theorem \ref{t:kearns}.

\begin{proof}[Proof (Theorem \ref{t:kearns})] At a high level, the idea of the proof is to assume that $\tilde{C}_n$ is efficiently classically PAC $\gen$-learnable for infinitely many $n$, and use the associated learning algorithms to construct a poly-time algorithm which $Q$-infers $\{F_P\}$, for some polynomial $Q$. By Lemma \ref{l:gold_predictability} this implies that $\{F_P\}$ is not classical-secure pseudorandom, which then gives a proof by contradiction. To do this, let us denote the assumed learning algorithm for $\tilde{C}_n$ as $\mathcal{\tilde{A}}_n$. Our goal is now to construct an inference algorithm for $\{F_P\}$, which we denote as $\mathcal{A}$. Now, as per Definition \ref{d:poly_inference}, on input $P\in \mathcal{P}_n$, when given access to $\mq(F_{P}(k,\cdot))$, algorithm $\mathcal{A}$ proceeds in two steps as follows:

\begin{enumerate}
    \item Obtain an approximate generator for $\tilde{D}_{(P,k)}$ by simulating the learning algorithm $\tilde{\mathcal{A}}_n$. Specifically, run the learning algorithm $\tilde{\mathcal{A}}_n$, with $\epsilon = 1/n$ and $\delta = 1/2$, by using access to $\mq(F_{P}(k,\cdot))$ to simulate $\sample(\tilde{D}_{(P,k)})$.  Each time $\tilde{\mathcal{A}}_n$ queries $\sample(\tilde{D}_{(P,k)})$, algorithm $\mathcal{A}$ simply draws some $x\in \{0,1\}^n$ uniformly at random, queries  $\mq(F_{P}(k,\cdot))$ on input $x$, and then provides $\tilde{\mathcal{A}}_n$ with the sample
    \begin{equation}
       x||\bin_n[\query[\mq(F_P(k,\cdot)](x)] =  x||\bin_n(F_{P}(k,x)) = \kgen_{(P,k)}(x).
    \end{equation}
    Let us denote by $\gen_D$ the generator output by $\tilde{\mathcal{A}}_n$, and by $X = \{x_1,\ldots,x_j\}$ the set of strings used by $\mathcal{A}$ to simulate $\tilde{\mathcal{A}}_n$. We know that with probability $1-\delta = 1/2$, the output generator $\gen_D$ is a $(d_{\mathrm{TV}},\epsilon)$-generator for $\tilde{D}_{(P,k)}$. Additionally, it follows from the efficiency of $\mathcal{A}_n$ that $|X| = \mathrm{poly}(n,1/\delta,1/\epsilon) = \mathrm{poly}(n)$.
    \item Use $\gen_D$ to generate a sample $x||y \in \{0,1\}^{2n}$ from $D$.
    \begin{itemize}
    \item If $x\notin X$, then submit $x$ as the exam string, and receive the strings $y_1,y_2$. If $y\in \{y_1,y_2\}$, then output $y$. Let us call this case-$\encircle{$a$}$. Else, if $y\notin \{y_1,y_2\}$ then output either $y_1$ or $y_2$ 
    uniformly at random. We call this case-$\encircle{b}$.
    \item Else, if $x\in X$, then select any $\tilde{x} \notin X$ as the exam string, and after receiving $y_1,y_2$ simply output either $y_1$ or $y_2$ at random. Call this case-$\encircle{c}$.  
    \end{itemize}
\end{enumerate}
We now want to determine a lower bound on the probability that $\mathcal{A}$ passes the exam. To do this, let us denote by $\mathrm{Pr}[\encircle{z}]$ the probability that case-$\encircle{z}$ occurs, and by $\mathrm{Pr}_{\mathrm{pass}}[\encircle{z}]$ the conditional probability that $\mathcal{A}$ passes the exam, given that case-$\encircle{z}$ has
occurred. Clearly, $\mathrm{Pr}_{\mathrm{pass}}[\encircle{b}]=\mathrm{Pr}_{\mathrm{pass}}[\encircle{c}]=1/2$, so let us look at case-$\encircle{a}$ more carefully. In particular, there are two possibilities:
\begin{enumerate}
    \item The first possibility is that $y = F_{P}(k,x)$. Let's call this case-$\encircle{a1}$. In this case $\mathcal{A}$ definitely passes the exam -- i.e. we have that $\mathrm{Pr}_{\mathrm{pass}}[\encircle{a1}] = 1$.
    \item The second possibility is that $y\neq F_P(k,x)$, and that whichever string from $\{y_1,y_2\}$ was randomly drawn, just happens to equal $y$. Lets call this case-$\encircle{a2}$. In this case $\mathcal{A}$ definitely fails the exam -- i.e. $\mathrm{Pr}_{\mathrm{pass}}[\encircle{a2}] = 0$.
\end{enumerate}
In light of this, the probability that $\mathcal{A}$ passes the exam is then given by 
\begin{align}
    \mathrm{Pr}_{\mathrm{pass}} &= \mathrm{Pr}_{\mathrm{pass}}[\encircle{a1}]\mathrm{Pr}[\encircle{a1}] + \mathrm{Pr}_{\mathrm{pass}}[\encircle{a2}]\mathrm{Pr}[\encircle{a2}] +  \mathrm{Pr}_{\mathrm{pass}}[\encircle{b}]\mathrm{Pr}[\encircle{b}] + \mathrm{Pr}_{\mathrm{pass}}[\encircle{c}]\mathrm{Pr}[\encircle{c}] \nonumber \nonumber\\
    &= \mathrm{Pr}[\encircle{a1}] + \frac{1}{2}\left(\mathrm{Pr}[\encircle{b}] + \mathrm{Pr}[\encircle{c}]\right) \nonumber \\
    &= \mathrm{Pr}[\encircle{a1}] + \frac{1}{2}\left(1-  \mathrm{Pr}[\encircle{a1}] - \mathrm{Pr}[\encircle{a2}]\right) \nonumber \\
    &= \frac{1}{2} + \frac{1}{2}\mathrm{Pr}[\encircle{a1}] - \frac{1}{2}\mathrm{Pr}[\encircle{a2}].
\end{align}
So, to proceed we now analyze $\mathrm{Pr}[\encircle{a1}]$ and $\mathrm{Pr}[\encircle{a2}]$. Notice that case-$\encircle{a1}$ occurs when $x||y = x||F_P(k,x)$ for some $x\notin X$. Using Lemma \ref{l:proof_inter_lemma}, with $\epsilon = 1/n$, and $n> 5$ (so that $\epsilon < 1/5$) we know that, when $\gen_D$ is a 
$(d_{\mathrm{TV}},\epsilon)$-generator for $\tilde{D}_{(P,k)}$, there exist at least $2^n/2$ strings of the form $x||F_P(k,x)$ for which 
\begin{equation}
    D(x||F_P(k,x)) \geq \frac{1}{n2^n}.
\end{equation}
Using the above, along with the fact that $|X| = p(n)$ for some polynomial $p$, we then have that 
\begin{align}
    \mathrm{Pr}[\encircle{a1}] &\geq \mathrm{Pr}[\encircle{a1}\,|\,d_{\mathrm{TV}}(D,\tilde{D}_{(P,K)}) \leq \epsilon]\times \mathrm{Pr}[d_{\mathrm{TV}}(D,\tilde{D}_{(P,K)}) \leq \epsilon] \nonumber \\
    &\geq\left[ \left(\frac{2^n}{2} - p(n)\right) \frac{1}{n2^n}\right]\times\frac{1}{2}\nonumber\\
    &\geq \frac{1}{4n}\left(1 - \frac{2p(n)}{2^n}\right).
\end{align}
As a result, there exists some $n_1$ such that for all $n \geq n_1$ we have that $\mathrm{Pr}[\encircle{a1}] \geq 1/(6n)$. So, at this point we know that for all $n$ large enough
\begin{align}
    \mathrm{Pr}_{\mathrm{pass}} \geq \frac{1}{2} + \frac{1}{6n} - \frac{1}{2}\mathrm{Pr}[\encircle{a2}].
\end{align}
Now, note that case-$\encircle{a2}$ occurs when $x\notin X$ and when whichever of $y_1$ or $y_2$ is randomly drawn is equal to $y$. As a result, we have that $\mathrm{Pr}[\encircle{a2}] \leq 1/2^n$. Using this, we see that for all $n\geq n_1$, 
\begin{align}
    \mathrm{Pr}_{\mathrm{pass}} &\geq \frac{1}{2} + \frac{1}{6n} - \frac{1}{2\times2^n} \nonumber \\
    &\geq \frac{1}{2} + \frac{1}{6n}\left(1 - \frac{3n}{2^n}\right).
\end{align}
Similarly, to the previous case, we now know that there exists some $n_2$, such that for all $n \geq \max\{n_1,n_2\}$,
\begin{align}
    \mathrm{Pr}_{\mathrm{pass}}
    &\geq \frac{1}{2} + \frac{1}{7n} \nonumber \\
    &:= \frac{1}{2} + \frac{1}{Q(n)}.
\end{align}
In light of the above, we therefore see that for all sufficiently large $n$, $\mathcal{A}$ $Q$-infers $\{F_p\}$, and therefore, via Lemma \ref{l:gold_predictability}, $\{F_P\}$ cannot be classical-secure pseudorandom, which contradicts the assumptions of the theorem.
\end{proof}
As mentioned earlier, while Theorem \ref{t:kearns} provides a method for the construction of distribution classes which are not efficiently classically learnable, in order to construct such a distribution class which is also efficiently quantum learnable, it will be helpful to formulate the following modified construction:

\begin{corollary}\label{c:kearns}
Let $\{F_P\}$ be a classical-secure pseudorandom function satisfying all the properties required for Theorem \ref{t:kearns}. In addition, for all $n$, we assume that for all $P \in \mathcal{P}_n$ there exists an efficient $m = \mathrm{poly}(n)$ bit encoding of $P$, which we denote as $\bin_m(P)$. For all $P$, and all $K \in \mathcal{K}_P$ we then define
\begin{equation}
    \gen_{(P,k)}:\{0,1\}^n \rightarrow \{0,1\}^{2n + m}
\end{equation}
via
\begin{equation}
    \gen_{(P,k)}(x) = x||\bin_n(F_P(k,x))||\bin_m(P).
\end{equation}
Additionally, we define $D_{(P,k)}$ as the discrete distribution over $\{0,1\}^{2n+m}$ for which $\gen_{(P,k)}$ is a classical generator. For all sufficiently large $n$ the distribution class $\mathcal{C}_n := \{D_{(P,k)}|P \in \mathcal{P}_n,k\in\mathcal{K}_p\}$ is not efficiently classically PAC $\gen$-learnable with respect to the $\sample$ oracle and the TV-distance.
\end{corollary}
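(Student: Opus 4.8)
The plan is to obtain this as a direct corollary of the \emph{proof} of Theorem~\ref{t:kearns} rather than of its statement: the only change is that each sample now carries the fixed, efficiently recoverable suffix $\bin_m(P)$, and this changes nothing essential in the reduction to the classical-security of $\{F_P\}$. Concretely, I would argue by contradiction. Suppose the conclusion fails, i.e. for infinitely many $n$ the class $\mathcal{C}_n$ is efficiently classically PAC generator-learnable with respect to $\sample$ and $d_{\mathrm{KL}}$; in particular, for such $n$ there is an efficient classical $(\log n,\,1/2,\,\sample,\,d_{\mathrm{KL}})$ PAC generator learner $\mathcal{A}'_n$. I construct an inference algorithm $\mathcal{A}'$ for $\{F_P\}$ exactly as in the proof of Theorem~\ref{t:kearns}: on input $P \in \mathcal{P}_n$ and with oracle access to $\mq(F_P(k,\cdot))$, run $\mathcal{A}'_n$, and whenever it queries $\sample(D_{(P,k)})$ draw $x \in \{0,1\}^n$ uniformly at random, query $\mq(F_P(k,\cdot))$ on $x$, and hand back $x\,\|\,\bin_n(F_P(k,x))\,\|\,\bin_m(P)$. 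The only new point is that $\mathcal{A}'$ must also produce the suffix $\bin_m(P)$; this is immediate since it is given $P$ as input (equivalently, $P$ is read off the last $m$ bits of any genuine sample). This is a perfect simulation of $\sample(D_{(P,k)})$, so with probability $1/2$ the learner returns a classical generator $\gen_D$ for some distribution $D$ over $\{0,1\}^{2n+m}$ with $d_{\mathrm{KL}}(D_{(P,k)},D) \le \log n$.

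The second step is to record the (immediate) analogue of Lemma~\ref{l:proof_inter_lemma} for $D_{(P,k)}$: since $D_{(P,k)}$ places mass exactly $1/2^n$ on each of the $2^n$ strings $x\,\|\,\bin_n(F_P(k,x))\,\|\,\bin_m(P)$ and zero on all others, the KL computation in the proof of Lemma~\ref{l:proof_inter_lemma} goes through verbatim, with the trailing $m$-bit register merely riding along inside the ambient set $\{0,1\}^{2n+m}$ and never entering the bound. Hence, whenever $\gen_D$ is a $(d_{\mathrm{KL}},\epsilon)$ generator for $D_{(P,k)}$ with $\epsilon < n-2$, at least $2^n/n$ of these strings $y$ satisfy $D(y) \ge 2^{-(2+\epsilon+n)}$, i.e. $D(y) \ge 1/(4n2^n)$ for $\epsilon = \log n$.

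With this in hand, step~2 of the reduction and its probability analysis transfer line by line: $\mathcal{A}'$ draws $x\,\|\,y\,\|\,z$ from $\gen_D$; if the prefix $x$ was not among the simulated query strings it submits $x$ as the exam string and outputs $y$ when $y$ is one of the two returned values, and a uniformly random guess otherwise; if $x$ was already used it picks any fresh exam string and guesses at random. Any sample whose suffix $z$ is not $\bin_m(P)$ simply falls into one of the failure cases and can only help. The identical bookkeeping -- bounding the probability of the good case from below using the $2^n/n$ heavy strings together with $|X| = \mathrm{poly}(n)$, and bounding the ``bad collision'' case by $2^{-n}$ -- then gives that $\mathcal{A}'$ passes the exam with probability at least $1/2 + 1/(11n^2)$ for all sufficiently large $n$, so $\mathcal{A}'$ $Q$-infers $\{F_P\}$ with $Q(n) = 11n^2$, contradicting the classical-security of $\{F_P\}$ via Lemma~\ref{l:gold_predictability}.

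I do not expect a genuine obstacle here; the statement is really a corollary of the preceding argument. The only point requiring care is checking that the deterministic padding neither alters the support structure of $D_{(P,k)}$ in a way that would affect the KL-counting bound of Lemma~\ref{l:proof_inter_lemma}, nor obstructs the reduction from generating it -- both hold because $\bin_m(P)$ is a fixed string determined by $P$, which the inference algorithm knows. Equivalently, one may observe that the proof of Theorem~\ref{t:kearns} only ever invokes its hypothetical learner through a procedure that has $P$ in hand, so substituting a learner for $\mathcal{C}_n$ (using its knowledge of $P$ to append, or from samples to strip, the suffix $\bin_m(P)$) leaves the argument intact.
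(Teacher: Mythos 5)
Your proposal is correct and follows exactly the paper's intended argument: the paper also observes that the proof is ``essentially the same as the proof of Theorem~\ref{t:kearns}'', with the only modification being that the inference algorithm, which is handed $P$ as input, appends the deterministic suffix $\bin_m(P)$ when simulating $\sample(D_{(P,k)})$ from $\mq(F_P(k,\cdot))$. Your additional bookkeeping -- checking that Lemma~\ref{l:proof_inter_lemma} and the case analysis carry over unchanged because the padding is a fixed string known to the reduction -- is a sound and somewhat more careful spelling-out of the same reasoning.
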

\noindent To make clear the difference between the constructions of Theorem \ref{t:kearns} and Corollary \ref{c:kearns} we summarize informally as follows:
\begin{align}
\text{Theorem \ref{t:kearns}}&\rightarrow \kgen_{(P,k)}(x) = x||\bin_n(F_P(k,x)) , \nonumber \\
    \text{Corollary \ref{c:kearns}}&\rightarrow \gen_{(P,k)}(x) = x||\bin_n(F_P(k,x))||\bin_m(P) .\nonumber
\end{align}
Before describing the motivation behind such a modification, we note that we have stated this construction as a corollary due to the fact that the proof is essentially the same as the proof of Theorem \ref{t:kearns}. The only difference is that  when the polynomial inference algorithm $\mathcal{A}$ is given input $P \in \mathcal{P}_n$ and access to $\mq(F_P(k,\cdot))$, in order to simulate the learning algorithm $\tilde{\mathcal{A}}_n$, it should respond to a $\sample$ query by drawing some $x\in\{0,1\}^n$ uniformly at random, and then returning
\begin{equation}
       x||\bin[\query[\mq(F_P(k,\cdot)](x)]||\bin_m(P) =  x||\bin_n(F_{P}(k,x))||\bin_m(P) = \gen_{(P,k)}(x).
\end{equation}
To see why such a modified construction will be helpful for constructing an efficient quantum learner, note that both $\kgen_{(P,k)}$ and $\gen_{(P,k)}$ are fully specified by the parameterization $P$, and some key $k\in\mathcal{K}_P$. As such, given $\sample$ access to either generator, it would be sufficient for a generator learning algorithm to learn the tuple $(P,k)$. If one uses the distribution class $\tilde{D}_{(P,k)}$ of Theorem \ref{t:kearns}, generated by $\kgen_{(P,k)}$, then a learner really has to learn both the parameterization $P$, and the key $k$. However, if one uses the distribution class $D_{(P,k)}$ of Corollary \ref{c:kearns}, generated by $\gen_{(P,k)}$, then with every sample from the distribution the learner is \textit{given} a binary encoding of $P$, and as such only needs to learn the key $k$. 

\subsection{Quantum Learnability and Classical Hardness for a DDH Based Distribution Class}\label{ss:DDH_result}

Theorem \ref{t:kearns} and Corollary \ref{c:kearns} provide the first ingredient necessary to answer Question \ref{q:sample_vs_sample} in the affirmative; namely a technique for constructing, from almost any classical-secure pseudorandom function collection, a distribution class which can be efficiently classically generated, and which is not efficiently classically PAC $\gen$-learnable. Given this result, on first impressions one might think that all that is required for such distribution classes to be efficiently quantum PAC $\gen$-learnable is that the underlying classical-secure PRF collection is not standard-secure -- i.e. is not pseudorandom from the perspective of quantum adversaries with classical membership query access. In particular, it seems plausible that if the underlying PRF is classical-secure but not standard-secure, then one could exploit the quantum PRF adversary $\mathcal{A}'$ for the construction of a quantum generator learner $\mathcal{A}$. Unfortunately, however, it is in fact not so straightforward, for the following reasons:
Firstly, we note that 

\begin{align}
    \query[\sample(D_{(P,k)})] &= \gen_{(P,k)}(x) \text{ with } x\leftarrow U_n \nonumber \\
    &= x||\bin_n(F_P(k,x))||\bin_m(P) \text{ with } x\leftarrow U_n.
\end{align}
Additionally, we have that
\begin{equation}
    \query[\pex(F_P(k,\cdot),U_n)] = (x,F_P(k,x)) \text{ with } x\leftarrow U_n,
\end{equation}
and so by comparison we see that if a learning algorithm $\mathcal{A}$ is given access to the oracle $\sample(D_{(P,k)})$, then it can efficiently simulate oracle access to $\pex(F_P(k,\cdot),U_n)$, however, it \emph{cannot} simulate oracle access to $\mq(F_P(k,\cdot))$. 
As such even if there exists an efficient quantum adversary $\mathcal{A}'$ for the classical-secure PRF $\{F_P\}$, a learning algorithm with oracle access to $\sample(D_{(P,k)})$ could not simulate $\mathcal{A}'$, which requires access to $\mq(F_P(k,\cdot))$. Additionally, note from Eq. \eqref{eq:prf_property} that any quantum PRF adversary $\mathcal{A}'$ requires as input the corresponding parameterization $P$. As such, even if this quantum adversary could succeed with only $\pex$ access, a learning algorithm with access only to the $\sample(\tilde{D}_{(P,k)})$ oracle could not simulate $\mathcal{A}'$, as it does not have access to $P$. However, a learning algorithm with access to $\sample(D_{(P,k)})$ could simulate $\mathcal{A}'$, as an encoding of the parameterization $P$ is given in the suffix of each sample from the distribution. This in fact provides an additional important motivation for the formulation of Corollary \ref{c:kearns}.  Secondly, in order for a collection of keyed functions to not be standard-secure, all that is required is that a quantum adversary, with classical membership query access can, for all sufficiently large $n$, distinguish instances of the keyed function from randomly drawn functions with non-negligible probability, as formalized by condition~\eqref{eq:prf_property}. As such, even if the classical-secure PRF underlying the distribution class is not standard-secure, and even if it is not standard-secure with respect to $\pex$ access as opposed to $\mq$ access, this does not instantly imply that the quantum PRF adversary could be turned into a quantum distribution learner, which should for all valid $(\delta,\epsilon)$ be able to learn a $(d_{\mathrm{TV}},\epsilon)$-generator, with probability $1-\delta$. However, as per the discussion in the previous section, we know that the generators $\{\gen_{(P,k)}\}$ for the distribution class $\mathcal{C}_n = \{D_{{(P,k)}}\}$ are fully specified by the tuple $(P,k)$, and that as an encoding of $P$ is given ``for free" with each sample from the distribution, being able to learn the key $k\in \mathcal{K}_P$ from $\sample$ access to $\{D_{(P,k)}\}$ is sufficient to learn the \textit{exact} generator $\gen_{(P,k)}$. Given this, we see that if the underlying classical-secure but not standard-secure PRF is such that the quantum adversary $\mathcal{A'}$
\begin{enumerate}[label=(\alph*)]
    \item requires only $\pex$ access, as opposed to $\mq$ access,
    \item can in addition to distinguishing an instance of the keyed function from a random function, also learn the key itself with any specified probability, 
\end{enumerate} 
\begin{figure}
		\centering
		\includegraphics[width=\linewidth]{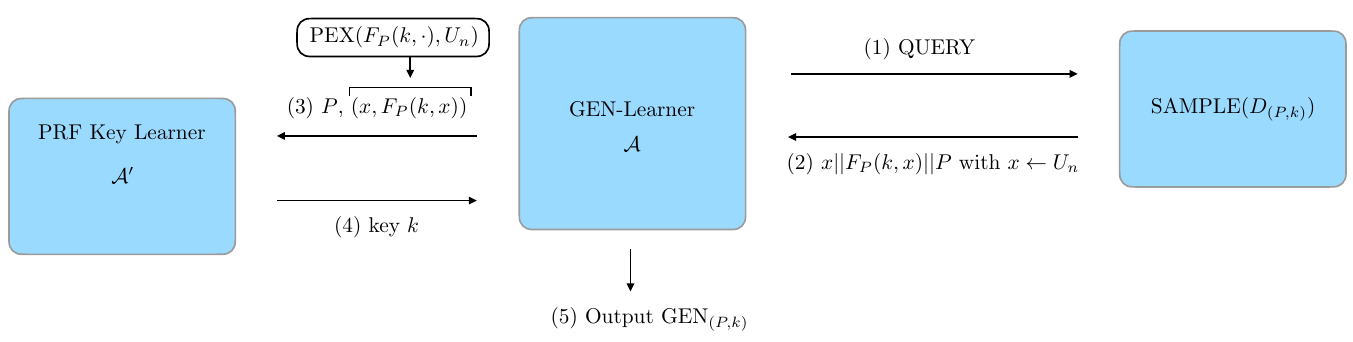}
		\caption{Overview of the strategy for proving quantum learnability. In particular, if the classical-secure PRF $\{F_P\}$ has the property that there exists an efficient quantum \textit{key learner}, an efficient quantum algorithm $\mathcal{A}'$ which on input $P$ and access to $\pex(F_P(k,\cdot),U_n)$ can learn the key $k$, then one can construct a $\gen$-learner $\mathcal{A}$ for $\{D_{(P,k)}\}$ by using oracle access to $\sample(D_{(P,K)})$ to simulate the key learner $\mathcal{A}'$.}
		\label{f:proof_sketch}
\end{figure}
then this PRF adversary could be efficiently simulated by a learner $\mathcal{A}$ with access to $\sample(D_{(P,k)})$, and used to learn $\gen_{(P,k)}$ with any desired probability (as illustrated in Fig. \ref{f:proof_sketch}).  Instantiating the construction of Corollary \ref{c:kearns} with this PRF would then guarantee that $\{D_{(P,k)}\}$ can not be efficiently classically learned, and therefore yield a distribution class which provides an affirmative answer to Question \ref{q:sample_vs_sample}. In light of this, we therefore in the remainder of this section construct a classical-secure but not standard-secure PRF with the properties listed above.

The PRF we construct will be classical-secure under the DDH assumption for the group family of quadratic residues. We use such a construction as this assumption is known not to hold for quantum adversaries \cite{boneh1998decision}. In order to present this construction, we require the following preliminary definitions, following closely the presentation of Ref.\ \cite{boneh1998decision}:

\begin{definition}[Efficient Group Family]\label{d:eff_group_family}
Given some parameterization set $\mathcal{P}$, we say that a set of groups $\mathbb{G} = \{\mathbb{G}_P \,| \, P \in \mathcal{P}\}$ is an efficient group family if for all $P\in \mathcal{P}$ the group $\mathbb{G}_P$ is a finite cyclic group, and there exists:

\begin{enumerate}
    \item An efficient instance description algorithm $\mathcal{ID}$ which for all $P\in \mathcal{P}$, on input $P$ outputs a generator $g$ for the group $\mathbb{G}_P$.
    \item An efficient group product algorithm $\mathcal{G}$ which, for all $P\in\mathcal{P}$ and for all $a,b\in\mathbb{G}_P$, on input $(P,a,b)$ outputs the group product $a*b \in \mathbb{G}_P$.
\end{enumerate}
\end{definition}

\begin{definition}[Quadratic Residues in $\mathbb{Z}^*_N$]\label{d:QR}
Let $\mathbb{Z}^*_N$ denote the multiplicative group of integers modulo $N$. We say that an element $y\in \mathbb{Z}^*_N$ is a \emph{quadratic residue modulo $N$} if there exists an $x\in \mathbb{Z}^*_N$ such that $x^2 \equiv y \mod N$. We denote the set of quadratic residues modulo $N$ by $\mathrm{QR}_N$.
\end{definition}
\noindent We note that $\mathrm{QR}_N$ forms a subgroup of $\mathbb{Z}^*_N$. Additionally, when $N$ is prime, squaring modulo $N$ is a two-to-one map, and therefore exactly half the elements of $\mathbb{Z}^*_N$ are quadratic residues. As a result, if $N$ is a \emph{safe prime} -- i.e. $N = 2q + 1$ with $q$ prime -- then $\mathrm{QR}_N$ is of prime order $q = (N-1)/2$, and therefore cyclic, with all elements (except the identity) as generators. Given this, we make the following observations:
\begin{observation}[Parameterization Set of Safe Primes]\label{o:safe_primes}
The set 
\begin{align}
    \mathcal{P_{\text{safe primes}}} &:= \{p\,|\, p \text{ is a safe prime}\} \\
    &= \bigcup_{n \in \mathbb{N}}\{p\,|\, p \text{ is an } n\text{-bit safe prime} \} \\
    &:=\bigcup_{n\in\mathbb{N}}\mathcal{P}_n
\end{align}
is a valid parameterization set. In particular, we have by definition that $\mathcal{P}_i\cap\mathcal{P}_j = \emptyset$ for all $i\neq j$, there exists an efficient probabilistic instance generation algorithm which, for all $n\in\mathbb{N}$, maps from $1^n$ to the set of $n$-bit safe primes \cite{Katz:2007:IMC:1206501}, and for all $n\in\mathbb{N}$, given some $p\in{\mathcal{P}_n}$ one can recover $n$ efficiently.
\end{observation}

\begin{observation}[Group Family of Quadratic Residues]\label{o:qr}
The set of groups $\mathbb{G} = \{\mathrm{QR}_p \,|\, p \in \mathcal{P}_{\text{safe primes}}\}$ is an efficient group family. More specifically, we have already established that $\mathcal{P}_{\text{safe primes}}$ is a valid parameterization set. Additionally, one can construct an efficient instance description algorithm from (a) the existence of an efficient algorithm for testing membership in $\mathrm{QR}_p$ \cite{blum1984generate}, and (b) the observation that all elements of $\mathrm{QR}_p$ except the identity are generators. Finally, The existence of an efficient group operation algorithm follows from the fact that modular multiplication can be performed efficiently \cite{Katz:2007:IMC:1206501}. We refer to $\{\mathrm{QR}_p\}$ as the group family of quadratic residues.
\end{observation}

\noindent With these preliminaries, we can now state the decisional Diffie-Hellman assumption:

\begin{definition}[Decisional Diffie-Hellman Assumption \cite{boneh1998decision,naor2004number}]\label{d:DDH}
We say that a group family $\mathbb{G} = \{\mathbb{G}_P\,|\,P\in\mathcal{P}\}$, with instance generator $\mathcal{IG}$ and instance description algorithm $\mathcal{ID}$, satisfies the DDH assumption, if for all probabilistic polynomial time algorithms $\mathcal{A}$, all polynomials $p(\cdot)$, and all sufficiently large $n$ it holds that
\begin{equation}
    |\mathrm{Pr}_{\substack{P \leftarrow \mathcal{IG}(1^n)\\
    g\leftarrow \mathcal{ID}(P)\\
    a,b \leftarrow \mathbb{Z}_{|\mathbb{G}_P|}}}[\mathcal{A}(P,g,g^a,g^b,g^{ab}) = 1] - \mathrm{Pr}_{\substack{P \leftarrow \mathcal{IG}(1^n)\\
    g\leftarrow \mathcal{ID}(P)\\
    a,b,c \leftarrow \mathbb{Z}_{|\mathbb{G}_P|}}}[\mathcal{A}(P,g,g^a,g^b,g^{c}) = 1]| < \frac{1}{p(n)}, \label{e:ddh_property}
\end{equation}
where the probability in the first term is taken over the random variable $\mathcal{IG}(1^n)$ (i.e. a random parameterization $P\in\mathcal{P}_n$), the random variable $\mathcal{ID}(P)$ (i.e. a random generator $g$ for the group $\mathbb{G}_P$) and $a,b \in \mathbb{Z}_{|\mathbb{G}_P|}$ selected uniformly at random, and in the second term $c \in \mathbb{Z}_{|\mathbb{G}_P|}$ is also chosen uniformly at random.
\end{definition}
\noindent From this point on we will restrict ourselves to the group family of quadratic residues parameterized by safe primes, which is believed to satisfy the DDH assumption \cite{boneh1998decision}. The first thing to note is that the DDH assumption instantly implies a method for the construction of a collection of pseudorandom generators. To show this, let us start by defining the parameterization set $\mathcal{P}_{(p,g,g^a)}$ as the infinite set of all tuples of the form $(p,g,g^a)$ where $p = 2^q+1$ is some safe prime, $g$ is a generator for $\mathrm{QR}_p$ and $a\in\mathbb{Z}_q$. We denote the subset of all such tuples in which $p$ is an $n$-bit prime as $\mathcal{P}_{n,(p,g,g^a)}$ and we note that $\mathcal{P}_{(p,g,g^a)} = \bigcup_{n\in\mathbb{N}}\mathcal{P}_{n,(p,g,g^a)}$ is a valid parameterization set. In particular, for all $n\in\mathbb{N}$, on input $1^n$ the instance generation algorithm for $\mathcal{P}_{(p,g,g^a)}$ first runs the instance generation algorithm for $\mathcal{P}_{\textit{safe primes}}$ to obtain an $n$-bit safe prime $p$, then runs the instance description algorithm for the group family $\{\mathrm{QR}_p\}$ to obtain a generator $g$ for $\mathrm{QR}_p$, before finally selecting an element $a$ of $\mathbb{Z}_q$ uniformly at random. We then define the function $\textsf{modexp}_{p,g}:\mathbb{N}\rightarrow \mathrm{QR}_p$ via
\begin{align}
    \textsf{modexp}_{p,g}(b) := g^b \md(p),
\end{align}
which allows us, for all tuples $(p,g,g^a) \in \mathcal{P}_{(p,g,g^a)}$, to define the function
\begin{equation}
    G_{(p,g,g^a)}:\mathbb{Z}_q \rightarrow \mathrm{QR}_p\times \mathrm{QR}_p
\end{equation}
via
\begin{align}
    G_{(p,g,g^a)}(b) &= \textsf{modexp}_{p,g}(b)||\textsf{modexp}_{p,g^a}(b) \nonumber \\
    &= g^a \md(p)|| g^{ab} \md(p) \nonumber\\
    & :=G^0_{(p,g,g^a)}(b)||G^1_{(p,g,g^a)}(b). \label{eq:length_doubling}
\end{align}

\noindent Given this construction, we now note, following Ref.\ \cite{boneh1998decision}, that $\{G_{(p,g,g^a)}\,|\,(p,g,g^a) \in \mathcal{P}_{(p,g,g^a)}\}$ is a collection of pseudorandom generators, under the assumption that $\{\mathrm{QR}_p\}$ satisfies the DDH assumption.

\begin{observation}[$\{G_{(p,g,g^a)}\}$ is a Collection of Pseudorandom Generators \cite{boneh1998decision}]\label{o:is_prg} Note that when using the group family of quadratic residues, we can rewrite Eq.\ \eqref{e:ddh_property} as
\begin{equation}
    |\mathrm{Pr}_{\substack{p,g,g^a \leftarrow \mathcal{IG}(1^n)\\
    b \leftarrow \mathbb{Z}_q}}[\mathcal{A}(p,g,g^a,G_{(p,g,g^a)}(b)) = 1] - \mathrm{Pr}_{\substack{p,g,g^a \leftarrow \mathcal{IG}(1^n)\\
    b,c \leftarrow \mathbb{Z}_{q}}}[\mathcal{A}(P,g,g^a,g^b,g^{c}) = 1]| < \frac{1}{p(n)}.
\end{equation}
By comparison with Definition \ref{d:PRG}, we therefore see that $\{G_{(p,g,g^a)}\}$ is an indexed collection of pseudorandom generators, under the assumption that $\{\mathrm{QR}_p\}$ satisfies the DDH assumption.
\end{observation}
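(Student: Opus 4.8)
The plan is to check that, once instantiated with $\{G_{(p,g,g^a)}\}$, the two probability experiments in Definition~\ref{d:PRG} coincide with the two experiments in the DDH assumption (Definition~\ref{d:DDH}) for the group family $\{\mathrm{QR}_p\}$; any efficient distinguisher for the first pair then immediately yields an efficient distinguisher for the second, contradicting the DDH assumption. As a preliminary I would note that $\{G_{(p,g,g^a)}\}$ is an indexed collection of efficiently computable functions in the sense of Definition~\ref{d:indexed_collection_f}: its instance generator runs the instance generator of $\{\mathrm{QR}_p\}$ to obtain a safe prime $p = 2q+1$ and a generator $g$ of $\mathrm{QR}_p$, then samples $a \leftarrow U(\mathbb{Z}_q)$ and outputs $(p,g,g^a)$ together with descriptions of $\mathcal{D}_P = \mathbb{Z}_q$ and $\mathcal{D}'_P = \mathrm{QR}_p \times \mathrm{QR}_p$, while evaluating $G_{(p,g,g^a)}$ is nothing but two modular exponentiations.

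First I would unfold the ``real'' side. By definition,
\[
\big(P,\,G_{(p,g,g^a)}(b)\big) \;=\; \big((p,g,g^a),\; g^b \md(p)\,||\,g^{ab}\md(p)\big),
\]
which, read as a single tuple, is precisely $(p,g,g^a,g^b,g^{ab})$ -- the input handed to the adversary in the left-hand term of Eq.~\eqref{e:ddh_property} -- and the distribution over $P = (p,g,g^a)$ induced by the instance generator above is exactly the distribution over $(p,g,g^a)$ appearing in the DDH experiment. Next I would treat the ``ideal'' side: sampling $y \leftarrow U(\mathcal{D}'_P) = U(\mathrm{QR}_p \times \mathrm{QR}_p)$ is the same as sampling $b',c \leftarrow U(\mathbb{Z}_q)$ independently and setting $y = (g^{b'},g^c)$, since $p$ being a safe prime forces $\mathrm{QR}_p$ to be cyclic of prime order $q$ with every non-identity element a generator, so $b' \mapsto g^{b'}$ is a bijection $\mathbb{Z}_q \to \mathrm{QR}_p$ that pushes the uniform distribution on $\mathbb{Z}_q$ forward to the uniform distribution on $\mathrm{QR}_p$. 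Hence the ideal experiment for $\{G_{(p,g,g^a)}\}$ is the right-hand experiment of Eq.~\eqref{e:ddh_property}, with the extra component $g^{b'}$ of the DDH tuple simply ignored by the distinguisher. Comparing with Definition~\ref{d:PRG}, a classical probabilistic polynomial time $\mathcal{A}$ with distinguishing advantage at least $1/p(n)$ between $(P,G_{(p,g,g^a)}(b))$ and $(P,y)$ therefore yields, up to a trivial reparsing of its input tuple, a classical probabilistic polynomial time adversary with advantage $1/p(n)$ in the DDH game for $\{\mathrm{QR}_p\}$, contradicting the DDH assumption; so $\{G_{(p,g,g^a)}\}$ is a collection of pseudorandom generators.

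I do not expect a genuine obstacle here -- the whole argument is bookkeeping to confirm that two pairs of distributions match. The one step meriting care is the identification of $U(\mathrm{QR}_p \times \mathrm{QR}_p)$ with the law of $(g^{b'},g^c)$ for independent uniform $b',c \in \mathbb{Z}_q$, which is exactly where the restriction to safe primes enters (so that $|\mathrm{QR}_p| = q$ is prime and all non-identity elements generate). I would also remark in passing that the degenerate parameterizations with $a = 0$, for which $G_{(p,g,1)}$ is trivially not pseudorandom, are output by the instance generator only with probability $1/q$; since this same event has the same probability on the DDH side of the comparison, it is already absorbed into the DDH assumption and so does not affect the reduction.
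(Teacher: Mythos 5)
Your proposal is correct and takes essentially the same route as the paper, which likewise treats this as a straight identification of the two experiments under the relabelling $y = (g^b, g^c)$ once the instance generator's sampling of $a$ is unfolded; you are just more explicit about the domain/codomain bookkeeping, the role of the safe-prime restriction (so that $b \mapsto g^b$ is a bijection $\mathbb{Z}_q \to \mathrm{QR}_p$), and the negligible $a=0$ case. One small slip worth noting: after reparsing, both tuples have exactly the five entries $(p,g,g^a,g^b,g^c)$, so there is no ``extra component $g^{b'}$ ignored by the distinguisher'' -- but this side remark does not affect the reduction.
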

\noindent We would now like to use the PRG $\{G_{(p,g,g^a)}\}$ to define a suitable classical-secure PRF collection. As $\{G_{(p,g,g^a)}\}$ is a collection of effectively length doubling pseudorandom generators, it has been observed multiple times \cite{naor2004number,bogdanov2017pseudorandom,boneh1998decision} that one could in principle construct a classical-secure PRF collection from $\{G_{(p,g,g^a)}\}$ via the Goldreich-Goldwasser-Micali (GGM) construction~\cite{Goldreich:1986:CRF:6490.6503,goldreich2007foundations}. However, as noted and discussed in Ref.\ \cite{farashahi2007efficient}, one needs to take some care in order to construct such a PRF collection in a rigorous way. More specifically, the GGM construction requires that the functions $G^0_{(p,g,g^a)}$ and $G^1_{(p,g,g^a)}$ (defined via Eq. \eqref{eq:length_doubling}) be iterated in an order defined via the input to the PRF, and for such an iteration to be well defined, and for the GGM proof to hold with only minor modifications, it is essential that there exists an efficient bijection from the codomain of $G^i_{(p,g,g^a)}$ to its domain -- i.e. a function which efficiently enumerates the elements of the group. However, for the group family of quadratic residues, such an efficient bijection exists, and as such it is indeed possible to construct a PRF collection, via the GGM construction, starting from a slightly modified DDH based PRG collection. To make this more precise, given some safe prime $p = 2q+1$, we define the function $f_p:\mathrm{QR}_p\rightarrow \mathbb{Z}_q$ via 
\begin{equation}
    f_p(x) = \begin{cases}
x &\text{ if } x \leq q,\\
p-x &\text{ if } x > q.
\end{cases}
\end{equation}
As noted in Ref.\ \cite{farashahi2007efficient}, this function is fact a bijection, whose inverse $f_p^{-1}:\mathbb{Z}_q \rightarrow \mathrm{QR}_p$ is given by
\begin{equation}
       f_p^{-1}(y) = \begin{cases}
y &\text{ if } y \in \mathrm{QR}_p,\\
p-y &\text{ if } y \notin \mathrm{QR}_p.
\end{cases} 
\end{equation}
While it is clear that $f_p$ can be efficiently computed, efficiency of $f^{-1}_p$ is less obvious, and follows from the fact that group membership in $\mathrm{QR}_p$ can be efficiently tested \cite{blum1984generate}. With this in hand, we now define the indexed collection of functions $\{\tilde{G}_{(p,g,g^a)}\}$, where for all valid parameterizations
\begin{equation}
    \tilde{G}_{(p,g,g^a)}:\mathbb{Z}_q:\rightarrow \mathbb{Z}_q\times\mathbb{Z}_q
\end{equation}
is defined via
\begin{align}
    \tilde{G}_{(p,g,g^a)} &= f_p(G^0_{(p,g,g^a)}(b))||f_p(G^1_{(p,g,g^a)}(b)) \nonumber \\
    &:=\tilde{G}^0_{(p,g,g^a)}(b)||\tilde{G}^1_{(p,g,g^a)}(b).
\end{align}
As $f_p$ is a bijection, we again have, analogously to Observation \ref{o:is_prg}, that $\{\tilde{G}_{(p,g,g^a)}\}$ is an indexed collection of PRG's, under the DDH assumption for $\{\mathrm{QR}_p\}$. Given this, we can finally construct the DDH based PRF which will fulfill all our requirements. Specifically, we consider an indexed collection of keyed functions $\{F_{(p,g,g^a)}\,|\,(p,g,g^a)\in\mathcal{P}_{(p,g,g^a)}\}$, where
\begin{equation}
    F_{(p,g,g^a)}:\mathbb{Z}_q\times\{0,1\}^n \rightarrow \mathbb{Z}_q
\end{equation}
is defined algorithmically, via the GGM construction \cite{Goldreich:1986:CRF:6490.6503,goldreich2007foundations}, as follows:
\begin{algorithm}[H]
  \caption{Algorithmic implementation of $F_{(p,g,g^a)}(b,x)$}\label{alg:new_modified_DDH_PRF}
  \begin{algorithmic}[1]
    \State Given parameterization $p,g,g^a$, as well as key $b\in \mathbb{Z}_q$ and input $x\in \{0,1\}^n$
    \State $b_0 \gets b$
    
    \ForAll{$1 \leq j \leq n$}
        \If{$x_j = 0 $}
        \State $b_j \gets \tilde{G}^0_{(p,g,g^a)}(b_{j-1}) =  f_p(\textsf{modexp}_{p,g}({b_{j-1}})) = f(g^{b_{j-1}} \md(p))$
        \ElsIf{$x_j = 1 $}
        \State $b_{j} \gets \tilde{G}^1_{(p,g,g^a)}(b_{j-1}) =  f_p(\textsf{modexp}_{p,g^a}({b_{j-1}})) = f(g^{ab_{j-1}} \md(p))$
        \EndIf
    \EndFor
    \Statex
    \State \textbf{Output: $b_n \in \mathbb{Z}_q$} \Comment{Note that $b_n = \tilde{G}^{x_n}_{(p,g,g^a)}(\ldots(\tilde{G}^{x_1}_{(p,g,g^a)}(b)))$}
  \end{algorithmic}
\end{algorithm}
\noindent The above algorithm is also illustrated in Fig. \ref{f:GGM}, which serves to illustrate that for all tuples $(b,x)$ the desired output $F_{(p,g,g^a)}(b,x)$ can be calculated by moving through a binary tree, with the key $b\in \mathbb{Z}_q$ at the root, and where at each level either $\tilde{G}^{0}_{(p,g,g^a)}$ or $\tilde{G}^{1}_{(p,g,g^a)}$ is applied, with the path determined by the input string $x\in\{0,1\}^n$. We now make the following claims:

\begin{claim}\label{c:is_PRF}
$\{F_{(p,g,g^a)}\}$ is a classical-secure pseudorandom function collection.
\end{claim}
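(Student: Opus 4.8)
The plan is to observe that $\{F_{(p,g,g^a)}\}$ is precisely the Goldreich--Goldwasser--Micali (GGM) tree construction instantiated with the length-doubling pseudorandom generator collection $\{\tilde{G}_{(p,g,g^a)}\}$, and therefore to deduce the claim from a suitably generalized version of the GGM theorem, using as input the fact --- already established above via Observation \ref{o:is_prg} together with the bijectivity of $f_p$ --- that $\{\tilde{G}_{(p,g,g^a)}\}$ is an indexed collection of pseudorandom generators under the DDH assumption for $\{\mathrm{QR}_p\}$. First I would spell out why the GGM proof applies essentially verbatim in our more general setting, checking two points. The instance generator: on input $1^n$ the instance generator for $\{F_{(p,g,g^a)}\}$ runs the instance generator for $\{\tilde{G}_{(p,g,g^a)}\}$, outputs the parameterization $(p,g,g^a)$ together with descriptions of the key space $\mathbb{Z}_q$, domain $\{0,1\}^n$ and codomain $\mathbb{Z}_q$, and samples a uniform key $b \leftarrow \mathbb{Z}_q$; this matches Definition \ref{d:s_ss_qs_PRF}, and the reduction below receives $(p,g,g^a)$ as input exactly as the PRG adversary of Definition \ref{d:PRG} does. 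Second --- the place where the modification from $G$ to $\tilde{G}$ is essential --- the tree iteration in Algorithm \ref{alg:new_modified_DDH_PRF} is well defined precisely because $\tilde{G}^0_{(p,g,g^a)}$ and $\tilde{G}^1_{(p,g,g^a)}$ map $\mathbb{Z}_q$ into $\mathbb{Z}_q$, so the output of one application is a legal seed for the next; with the original $G^i_{(p,g,g^a)}$, whose codomain is $\mathrm{QR}_p$ rather than $\mathbb{Z}_q$, this would fail.

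With these observations in place I would run the standard hybrid argument. Suppose for contradiction that some probabilistic polynomial-time $\mathcal{A}$, making $t = \mathrm{poly}(n)$ membership queries, distinguishes $F_{(p,g,g^a)}(b,\cdot)$ from a uniformly random $R:\{0,1\}^n \to \mathbb{Z}_q$ with advantage at least $1/p(n)$ for infinitely many $n$. Define hybrids $H_0,\ldots,H_n$, where in $H_i$ the labels of the $2^i$ nodes at depth $i$ are drawn independently and uniformly from $\mathbb{Z}_q$ --- implemented by lazy sampling, since $\mathcal{A}$ touches at most $t$ of them --- and the remaining $n-i$ levels are filled in by applying $\tilde{G}^0$ and $\tilde{G}^1$ as in Algorithm \ref{alg:new_modified_DDH_PRF}. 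Then $H_0$ is distributed exactly as $F_{(p,g,g^a)}(b,\cdot)$ with $b \leftarrow \mathbb{Z}_q$ and $H_n$ exactly as $R$, so an averaging argument yields an index $i$ with $|\mathrm{Pr}[\mathcal{A}^{H_i}=1] - \mathrm{Pr}[\mathcal{A}^{H_{i+1}}=1]| \geq 1/(n\,p(n))$.

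Finally I would convert this into a distinguisher for $\{\tilde{G}_{(p,g,g^a)}\}$. A reduction $\mathcal{B}$, given $(p,g,g^a)$ and a batch of strings that are either all of the form $\tilde{G}_{(p,g,g^a)}(s_j)$ for independent uniform seeds $s_j \in \mathbb{Z}_q$ or all independent uniform elements of $\mathbb{Z}_q \times \mathbb{Z}_q$, simulates $\mathcal{A}$'s oracle as in $H_i$ except that whenever a new depth-$i$ node is visited it assigns that node's two depth-$(i+1)$ children the two halves of a fresh challenge string, then completes the path honestly down to the leaf; if the challenges are PRG outputs this reproduces $H_i$, and if uniform it reproduces $H_{i+1}$, so $\mathcal{B}$ distinguishes the batch with advantage at least $1/(n\,p(n))$. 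Reducing the single-sample PRG game of Definition \ref{d:PRG} to this batched version costs one further hybrid over the at most $t$ challenge strings, giving a single-sample distinguisher with advantage at least $1/(t\,n\,p(n))$, still non-negligible and infinitely often, contradicting Observation \ref{o:is_prg}. The main obstacle, and the only genuinely delicate point, is the bookkeeping of well-definedness --- ensuring the iteration closes up and the lazy-sampled hybrids make sense in the generalized-parameterization setting, i.e. that the domain/codomain matching supplied by the bijection $f_p$ is exactly what the GGM tree needs --- rather than any new idea beyond the classical GGM argument.
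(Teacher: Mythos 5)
Your proposal is correct and is precisely the approach the paper takes: the paper explicitly states that Claim~\ref{c:is_PRF} follows from the original GGM proof applied to the length-doubling PRG collection $\{\tilde{G}_{(p,g,g^a)}\}$, with only minor bookkeeping changes to the hybrid distributions to accommodate the more general parameterization set, and therefore omits the details. Your write-up simply spells out that standard GGM hybrid argument (lazy-sampled depth-$i$ hybrids, averaging to find a good level, batched-to-single-sample reduction) together with the two well-definedness checks — the instance-generator format and the role of the bijection $f_p$ in closing the iteration — which is exactly the substance the paper is alluding to.
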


\begin{claim}\label{c:exists_alg}
For all $n \in \mathbb{N}$, there exists an exact efficient quantum key-learning algorithm which, for all valid parameterizations $(p,g,g^a) \in \mathcal{P}_{n,(p,g,g^a)}$, all $x \in \{0,1\}^n$, and all $b\in \mathbb{Z}_q$, on input $(p,g,g^a),x$ and $F_{(p,g,g^a)}(b,x)$ returns $b$ with probability 1.
\end{claim}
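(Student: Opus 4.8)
The plan is to show that the GGM evaluation implemented by Algorithm~\ref{alg:new_modified_DDH_PRF} can be inverted layer by layer, using the \emph{exact} quantum discrete logarithm algorithm of Ref.~\cite{mosca2004exact} to undo each modular exponentiation. The crucial observation is that the key-learner is handed $x$ as part of its input, so it knows the entire path through the GGM tree, and hence knows at every level $1 \le j \le n$ whether the map applied there was $\tilde G^0_{(p,g,g^a)}$ (when $x_j = 0$) or $\tilde G^1_{(p,g,g^a)}$ (when $x_j = 1$). Its only task is therefore to invert these $n$ maps one at a time, starting from the leaf value $b_n := F_{(p,g,g^a)}(b,x)$ and working back up to the root $b_0 = b$.

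In detail, I would first have the algorithm recover the exponent $a \in \mathbb{Z}_q$ by computing $a = \log_g(g^a)$ with one call to the exact discrete logarithm algorithm --- legitimate since $g$ generates $\mathrm{QR}_p$, a cyclic group of known prime order $q = (p-1)/2$, so the logarithm is a well-defined element of $\mathbb{Z}_q$ --- and then compute $a^{-1} \bmod q$ classically (assuming, as we may, that the parameterization is non-degenerate, i.e.\ $g^a \ne 1$). Then, for $j = n, n-1, \ldots, 1$, the algorithm: (i) applies the efficiently computable bijection $f_p^{-1}$ to $b_j$, obtaining $z_j \in \mathrm{QR}_p$, which by the definition of $\tilde G^{x_j}_{(p,g,g^a)}$ equals $g^{b_{j-1}} \bmod p$ if $x_j = 0$ and $g^{a b_{j-1}} \bmod p$ if $x_j = 1$; (ii) computes $w_j = \log_g z_j \in \mathbb{Z}_q$ with the exact discrete logarithm algorithm; and (iii) sets $b_{j-1} := w_j$ if $x_j = 0$ and $b_{j-1} := a^{-1} w_j \bmod q$ if $x_j = 1$. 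After $n$ iterations it outputs $b_0$.

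Correctness is then immediate by induction on $j$: each iteration exactly inverts the corresponding line of Algorithm~\ref{alg:new_modified_DDH_PRF}, since $f_p$ is a bijection with efficiently computable inverse (established earlier via the efficient test for membership in $\mathrm{QR}_p$~\cite{blum1984generate}), and since discrete logarithm with respect to $g$ is the inverse of $b \mapsto g^b \bmod p$ on $\mathrm{QR}_p$, with the extra factor $a$ removed by multiplication by $a^{-1}$. Because $f_p^{-1}$, modular multiplication and modular inversion are deterministic classical polynomial-time operations, and the discrete logarithm algorithm of Ref.~\cite{mosca2004exact} returns the correct value with probability exactly $1$, the composed procedure returns $b$ with probability $1$. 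Efficiency is clear: it makes $n+1$ calls to the discrete logarithm algorithm, each on an $n$-bit instance and hence running in $\mathrm{poly}(n)$ time, interleaved with $\mathrm{poly}(n)$ further classical arithmetic.

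The key ingredient --- and really the only content of the claim --- is the insistence on an \emph{exact} discrete logarithm subroutine rather than Shor's original bounded-error version; this is exactly what upgrades ``recovers $b$ with high probability'' to ``recovers $b$ with probability $1$'', and is the reason the resulting quantum learner is not expected to be NISQ-implementable. A secondary, essentially bookkeeping, point to dispatch in the full proof is the degenerate case $a \equiv 0 \bmod q$, where $g^a = 1$ and $F_{(p,g,g^a)}(b,x)$ ceases to depend on $b$ as soon as $x$ has a $1$-bit, making key recovery information-theoretically impossible; such parameterizations carry no information about the key and can simply be excluded from $\mathcal{P}_{n,(p,g,g^a)}$ without affecting any other part of the construction.
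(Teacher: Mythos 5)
Your proof is correct and takes essentially the same approach as the paper: reverse the GGM tree layer by layer, applying $f_p^{-1}$ and then the exact quantum discrete-logarithm algorithm of Ref.~\cite{mosca2004exact} at each level. The paper's Algorithm~2 differs only in a small implementation detail --- instead of recovering $a$ up front and multiplying by $a^{-1}\bmod q$ at levels where $x_j = 1$, it directly calls $\mathrm{dlog}_{p,g^a}$ at those levels, sidestepping the explicit inversion; your remark about the degenerate $a \equiv 0 \bmod q$ case is a nice catch that the paper does not address explicitly, and it applies equally to the paper's variant since $\mathrm{dlog}_{p,g^a}$ is ill-defined when $g^a = 1$.
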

\noindent Before proceeding to prove these claims, we make the following observations: Firstly, note that given Claim \ref{c:is_PRF}, the indexed collection of keyed functions $\{F_{(p,g,g^a)}\}$ satisfies all the requirements of both Theorem \ref{t:kearns} and Corollary \ref{c:kearns}, and as such, for all sufficiently large $n$, the distribution class
\begin{equation}
    \mathcal{C}_n = \{D_{(p,g,g^a),b}\,|\, (p,g,g^a)\in \mathcal{P}_{n,(p,g,g^a)}, b\in \mathbb{Z}_q\},
\end{equation}
defined as per Corollary \ref{c:kearns}, is not efficiently classically PAC learnable with respect to $\sample$ and the TV-distance. Additionally, let us assume that a quantum $\gen$-learner $\mathcal{A}$ is given access to \linebreak $\sample(D_{(p,g,g^a),b})$, for some $(p,g,g^a) \in \mathcal{P}_{n,(p,g,g^a)}$. Note that
\begin{equation}
    \query[\sample(D_{(p,g,g^a),b})] = x||\bin(F_{(p,g,g^a)}(b,x))||\bin(p,g,g^a) \text{ with } x \leftarrow U_n
\end{equation}
and therefore, as illustrated in Fig. \ref{f:proof_sketch}, from any single such sample, the quantum $\gen$-learner $\mathcal{A}$ can run the quantum key-learning algorithm $\mathcal{A}'$ from Claim \ref{c:exists_alg}, on input $(p,g,g^a),x$ and $F_{(p,g,g^a)}(b,x)$, and obtain the key $b$ with probability 1, which coupled with the parameterization $(p,g,g^a)$ provides a complete description of the exact generator $\gen_{((p,g,g^a),b)}$ for $D_{(p,g,g^a),b}$. As a result, it follows from Claim \ref{c:exists_alg} that the distribution class $\{D_{(p,g,g^a),b}\}$ is efficiently quantum PAC $\gen$-learnable with $\sample$ access, and with respect to both the KL-divergence and the TV-distance (due to the fact that the learner in fact always learns an \textit{exact} generator). Given this discussion, we see that the following theorem, providing an affirmative answer to Question~\ref{q:sample_vs_sample}, follows from Claims \ref{c:is_PRF} and \ref{c:exists_alg}:

\begin{figure}
		\centering
		\includegraphics[width=\linewidth]{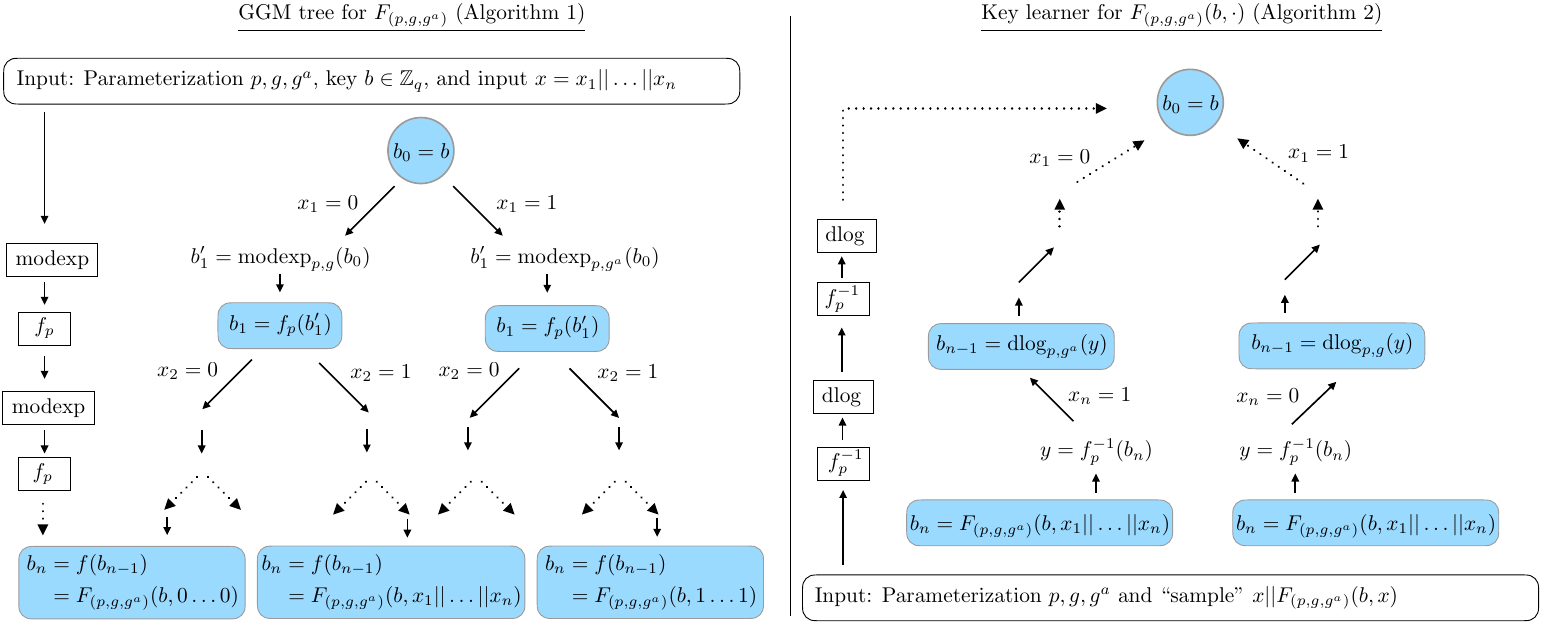}
		\caption{(Left Panel) An illustation of the GGM tree construction for $\{F_{(p,g,g^a)}\}$. For all input tuples $(b,x)$ the desired output $F_{(p,g,g^a)}(b,x)$ can be calculated by moving through a binary tree, with the key $b\in \mathbb{Z}_q$ at the root, and where at each level either $\tilde{G}^{0}_{(p,g,g^a)} = f_p\circ\textsf{modexp}_{p,g}$ or $\tilde{G}^{1}_{(p,g,g^a)} = f_p\circ\textsf{modexp}_{p,g^a}$ is applied, with the path determined by the input string $x\in\{0,1\}^n$. (Right Panel) An illustration of the ``tree-reversal" key-learning algorithm, which given a leaf of the tree $F_{(p,g,g^a)}(b,x)$, along with a description of the path $x$ from the root to this leaf, exploits the knowledge of this path, along with the ability to invert $\tilde{G}^{i}_{(p,g,g^a)}$, to reverse the tree and obtain the value of the root node.}
		\label{f:GGM}
\end{figure}

\begin{theorem}[Quantum $\gen$-Learning Advantage for $\{D_{(p,g,g^a),b}\}$]\label{t:main_result}
For all sufficiently large $n$ the distribution class
\begin{equation}
    \mathcal{C}_n = \{D_{(p,g,g^a),b}\,|\, (p,g,g^a)\in \mathcal{P}_{n,(p,g,g^a)}, b\in \mathbb{Z}_q\}
\end{equation}
is not efficiently classically PAC $\gen$-learnable with respect to the $\sample$ oracle and the TV-distance, but is efficiently quantum PAC $\gen$-learnable with respect to the $\sample$ oracle and the TV-distance.
\end{theorem}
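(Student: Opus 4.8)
The plan is to read Theorem~\ref{t:main_result} as the conjunction of two independent statements---a classical-hardness half and a quantum-learnability half---and to obtain each by assembling ingredients that are already in place: Corollary~\ref{c:kearns} together with Claim~\ref{c:is_PRF} for the first, and Claim~\ref{c:exists_alg} for the second. Concretely, I would first argue that the keyed function collection $\{F_{(p,g,g^a)}\}$ built via Algorithm~\ref{alg:new_modified_DDH_PRF} fits the template of Corollary~\ref{c:kearns}, then instantiate that corollary, and finally exhibit an explicit efficient quantum $\sample$-learner using the key-learning algorithm of Claim~\ref{c:exists_alg}.

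For the classical-hardness half, I would check that $\{F_{(p,g,g^a)}\}$ satisfies every hypothesis of Corollary~\ref{c:kearns}. By Claim~\ref{c:is_PRF} it is a classical-secure pseudorandom function collection; by construction its key space is $\mathbb{Z}_q$ and its input domain is exactly $\{0,1\}^n$; its codomain is $\mathbb{Z}_q$, which satisfies $\mathbb{Z}_q \subseteq \{0,1\}^n$ because for an $n$-bit safe prime $p = 2q+1$ one has $q < 2^{n-1}$, so $\bin_n$ is defined on all of $\mathbb{Z}_q$; the structural requirements (disjointness of the $\mathcal{P}_n$, recoverability of $n$ from $p$, efficient instance generation, efficient uniform sampling from $\mathbb{Z}_q$, and efficiency of Algorithm~\ref{alg:new_modified_DDH_PRF}) hold immediately; and the parameterization $(p,g,g^a)$, being a triple of $n$-bit integers, admits the efficient $m = \mathrm{poly}(n)$-bit encoding $\bin_m(p,g,g^a)$ required by the corollary. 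Invoking Corollary~\ref{c:kearns} then yields directly that, for all sufficiently large $n$, the class $\mathcal{C}_n = \{D_{(p,g,g^a),b}\}$ is not efficiently classically PAC generator-learnable with respect to the $\sample$ oracle and the KL-divergence.

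For the quantum-learnability half, I would exhibit a single efficient quantum algorithm $\mathcal{A}$ which is an $(\epsilon,\delta,\sample,d_{\mathrm{KL}})$ PAC generator learner for $\mathcal{C}_n$ simultaneously for all $\epsilon > 0$ and all $0 < \delta < 1$. On access to $\sample(D_{(p,g,g^a),b})$, the learner draws one sample $x\,||\,\bin_n(F_{(p,g,g^a)}(b,x))\,||\,\bin_m(p,g,g^a)$, splits it at the two fixed block boundaries (positions $n$ and $2n$) to recover $x$, the value $F_{(p,g,g^a)}(b,x)$, and the parameterization $(p,g,g^a)$, then runs the exact quantum key learner $\mathcal{A}'$ of Claim~\ref{c:exists_alg} on $\big((p,g,g^a),\,x,\,F_{(p,g,g^a)}(b,x)\big)$ to obtain $b$ with probability~$1$, and finally outputs a description of the classical generator $\gen_{((p,g,g^a),b)}$ of Corollary~\ref{c:kearns}. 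Since $\gen_{((p,g,g^a),b)}$ generates $D_{(p,g,g^a),b}$ exactly, $d_{\mathrm{KL}}(D_{(p,g,g^a),b},D_{(p,g,g^a),b}) = 0 \le \epsilon$, so $\mathcal{A}$ outputs a $(d_{\mathrm{KL}},\epsilon)$ generator with certainty---hence in particular with probability at least $1-\delta$---while using one $\sample$ query and $\mathrm{poly}(n)$ time (dominated by the $n$ exact discrete-logarithm computations inside $\mathcal{A}'$), which is $\mathcal{O}(\mathrm{poly}(n,1/\delta,1/\epsilon))$. Therefore $\mathcal{C}_n$ is efficiently quantum PAC generator-learnable with respect to $\sample$ and $d_{\mathrm{KL}}$, and the two halves together give the theorem. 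The only loose end is the measure-negligible family of degenerate parameterizations with $g^a$ not a generator of $\mathrm{QR}_p$ (i.e.\ $a \equiv 0 \bmod q$); these are handled by the conventions already baked into Claims~\ref{c:is_PRF} and~\ref{c:exists_alg}, and I would simply remark on this.

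The real difficulty is not this assembly but the two claims it leans on, which is where I expect the main obstacles to sit. Proving Claim~\ref{c:is_PRF} amounts to checking that the Goldreich--Goldwasser--Micali construction, applied to the modified length-doubling DDH generator $\{\tilde{G}_{(p,g,g^a)}\}$, still yields a classical-secure PRF; the delicate point is that GGM requires the generator's output to be re-fed as an input at the next tree level, which is arranged here through the efficient bijection $f_p:\mathrm{QR}_p\to\mathbb{Z}_q$, so one must verify that the standard hybrid argument survives under this re-encoding and over the non-uniform parameterization distribution produced by $\mathcal{IG}(1^n)$, reducing a distinguisher for $\{F_{(p,g,g^a)}\}$ to one for $\{\tilde{G}_{(p,g,g^a)}\}$ and thence to a DDH distinguisher for $\{\mathrm{QR}_p\}$. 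Proving Claim~\ref{c:exists_alg} requires building the quantum key learner explicitly: given the leaf value $b_n = F_{(p,g,g^a)}(b,x)$ one inverts the GGM path level by level, at step $j$ applying $f_p^{-1}$ and then an \emph{exact} quantum discrete-logarithm computation to base $g$ (if $x_j = 0$) or to base $g^a$ (if $x_j = 1$) in $\mathrm{QR}_p$ to recover $b_{j-1}$ from $b_j$; the crux is invoking a zero-error, exact variant of Shor's discrete-log algorithm~\cite{mosca2004exact} and arguing that the resulting $n$-step inversion is deterministic, correct, and polynomial-time.
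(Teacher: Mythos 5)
Your proposal is correct and follows essentially the same route as the paper: instantiate Corollary~\ref{c:kearns} with the DDH-based PRF of Claim~\ref{c:is_PRF} for the classical-hardness half, and use the single-sample exact key-recovery of Claim~\ref{c:exists_alg} to construct a zero-error quantum learner for the other half. The extra checks you supply (that $\mathbb{Z}_q\subseteq\{0,1\}^n$ for an $n$-bit safe prime, the $\mathrm{poly}(n)$-bit encoding of $(p,g,g^a)$, and the remark on degenerate $a\equiv 0\bmod q$ parameterizations) are sound elaborations the paper leaves implicit.
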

\noindent At this point all that remains is to provide the proofs of Claims \ref{c:is_PRF} and \ref{c:exists_alg}. As $\{\tilde{G}_{(p,g,g^a)}\}$ is a collection of length doubling PRG's the proof of Claim \ref{c:is_PRF} is essentially identical to the original GGM proof \cite{Goldreich:1986:CRF:6490.6503,goldreich2007foundations}, with only minor straightforward modifications required to the definition of the hybrid distributions in order to account for the slightly more general parameterization set. As such, we omit this proof and proceed with a constructive proof of Claim~\ref{c:exists_alg}. Intuitively, as illustrated in Fig. \ref{f:GGM}, given $(p,g,g^a),x$ and $F_{(p,g,g^a)}(b,x)$ -- i.e. a ``leaf" of the GGM tree --  we construct an algorithm which can reverse the path taken through the tree to reach this leaf, and return the value of the root node, which is the desired key. More formally, the proof is as follows:

\begin{proof}[Proof (Claim \ref{c:exists_alg})]
For all tuples $p,g$, where $p = 2q+1$ is an $n$-bit safe prime, and $g$ is a generator for $\mathrm{QR}_p$, we define the discrete log function $\textsf{dlog}_{p,g}:\mathrm{QR}_p \rightarrow \mathbb{Z}_q$ via
\begin{equation}
    \textsf{dlog}_{p,g}(g^x) = x\md(q),
\end{equation}
for all $x\in\mathbb{N}$. As shown in Ref \cite{mosca2004exact}, for all relevant tuples $(p,g)$, there exists an exact efficient quantum algorithm for computing $\textsf{dlog}_{p,g}$ -- i.e. an efficient quantum algorithm which on input $p,g,g^x$ outputs $\textsf{dlog}_{p,g}(g^x)$ with probability 1. Given the ability to efficiently calculate $\textsf{dlog}$ as a subroutine, we can now construct a ``tree-reversal" algorithm by using the string $x$ to determine whether $f_p\circ\textsf{modexp}_{p,g}$ or $f_p\circ\textsf{modexp}_{p,g^a}$ was applied at a given level of the tree, and then applying either $\textsf{dlog}_{p,g}\circ f^{-1}_p$ or $\textsf{dlog}_{p,g^a}\circ f^{-1}_p$ as appropriate. More specifically, the following algorithm, also illustrated in Fig. \ref{f:GGM}, provides a constructive proof for the claim:
\begin{algorithm}[H]
  \caption{Exact efficient quantum key learner for  $F_{p,g,g^a}(b,x)$}\label{alg:key_learner}
  \begin{algorithmic}[1]
    \State Given parameterization $(p,g,g^a)$, as well as a ``sample" $x||F_{(p,g,g^a)}(b,x)$
    \State $b_n \gets F_{(p,g,g^a)}(b,x)$
    
    \ForAll{$n \geq j \geq 1$}
        \State $y \gets f_p^{-1}(b_j)$
        \If{$x_j = 0 $}
        \State $b_{j-1} \gets \mathrm{dlog}_{p,g}(y)$
        \ElsIf{$x_j = 1 $}
        \State $b_{j-1} \gets \mathrm{dlog}_{p,g^a}(y)$
        \EndIf
    \EndFor
    \Statex
    \State \textbf{Output: $b_0 \in \mathbb{Z}_q$}
  \end{algorithmic}
\end{algorithm}
\end{proof}

\subsection{Verification of Quantum $\gen$-Learning Advantage}\label{ss:verification}

In light of our main result --  given in Theorem \ref{t:main_result} -- it is natural to ask whether such a quantum $\gen$-learning advantage could be
efficiently \textit{verified}. 
While there are a variety of ways to formalize the notion of a verification procedure in this context, one approach is as follows: 
We consider an honest referee, Ronald, who for some sufficiently large $n$, is in possession of a full description of some element $D_{(p,g,g^a),b}$ of the distribution class $\mathcal{C}_n$ -- which in this case means in possession of the generator $\gen_{((p,g,g^a),b)}$ as well as its defining parameters $(p,g,g^a) \in \mathcal{P}_{n,(p,g,g^a)}$ and $b\in\mathbb{Z}_q$, and a description of its construction. 
Ronald then selects some valid $(\epsilon,\delta)$ pair and asks Alice, who claims to have a quantum computer, to run the claimed quantum $\gen$-learning algorithm $\mathcal{A}$ for $\mathcal{C}_n$, on input $(\epsilon,\delta)$, by using his access to $\gen_{((p,g,g^a),b)}$ to respond to Alice's $\sample$ queries. 
Alice then obtains from her claimed quantum learning algorithm some classical hypothesis generator $\textsf{HGEN}$, which generates samples from some distribution $D_h$. 
If Alice's learning algorithm is indeed a PAC $\gen$-learner for $\mathcal{C}_n$, then with probability $1-\delta$, $\textsf{HGEN}$ should be a classical $(d_{\mathrm{TV}},\epsilon)$-generator for $D_{(p,g,g^a),b}$. 
The question of verification is then whether there exists some efficient interactive procedure $\mathcal{V}$ (with suitable completeness and soundness guarantees) through which Alice, if $\textsf{HGEN}$ is indeed a $(d_{\mathrm{TV}},\epsilon)$-generator for $D_{(p,g,g^a),b}$, can convince Ronald of this fact.

In the most general (weak-membership) setting of verification Ronald should with high probability reject \textsf{HGEN} if the generated distribution $D_h$ is at least $\epsilon$-far away in TV-distance from the target distribution $D_{((p,g,g^a),b)}$, and accept with high probability if the generated distribution $D_h$ is at least $\epsilon'$-close to the target, for some $\epsilon'< \epsilon$. As such, the weak-membership verification problem, with non-zero completeness $\epsilon'$, is indeed precisely the well studied problem of \textit{tolerant} identity testing, between the learned distribution $D_h$ and the target distribution $D_{((p,g,g^a),b)}$ \cite{canonne2020survey}.

In the simplest (black-box) verification protocol, Alice can only send samples from the generator \textsf{HGEN} to Ronald to convince him of the fact that is indeed a $(d_{\mathrm{TV}},\epsilon)$-generator for $D_{(p,g,g^a),b}$, that is, allow Ronald \emph{random example queries} to $\textsf{HGEN}$. However, given that \textsf{HGEN} is a \textit{classical} generator, Alice could also allow Ronald \emph{membership queries} to $\textsf{HGEN}$ (after publishing a description of the domain). Clearly, such a protocol is stronger as in the latter case Ronald has access not only to random samples from $D_h$, but additional information about the generator as well. Importantly, as alluded to above, we note that membership-query based verification is only a possibility in the case of $\gen$ learners, which output \textit{classical} generators. In the case of $\qgen$ learners, which output \textit{quantum} generators, it is only possible to consider the setting of black-box (sample based) verification.

As we are primarily concerned here with the verification of the $\gen$ learner presented in the previous section, let us focus first on the setting in which Alice publishes the domain of \textsf{HGEN} and allows Ronald membership queries to \textsf{HGEN}. 
Recall that $\gen_{((p,g,g^a),b)}:\{0,1\}^n\rightarrow\{0,1\}^{2n+m}$, i.e. its domain is $\{0,1\}^n$. 
We can now distinguish between two possibilities. 
The first possibility is then that the domain of \textsf{HGEN} is the same as that of $\gen_{((p,g,g^a),b)}$, i.e. $\textsf{HGEN}:\{0,1\}^n\rightarrow \{0,1\}^{2n+m}$,  and the second is that it differs, i.e. $\textsf{HGEN}: \mathcal X \rightarrow \{0,1\}^{2n+m}$, for some domain $\mathcal X \neq \{0,1\}^n$. The following lemma allows us to show that efficient verification is indeed possible in the first case, and therefore in particular, for the $\gen$ learner presented in the previous section.

\begin{lemma}[Generator Distance Upper Bounds TV-Distance] Let $D_1,D_2\in\mathcal{D}_{m}$ be the distributions with classical generator $\gen_{D_1}:\{0,1\}^n\rightarrow\{0,1\}^m$ and $\gen_{D_2}:\{0,1\}^n\rightarrow\{0,1\}^m$ respectively, then
\begin{align}
    d_{\mathrm{TV}}(D_{1},D_{2}) &\leq \mathrm{Pr}_{\substack{x\leftarrow \{0,1\}^n}}\left[\gen_{D_1}(x)\neq \gen_{{D_2}}(x)\right] \label{e:TV_vs_prob}\\
    &:= \mathrm{dist}\left(\gen_{D_1},\gen_{D_2}\right).
\end{align}
\end{lemma}
\begin{proof}
We start with the observation that for any distribution $D\in\mathcal{D}_m$ generated by $\gen_D:\{0,1\}^n\rightarrow\{0,1\}^m$ we have that
\begin{equation}
    D(y) = \frac{1}{2^n}\sum_{x\in\{0,1\}^n}\delta(\gen_D(x),y),
\end{equation}
where $\delta(y,y') =1$ if $y=y'$ and $\delta(y,y') =0$ otherwise. Using this we then have that
\begin{align}
    d_{\mathrm{TV}}(D_{1},D_{2}) &=\frac{1}{2}\sum_{y\in\{0,1\}^m}\left|D_1(y) - D_2(y)\right| \nonumber \\
    &=\frac{1}{2}\sum_{y\in\{0,1\}^m}\left|\frac{1}{2^n}\sum_{x\in\{0,1\}^n}\left(\delta(\gen_{D_1}(x),y) - \delta(\gen_{D_2}(x),y)\right)\right|\nonumber \\
    &\leq\frac{1}{2^{n+1}}\sum_{y\in\{0,1\}^m}\left(\sum_{x\in\{0,1\}^n}\left|\delta(\gen_{D_1}(x),y) - \delta(\gen_{D_2}(x),y)\right|\right) \nonumber \\
    &=\frac{1}{2^{n+1}}\sum_{x\in\{0,1\}^n}\left(\sum_{y\in\{0,1\}^m}\left|\delta(\gen_{D_1}(x),y) - \delta(\gen_{D_2}(x),y)\right|\right)\nonumber \\
    &=\frac{1}{2^{n+1}}\sum_{x\in\{0,1\}^n}\left(2[1-\delta(\gen_{D_1}(x),\gen_{D_2}(x))]\right)\nonumber \\
    &=\frac{1}{2^{n}}\sum_{x\in\{0,1\}^n}[1-\delta(\gen_{D_1}(x),\gen_{D_2}(x))]\nonumber \\
    &=\mathrm{Pr}_{\substack{x\leftarrow \{0,1\}^n}}\left[\gen_{D_1}(x)\neq \gen_{{D_2}}(x)\right].
\end{align}
\end{proof}

In particular, when $\textsf{HGEN}:\{0,1\}^n\rightarrow \{0,1\}^{2n+m}$ we can apply the above lemma, with $\textsf{HGEN}$ as $\gen_{D_1}$ and $\gen_{((p,g,g^a),b)}$ as $\gen_{D_2}$, 
which allows us to see that Ronald can convince himself  that $d_{\mathrm{TV}}(D_h,D_{(p,g,g^a),b)} \leq \epsilon$ by checking that $\mathrm{dist}\left(\textsf{HGEN},\gen_{((p,g,g^a),b)}\right) \leq \epsilon$. To do this, Ronald draws a uniformly random $x \in \{0,1\}^n$, queries both $\textsf{HGEN}$ and $\gen_{((p,g,g^a),b)}$, and outputs $1$ in the event that $\textsf{HGEN}(x) = \gen_{((p,g,g^a),b)}(x)$ and $0$ otherwise.
By repeating this procedure $O(1/\tilde \epsilon^{2} 
\log (1/\eta))$ many times, he can estimate the bias $\mathrm{dist}\left(\textsf{HGEN},\gen_{((p,g,g^a),b)}\right)$ of the corresponding coin up to error $\tilde \epsilon$ with failure probability~$\eta$. 
He accepts the hypothesis generator $\textsf{HGEN}$ if his estimate $b$ satisfies $b + \tilde \epsilon < \epsilon'$ and rejects otherwise. Given this, we note that if Alice runs the quantum $\gen$-learner described in the previous section, then she will obtain, with certainty, the exact generator $\gen_{((p,g,g^a),b)}$ for $D_{(p,g,g^a),b}$, so that $\mathrm{dist}\left(\textsf{HGEN},\gen_{((p,g,g^a),b)}\right) = 0 $, and the above verification procedure can be applied.
In the case of the quantum $\gen$-learner that we have provided, it is therefore straightforward to verify the quantum learning advantage exhibited by this learner.

Finally, if we are in the latter case, i.e. $\textsf{HGEN}: \mathcal X \rightarrow \{0,1\}^{2n+m}$ for some set $\mathcal X \neq \{0,1\}^n$, then it seems unclear how membership queries to the hypothesis generator \textsf{HGEN} would aid Ronald in verifying closeness. 
The problem of verification with membership queries then reduces to the problem of verification with random example queries, that is, of the tolerant identity testing, 
from random samples, of the distributions $D_h$ and $D_{((p,g,g^a),b)}$. 
We note that the distribution $D_{((p,g,g^a),b)}$ is in fact the uniform distribution over a size $2^n$ subset of $\{0,1\}^{2n+m}$, and therefore it is known that there exists a universal constant $\epsilon_0 > 0$ such that for any $\epsilon' \in (0,\epsilon_0)$ and $\epsilon = c\epsilon'$ with $c>1$ at least $\Omega(2^n/(\epsilon' n))$ many samples from $D_h$ are required to verify the hypothesis generator  \cite{valiant2010clt,valiant2011power,canonne2020survey}. In the simpler setting in which $\epsilon'=0$ -- i.e. only generators of the exact target distribution should be accepted while $\epsilon$-false distributions are still rejected  --  it is known that $\Omega(\sqrt{2^n}/\epsilon^2)$ many samples from $D_h$ are required~\cite{canonne2020survey, valiant_automatic_2017,black-box-verification}.


In summary, we see that the quantum learning advantage exhibited by the learning algorithm provided in the previous section could indeed be efficiently verified, as a result of the fact that the learner always obtains an exact classical generator whose domain coincides with that of the target generator, and therefore allows for efficient membership query based tolerant identity testing. However, in general, verification of a claimed $\gen$-learner (or $\qgen$ learner) reduces to the problem of black-box (sample based) tolerant distribution identity testing, which in general requires exponential sample complexity. Given this, we note that in general, verification of a generative learner is a harder task than verification of a supervised learner, which can be performed efficiently, in the black-box setting \cite{liu2020rigorous}.

\section{On Classical Hardness Results From Alternative Primitives }\label{s:alt_classical_hardness}
One of the key tools which allowed us to prove our main result (Theorem \ref{t:main_result}) was the construction and associated classical hardness result of KMRRSS\ \cite{Kearns:1994:LDD:195058.195155}, presented here (generalized and strengthened) as Theorem \ref{t:kearns}. In this section we shift direction slightly, and explore the possibility of obtaining similar classical hardness results from alternative primitives. More specifically, as we have seen in the previous section, KMRRSS\ have shown that given any classical-secure PRF collection, it is possible to construct a distribution class which is not efficiently classically learnable. In Section \ref{ss:weak_PRFs} we ask whether it is necessary to use a PRF collection to obtain such a classical hardness result, or whether the weaker notion of a \textit{weak}-secure PRF collection \cite{bogdanov2017pseudorandom} would be sufficient. This question is motivated partly by the existence of weak-secure PRF's with known quantum adversaries~\cite{bogdanov2017pseudorandom}, upon which, if one is able to obtain a classical hardness result, then one might be able to obtain additional quantum/classical distribution learning separations, which are possibly achievable by near-term quantum learners. In Section \ref{ss:from_bool_to_dist} we then ask whether one could instantiate the construction of KMRRSS\ with a \textit{Boolean function} concept class which is provably not efficiently classically PAC learnable. In this latter case, the motivation is to understand better the relationship between PAC learning of Boolean functions and PAC learning of discrete distributions, and in particular whether one could leverage existing classical/quantum separations for learning Boolean functions into distribution learning separations. In Sections \ref{ss:weak_PRFs} and \ref{ss:from_bool_to_dist} we formulate conjectures concerning the possibility of using these alternative primitives for classical hardness results, and describe clearly some obstacles towards proving these conjectures.

\subsection{Classical Hardness Results from Weak-Secure Pseudorandom Function Collections}\label{ss:weak_PRFs}

Let us begin by discussing the possibility of proving a classical/quantum learning separation that is based on weak-secure PRFs as opposed to classical-secure PRFs, as they are used in Theorem~\ref{t:kearns}. 
In order to understand the motivation for this question, it is necessary to briefly return to an analysis of the proof of Theorem \ref{t:kearns}. 
In particular, at an informal level (i.e. modulo some details) we recall the following:
\begin{enumerate}
    \item Theorem \ref{t:kearns} states that if $\{F_P\}$ is a classical-secure PRF collection, then for infinitely many $n$ the distribution class $\tilde{\mathcal{C}}_n = \{D_{(P,k)}\}$ is not classically efficiently PAC $\gen$-learnable.
    \item The proof of Theorem \ref{t:kearns} proceeds by assuming that the concept class $\tilde{C}_n$ is efficiently PAC $\gen$-learnable for infinitely many $n$, and then using the associated learning algorithms $\tilde{\mathcal{A}}_n$ to construct a polynomial time algorithm $\mathcal{A}$ which $Q$-infers the function collection $\{F_P\}$, for some polynomial $Q$, thereby contradicting the assumption that $\{F_P\}$ is a classical-secure pseudorandom function collection.
\end{enumerate}
 More specifically,  the proof of Theorem \ref{t:kearns} begins with the observation that
\begin{align}
    \query[\sample(\tilde{D}_{(P,k)})] &= \kgen_{(P,k)}(x) \text { with } x\leftarrow U_n \nonumber \\
    &= x||\bin_n(F_P(k,x)) \text { with } x\leftarrow U_n \nonumber \\
    & = x||\bin_n(\query[\mq(F_P(k,\cdot))](x)) \text { with } x\leftarrow U_n
\end{align}
and therefore when given oracle access to $\mq(F_P(k,\cdot))$, the polynomial inference algorithm $\mathcal{A}$ can simulate the learner $\tilde{\mathcal{A}}_n$ by responding to sample queries of $\tilde{\mathcal{A}}_n$ with $x||\bin_n(\query[\mq(F_P(k,\cdot))](x))$, where $x$ has been drawn uniformly at random. Algorithm $\mathcal{A}$ then uses the obtained generator to ``pass the exam" described in Definition \ref{d:poly_inference}. However, we note that in order to simulate the learner $\tilde{\mathcal{A}}$, it is \textit{not} necessary for the inference algorithm to have membership query access to $F_P(k,\cdot)$. In particular, we have that
\begin{align}
    \query[\sample(\tilde{D}_{(P,k)})] = x||\bin_n(\query[\pex(F_P(k,\cdot),U)])
\end{align}
and therefore the entire proof of Theorem \ref{t:kearns} holds, even if the polynomial inference algorithm $\mathcal{A}$ only has \textit{random example} oracle access to $F_P(k,\cdot)$. In light of this observation, one may immediately think that a \textit{weak}-secure PRF collection would be sufficient for instantiating the construction described in Theorem \ref{t:kearns}. In other words, given that the proof of Theorem~\ref{t:kearns} holds when the polynomial inference algorithm only has random example oracle access to $F_P(k,\cdot)$, it seems plausible that if one can efficiently learn the distribution class $\{\tilde{\mathcal{D}}_{P,k}\}$, then one can use this learner to construct an adversary (i.e. distinguishing algorithm) for the function collection $\{F_P\}$ which only requires random example oracle access. To formalize this idea, we make the following conjecture, a slight variant of Theorem \ref{t:kearns}:

\begin{conjecture}\label{c:weak_prf_construction} Let $\{F_P\}$ be a weak-secure pseudorandom function collection with the additional properties stated in Theorem \ref{t:kearns}. Then, for all sufficiently large $n$ the distribution class $\tilde{\mathcal{C}}_n := \{\tilde{D}_{(P,k)}|P \in \mathcal{P}_n,k\in\mathcal{K}_p\}$ is not efficiently classically PAC $\gen$-learnable with respect to the $\sample$ oracle and the TV-distance.
\end{conjecture}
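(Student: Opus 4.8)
The plan is to mirror the proof of Theorem~\ref{t:kearns} as closely as possible, exploiting the observation already highlighted above: the inference algorithm $\mathcal{A}$ built there only ever touches $F_P(k,\cdot)$ through uniformly random points while it simulates the assumed distribution learner $\tilde{\mathcal{A}}_n$. Accordingly, I would first introduce a \emph{weak} variant of polynomial inference (Definition~\ref{d:poly_inference}): the inference algorithm is handed $\mathrm{poly}(n)$ labelled samples drawn from $\pex(F_P(k,\cdot),U)$ instead of being allowed membership queries, then outputs an ``exam string'' $x$ not among those sampled points, and must pass the same exam, i.e.\ distinguish $F_P(k,x)$ from a uniformly random element of $\mathcal{D}'_P$. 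Call $\{F_P\}$ \emph{weakly polynomially inferrable} if some efficient classical algorithm of this kind passes the exam with probability $1/2 + 1/Q(n)$ for infinitely many $n$. The strategy is then to show (i) a classical learner for $\tilde{\mathcal{C}}_n$ yields such a weak inference algorithm, and (ii) weak inferrability contradicts weak-security of $\{F_P\}$.

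For step~(i) I would simply re-run the argument of the proof of Theorem~\ref{t:kearns} with $\kgen_{(P,k)}$ in place: an efficient classical $\sample$-learner for $\tilde{\mathcal{C}}_n$, invoked with $\epsilon=\log(n)$ and $\delta=1/2$, is simulated by feeding it $x_i||\bin_n(F_P(k,x_i))$ where $x_i$ now comes from $\pex(F_P(k,\cdot),U)$; the resulting generator is then used verbatim in the three cases of that proof to pass the exam, and the estimates of $\mathrm{Pr}[\encircle{a1}]$ and $\mathrm{Pr}[\encircle{a2}]$ obtained via Lemma~\ref{l:proof_inter_lemma} are unaffected. This part should be routine. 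Step~(ii) is the crux: one needs the analogue of Lemma~\ref{l:gold_predictability} in the weak setting, namely that weak polynomial inferrability implies the existence of an efficient classical distinguisher having only \emph{random example} access, so that $\{F_P\}$ is not weak-secure. Combining (i) and (ii) would prove Conjecture~\ref{c:weak_prf_construction}.

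The main obstacle is exactly step~(ii), and it is a genuine one. In the membership-query world, Lemma~\ref{l:gold_predictability} turns a prediction algorithm into a distinguisher by having the distinguisher itself \emph{evaluate} its oracle at the exam string $x$ produced by the predictor, in order to set up and score the exam (a true instance then scores $1/2+1/Q(n)$, a random function scores exactly $1/2$, yielding advantage $1/Q(n)$). A weak-secure distinguisher, however, only sees labels at uniformly random points and therefore cannot compute $F_P(k,\cdot)$ at the adversarially chosen $x$; since $x$ ranges over the exponentially large domain $\{0,1\}^n$, it will also coincide with none of the polynomially many random samples except with negligible probability. Thus the standard predictor-to-distinguisher reduction breaks down and no substitute is known; one cannot presently even exclude the existence of collections that are weak-secure yet weakly polynomially inferrable, in which case this particular two-step route would fail (although the conjecture could still hold via a different argument).

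A natural attempt to bypass inference altogether -- building a random-example distinguisher directly from the learner by feeding it $x_i||\bin_n(y_i)$, which is a sample of $\tilde{D}_{(P,k)}$ when $y_i=F_P(k,x_i)$ and a sample of $U_{2n}$ when $y_i$ is uniform, and then inspecting the returned generator (e.g.\ via a support-size or entropy test, since $\tilde{D}_{(P,k)}$ is concentrated on $2^n$ of the $2^{2n}$ strings) -- also runs aground, because $U_{2n}\notin\tilde{\mathcal{C}}_n$ in general, so the learner's behaviour on uniform input is entirely unconstrained, and any test certifying ``closeness to the graph of $F_P(k,\cdot)$'' again needs to evaluate $F_P(k,\cdot)$ at fresh points. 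So I expect that settling the conjecture requires either a new reduction that manufactures membership-query-like information from random examples -- which seems feasible only under extra structure on $\{F_P\}$, such as random self-reducibility -- or a direct hardness proof for $\tilde{\mathcal{C}}_n$ that never passes through an inference or distinguishing intermediary.
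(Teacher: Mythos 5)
This statement is a \emph{conjecture}: the paper does not prove it, but instead devotes the remainder of Section~\ref{ss:weak_PRFs} to explaining exactly why the obvious proof strategy fails. Your analysis is correct and arrives at the same diagnosis as the authors. Both you and the paper reduce the question to finding a weak-secure analogue of Lemma~\ref{l:gold_predictability}, and both identify the same blocking step: to turn a polynomial inference algorithm into a distinguisher, the distinguisher must evaluate the oracle at the adversarially chosen exam string in order to prepare the exam, which is possible with $\mq$ access but not with $\pex$ access alone (and the exam string will coincide with one of the polynomially many random samples only with negligible probability). Your step~(i), re-running the simulation of the learner with $\pex$-simulated samples, is likewise noted explicitly in the paper and is indeed routine. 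The one complementary observation the paper adds that you do not is the following dual failure: if one weakens Definition~\ref{d:poly_inference} so that the exam string is drawn uniformly at random rather than chosen by the inference algorithm, then the inference-to-distinguisher reduction \emph{does} go through with only random examples --- but then the learner-to-inference reduction in the proof of Theorem~\ref{t:kearns} collapses, because that proof relies crucially on using the learned generator to \emph{choose} the exam string. Your additional observation that a direct learner-to-distinguisher reduction also fails (since $U_{2n}\notin\tilde{\mathcal{C}}_n$, the learner's behaviour on the uniform distribution is unconstrained, and any closeness test would again need fresh evaluations of $F_P(k,\cdot)$) is not in the paper but is a correct and useful further piece of the picture. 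In short: you have not proven the conjecture, but neither has anyone, and you have correctly reproduced the paper's own assessment of why it is hard.
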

\noindent Given the proof of Theorem \ref{t:kearns}, and the observation that the polynomial inference algorithm involved in the proof requires only random example oracle access, one might think that a proof of Conjecture \ref{c:weak_prf_construction} would follow immediately. Unfortunately though, this is not the case. However, before discussing the obstacles one faces in adapting the proof of Theorem \ref{t:kearns} to prove Conjecture \ref{t:kearns}, it is worth examining briefly why Conjecture \ref{c:weak_prf_construction} is interesting, and what consequences its truth might have for obtaining classical/quantum distribution learning separations. Firstly, as discussed in Ref.\ \cite{bogdanov2017pseudorandom} there is currently strong evidence that weak-secure PRF's are indeed a less complex object than classical-secure PRF's. More specifically, it is believed that there exist weak-secure PRF's which are not classical-secure PRF's \cite{bogdanov2017pseudorandom}, and as such Conjecture \ref{c:weak_prf_construction} would provide evidence that the existence of classical-secure PRF's is not necessary for the construction of distribution classes which are provably not classically efficiently PAC $\gen$-learnable. Additionally one candidate for such a weak-secure PRF collection is based on the ``learning parity with noise" problem\footnote{We note that this weak-secure PRF collection is in fact a \textit{randomized} function collection -- i.e. the evaluation algorithm is allowed to be probabilistic --  however we leave out a discussion of this subtlety, and refer to Ref.\ \cite{bogdanov2017pseudorandom} for more details.}, which is strongly believed to be hard for classical algorithms with classical random examples \cite{pietrzak2012cryptography}, but which is known to be efficiently solvable by quantum algorithms with \textit{quantum} random examples \cite{Cross_2015, Grilo_2019}. Importantly, unlike quantum algorithms for the discrete logarithm \cite{mosca2004exact}, which seem to require a universal fault-tolerant quantum computer, quantum algorithms for ``learning parity with noise" (LPWN) are robust against certain types of noise models \cite{Cross_2015}, and have in fact already been demonstrated on existing NISQ devices \cite{riste2017demonstration}. As such, while demonstrating a quantum advantage for the generator-learnability of the DDH based concept class described in Section \ref{ss:DDH_result} would require a universal fault-tolerant quantum computer, it is plausible that if Conjecture \ref{c:weak_prf_construction} is true, then one could construct a LPWN based concept class which is not classically efficiently PAC $\gen$-learnable, but which is quantum efficiently PAC $\gen$-learnable using existing or near-term quantum devices, albeit with quantum random samples.

With these observations in mind, let us return to a discussion of the difficulties in adapting the proof of Theorem \ref{t:kearns} into a proof for Conjecture \ref{c:weak_prf_construction}. Analogously to the proof of Theorem \ref{t:kearns}, we would like to show that if the distribution class $\{\tilde{D}_{(P,k)}\}$ is efficiently PAC-generator learnable (with respect to $\sample$ and the KL-divergence) then the function collection $\{F_P\}$ is not weak-secure. However, it is critical to note that in proving Theorem \ref{t:kearns} we relied heavily on the alternative characterization of classical-secure PRF's provided by Lemma \ref{l:gold_predictability}. More specifically, we used the fact that if there exists a polynomial inference algorithm (using membership queries) for an indexed collection of keyed functions, then this collection of functions is not standard-secure. Now, from the previous discussion, we know that if the distribution class $\{\tilde{D}_{(P,k)}\}$ is efficiently PAC $\gen$-learnable, then there exists an efficient polynomial inference algorithm for $\{F_P\}$ which only requires random examples, as opposed to membership queries. However, it is \textit{not} clear that this implies that the function collection $\{F_P\}$ is not weak-secure! In other words, in order to adapt the proof of Theorem \ref{t:kearns} to a proof of Conjecture \ref{c:weak_prf_construction} we need an alternative characterization of weak-secure PRF's analagous to Lemma \ref{l:gold_predictability} -- i.e. a statement that if there exists an efficient polynomial inference algorithm for $\{F_P\}$ which only requires random examples, then $\{F_P\}$ is not a weak-secure PRF collection. To understand why obtaining such a characterization is tricky, it is necessary to sketch the original proof of Lemma \ref{l:gold_predictability} from Ref.\ \cite{Goldreich:1986:CRF:6490.6503}. In order to prove the direction that we are concerned with, one starts by assuming that there exists a polynomial inference algorithm $\mathcal{A}$ for $\{F_P\}$, and then using this algorithm to construct a new distinguishing algorithm $\mathcal{A}'$ which, when given membership query access to some unknown function $F$ can with non-negligible probability determine whether this function was drawn uniformly at random from the set of all functions $F:\tilde{D}_P \rightarrow \mathcal{D}'_P$, or uniformly at random from the set of functions $\{F_P(k,\cdot)\,|\, k \in \mathcal{K}_P\}$. More specifically, algorithm $\mathcal{A}'$ works as follows:

\begin{enumerate}
    \item When given some parameterization $P$, along with oracle access to $\mq(F)$, the distinguishing algorithm $\mathcal{A}'$ begins by simulating the inference algorithm $\mathcal{A}$, which returns an ``exam string" $x$.
    \item Using $\mq(F)$, algorithm $\mathcal{A}'$ then ``prepares the exam" -- i.e. presents algorithm $\mathcal{A}$ with $y_1 = F(x)$ and $y_2 \leftarrow U_{\mathcal{D}'_P}$ in a random order.
    \item The inference algorithm $\mathcal{A}$ then ``takes the exam", and picks either $y_1$ or $y_2$.
    \item If $\mathcal{A}$ picks $y_1$, then $\mathcal{A}'$ outputs 1, otherwise $\mathcal{A}'$ outputs $0$.
\end{enumerate}
The fact that $\mathcal{A}$ succeeds with non-neglible probability then follows straightforwardly from the fact that $\mathcal{A}'$ $Q$-infers $\{F_P\}$ for some polynomial $Q$ \cite{Goldreich:1986:CRF:6490.6503}. In light of the above sketch, we can now analyze the difficulties one faces in adapting the above proof when both the distinguishing algorithm and the inference algorithm are only allowed random example oracle access. Recall, we want to show that if $\mathcal{A}$ $Q$-infers $\{F_P\}$ using only random examples -- i.e. with $\pex$ access -- then we can build a suitable distinguishing algorithm $\mathcal{A}'$, which also only requires random examples (which would imply $\{F_P\}$ is not weak-secure). So, as per the proof sketched above for the case of membership queries, when given access to $\pex(F,U)$, the distinguishing algorithm $\mathcal{A}'$ could start by simulating the inference algorithm $\mathcal{A}$, which returns an exam string $x$. It is at this point that we encounter a problem! Specifically, given the exam string $x$, $\mathcal{A}'$ should prepare the exam by returning $y_1 = F(x)$ along with some $y_2$ drawn uniformly at random from $\mathcal{D}'_P$. If $\mathcal{A}'$ has access to $\mq(x)$ then it is straightforward to prepare the exam, as $\query[\mq(F)](x) = F(x)$. However, if $\mathcal{A}'$ only has access to $\pex(F,U)$, then it \textit{cannot} prepare the exam! Note that if we modified the definition of polynomial inference (given as Definition \ref{d:poly_inference}) so that the inference algorithm does not get to choose its exam string, but is just given an exam string sampled uniformly at random (from the set of strings which have not yet been used), then algorithm $\mathcal{A}'$ \textit{could} prepare the exam for algorithm $\mathcal{A}$, and the rest of the proof would hold, yielding an alternative characterization of weak-secure PRF collections in terms of a slightly modified notion of polynomial inference. However, we note that with such a modified definition of polynomial inference, the proof of Theorem \ref{t:kearns} will no longer work! In particular, recall that the proof of Theorem \ref{t:kearns} relies heavily on the fact that the constructed inference algorithm can use the generator it obtained from the distribution learner to \textit{choose} its own exam string. In other words, if a polynomial inference algorithm for $\{F_P\}$ is required to pass a randomly drawn exam with non-negligible probability, then it is completely unclear how a distribution learner for $\{D_{(P,k)}\}$ can be used to construct a successful polynomial inference algorithm. Given these observations we see that, while Conjecture \ref{c:weak_prf_construction} seems plausible, and has a variety of interesting consequences if true, one cannot simply adapt the proof of Theorem \ref{t:kearns} to this modified setting.

\subsection{Classical Hardness Results from Hard to Learn Function Concept Classes}\label{ss:from_bool_to_dist}

While in this work we have so far focused primarily on the PAC learnability of \textit{distribution} concept classes, as an abstraction of generative modelling, there exists already a large body of work concerning the quantum versus classical PAC learnability of \textit{Boolean function} concept classes \cite{arunachalam2017survey}. In this section, we aim to explore to some extent the relationship between these two notions, and in particular whether existing results in the latter context can be leveraged to obtain results in the former. As a starting point, we note that in principle one could instantiate the distribution class construction from KMRRSS\ \cite{Kearns:1994:LDD:195058.195155} with a Boolean function concept class, as formalized by the following definition:

\begin{definition}
Given some Boolean function $f\in \mathcal{F}_n$, we define the distribution $D_f \in \mathcal{D}_{n+1}$ as the distribution defined via the classical generator $\gen_{D_f}:\{0,1\}^n \rightarrow \{0,1\}^{n+1}$ built from $f$ via
\begin{equation}
    \gen_{D_f}(x) = x||f(x).
\end{equation}
Additionally, given some concept class $\mathcal{C} \subseteq \mathcal{F}_n$ we define the distribution class $\mathcal{D}_\mathcal{C} \subseteq \mathcal{D}_{n+1}$ via
\begin{equation}
    \mathcal{D}_\mathcal{C} = \{D_c\, |c \in \mathcal{C}\}.
\end{equation}
\end{definition}
\noindent Given the above construction, we proceed in this section to prove Theorem \ref{t:learnable_implies_learnable}, and to discuss in detail its inverse statement, which we formalize as Conjecture \ref{c:hard_implies_hard}.

\begin{theorem}[Function Learnability Implies Distribution Learnability]\label{t:learnable_implies_learnable}
If a concept class $\mathcal{C}$ is efficiently classically (quantum) PAC learnable with respect to the uniform distribution and the $\pex$ oracle, then the distribution class $\mathcal{D}_\mathcal{C}$ is efficiently classically (quantum) PAC $\gen$-learnable with respect to the $\sample$ oracle and the TV-distance.
\end{theorem}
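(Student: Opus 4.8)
The plan is to reduce PAC generator-learning of $\mathcal{D}_{\mathcal{C}}$ directly to PAC learning of $\mathcal{C}$, exploiting the fact that the two relevant oracles line up exactly. Concretely, a single query to $\sample(D_c)$ returns $x||c(x)$ with $x\leftarrow U_n$, which is precisely the output of a query to $\pex(c,U_n)$. Hence, given oracle access to $\sample(D_c)$, a learner can perfectly (and efficiently) simulate $\pex(c,U_n)$ by splitting each $(n+1)$-bit sample into its length-$n$ prefix $x$ and its final bit, which it reports as the label $c(x)$.

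First I would fix target parameters $\epsilon,\delta$ for the generator learner (if $\epsilon\geq 1$ the task is trivial since $d_{\mathrm{TV}}\leq 1$ always, so assume $\epsilon<1$) and invoke the assumed efficient $(\epsilon,\delta,\pex,U_n)$ PAC learner $\mathcal{B}$ for $\mathcal{C}$, answering each of its $\pex(c,U_n)$ queries using a fresh $\sample(D_c)$ query as above. With probability at least $1-\delta$, $\mathcal{B}$ outputs a hypothesis $h\in\mathcal{F}_n$ with $\mathrm{Pr}_{x\leftarrow U_n}[h(x)\neq c(x)]\leq\epsilon$. I would then have the generator learner output the generator $\gen_{D_h}$ defined by $\gen_{D_h}(x)=x||h(x)$ with $x\leftarrow U_n$ — i.e.\ exactly the canonical generator for $D_h$ from the definition of $\mathcal{D}_{\mathcal{C}}$ — which produces samples distributed according to $D_h$.

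The remaining step is the short computation showing $\gen_{D_h}$ is a $(d_{\mathrm{TV}},\epsilon)$ generator for $D_c$ on the event that $\mathcal{B}$ succeeds. Using the explicit form $D_f(z||b)=2^{-n}$ when $b=f(z)$ and $D_f(z||b)=0$ otherwise, one sees that for each prefix $z$ the pair of strings $z||0,z||1$ contributes $0$ to $\sum_y|D_c(y)-D_h(y)|$ when $h(z)=c(z)$ and $2\cdot 2^{-n}$ when $h(z)\neq c(z)$; summing over $z$ gives
\begin{equation}
d_{\mathrm{TV}}(D_c,D_h)=\mathrm{Pr}_{x\leftarrow U_n}[h(x)\neq c(x)]\leq\epsilon.
\end{equation}
Efficiency of the generator learner is immediate: the reduction is sample-preserving with only trivial post-processing, so the running time is that of $\mathcal{B}$, namely $\mathcal{O}(\mathrm{poly}(n,1/\epsilon,1/\delta))$. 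In the quantum case the simulation of $\pex(c,U_n)$ from $\sample(D_c)$ is purely classical, so an efficient quantum $(\epsilon,\delta,\pex,U_n)$ learner for $\mathcal{C}$ yields an efficient quantum PAC generator learner for $\mathcal{D}_{\mathcal{C}}$ (outputting, possibly, a quantum generator, which Definition \ref{d:pac_gen_learner} permits).

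The main subtlety — more a point to be explicit about than a genuine obstacle — is that to call $\gen_{D_h}$ an \emph{efficient} generator in the sense of Definition \ref{d:eff_gen} we need $h$ to be evaluable in $\mathrm{poly}(n)$ time; this is part of the standard notion of efficient PAC learnability (the learner returns a polynomially sized, polynomially evaluatable representation of its hypothesis), and I would state this assumption explicitly. A second remark worth including is that the argument genuinely requires the total variation distance and fails for the KL-divergence: a single misclassified $x$ forces $D_h(x||c(x))=0$ while $D_c(x||c(x))>0$, so $d_{\mathrm{KL}}(D_c,D_h)=\infty$ unless $h$ agrees with $c$ everywhere. This is precisely why the KL-hard distribution classes of Theorem \ref{t:kearns} use the padded form $x||\bin_n(F_P(k,x))$ rather than the bare construction used here, and it is also the source of the difficulty that will make the converse direction (Conjecture \ref{c:hard_implies_hard}) delicate.
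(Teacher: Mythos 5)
Your proof is correct and follows essentially the same route as the paper: simulate $\pex(c,U_n)$ from $\sample(D_c)$, run the assumed learner to get $h$, output $\gen_{D_h}$, and use the identity $d_{\mathrm{TV}}(D_h,D_c)=\mathrm{Pr}_{x\leftarrow U_n}[h(x)\neq c(x)]$ (which the paper isolates as Lemma~\ref{l:inter}, while you derive it inline). Your two added remarks — that efficient evaluability of $h$ is needed and implicit in the definition of efficient PAC learnability, and that the argument necessarily uses TV rather than KL — are both accurate and worth making explicit.
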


\begin{conjecture}[Function Hardness Implies Distribution Hardness]\label{c:hard_implies_hard}
If a concept class $\mathcal{C}$ is not efficiently classically (quantum) PAC learnable with respect to the uniform distribution and the $\pex$ oracle, then the distribution class $\mathcal{D}_\mathcal{C}$ is not efficiently classically (quantum) PAC $\gen$-learnable with respect to the $\sample$ oracle and the TV-distance.
\end{conjecture}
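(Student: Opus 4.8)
\noindent\textbf{Proof approach for Conjecture \ref{c:hard_implies_hard}.} The plan is to prove the contrapositive: assuming that $\mathcal{D}_\mathcal{C}$ is efficiently classically (quantum) PAC generator-learnable with respect to $\sample$ and $d_{\mathrm{TV}}$, try to build an efficient classical (quantum) $(\epsilon,\delta,\pex,U_n)$ PAC learner $\mathcal{B}$ for $\mathcal{C}$. The starting observation is that for every $c\in\mathcal{C}$ one has $\query[\sample(D_c)] = x\|c(x)$ with $x\leftarrow U_n$, which is exactly $\query[\pex(c,U_n)]$. Hence $\mathcal{B}$, given only $\pex(c,U_n)$ access, can perfectly simulate the $\sample(D_c)$ oracle, and can therefore run the assumed generator-learner $\mathcal{A}$ with accuracy parameter $\epsilon' = \epsilon/5$ and confidence $\delta$ to obtain, with probability $1-\delta$, an efficient description of a generator $\gen_{D'}$ for some $D'$ over $\{0,1\}^{n+1}$ with $d_{\mathrm{TV}}(D_c,D')\le\epsilon'$. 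Each $\sample$ query of $\mathcal{A}$ costs one $\pex$ query, so this phase is efficient.

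The next step, which is purely information-theoretic and short, is to show that the ``Bayes-optimal'' hypothesis $h^{\ast}(x):=\arg\max_{b\in\{0,1\}} D'(x\|b)$ (ties broken toward $c(x)$) is $\epsilon$-close to $c$ under $U_n$. Write $\mu$ for the marginal of $D'$ on the first $n$ bits; by data processing $d_{\mathrm{TV}}(\mu,U_n)\le d_{\mathrm{TV}}(D_c,D')\le\epsilon'$. Since $D_c$ assigns zero mass to every string $x\|\overline{c(x)}$, the ``wrong-last-bit'' mass $W:=\sum_x D'(x\|\overline{c(x)})$ obeys $W\le 2\,d_{\mathrm{TV}}(D_c,D')\le 2\epsilon'$. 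For each $x$ in the error set $B:=\{x:h^{\ast}(x)\ne c(x)\}$ one has $\mu(x)=D'(x\|c(x))+D'(x\|\overline{c(x)})\le 2D'(x\|\overline{c(x)})$, hence $\mu(B)\le 2W\le 4\epsilon'$, and therefore $\Pr_{x\leftarrow U_n}[h^{\ast}(x)\ne c(x)]=U_n(B)\le \mu(B)+d_{\mathrm{TV}}(\mu,U_n)\le 5\epsilon'=\epsilon$. So a good Boolean hypothesis for $c$ is, in principle, ``contained in'' the generator output.

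The remaining — and crucial — step is to produce $h^{\ast}$, or any $\mathcal{F}_n$-hypothesis $\epsilon$-close to $c$, \emph{efficiently} from $\gen_{D'}$, and this is where I expect the argument to stall; it is the reason the statement is phrased as a conjecture rather than a theorem. The hypothesis $h^{\ast}$ is defined through the individual probabilities $D'(x\|0),D'(x\|1)$, each of order $2^{-n-1}$, and $\mathcal{A}$ hands back a \emph{generator} (a sampler) rather than an \emph{evaluator} for $D'$: these quantities cannot be read off directly, and they cannot be estimated by rejection sampling on the event ``first $n$ bits equal $x$'' since that event has probability $\approx 2^{-n}$. Without either an evaluator for $D'$, a structural assumption on $\gen_{D'}$ permitting conditioning or inversion, or a genuinely different argument that extracts a Boolean hypothesis directly from $\mathrm{poly}(n)$ samples, the reduction produces only an inefficient learner. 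This obstruction mirrors the general gap between generative and density/evaluator models of distribution learning noted in Section \ref{s:intro}: the reduction above does yield a learner for $\mathcal{C}$ in the \emph{evaluator} sense, but lifting it to a PAC learner that outputs a Boolean hypothesis is exactly what is unclear. The quantum case is entirely analogous — a quantum $(\epsilon,\delta,\pex,U_n)$ learner can likewise simulate $\sample(D_c)$ and invoke the quantum generator-learner, and the same extraction obstruction applies, since estimating exponentially small probabilities from $\mathrm{poly}(n)$ classical samples is infeasible quantumly as well.
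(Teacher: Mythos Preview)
Your discussion is sound and correctly identifies that this is a conjecture, not a theorem: the contrapositive reduction stalls at the extraction step. Your route to the obstruction, however, is genuinely different from the paper's.

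The paper proceeds structurally. It first \emph{classifies} all exact classical generators for $D_c$: any $\gen_D:\{0,1\}^m\to\{0,1\}^{n+1}$ that exactly generates $D_c$ must have $m\ge n$ and must equal $\gen_{(D_c,m,P)}(x):=\gen_{D_c}\big(P(x)_{[1,n]}\big)$ for some permutation $P$ of $\{0,1\}^m$ (this is their Lemma~\ref{l:exact_form}, preceded by Observations~\ref{o:restriction_obs}--\ref{o:permutation_obs}, Lemma~\ref{l:error_bounds}, and Corollary~\ref{c:m_restriction}). Given this, the Boolean hypothesis one ``wants'' is $h(x)=\big[\gen_{(D_c,m,P)}(P^{-1}(x\|0\cdots0))\big]_{n+1}=c(x)$; the obstruction is that the learner does not know $P$, can at best harvest random input/output pairs $(x,P(x))$ from the generator, and exact learning of permutations is known to be hard even with membership queries. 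So the paper locates the difficulty in \emph{inverting an unknown permutation hidden inside the generator}.

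You instead give an information-theoretic argument that the Bayes rule $h^\ast(x)=\arg\max_b D'(x\|b)$ is $5\epsilon'$-close to $c$ under $U_n$ whenever $d_{\mathrm{TV}}(D_c,D')\le\epsilon'$, and then locate the difficulty in \emph{evaluating} $D'$ pointwise from a sampler --- i.e.\ in the generator/evaluator gap. Both diagnoses are valid and in fact complementary: your argument applies uniformly to \emph{approximate} generators and makes precise that a good Boolean hypothesis always exists inside the output distribution, which the paper does not state; the paper's permutation characterisation, on the other hand, explains \emph{why} even the best-case (exact) generator is opaque --- it need not expose $c$ on any fixed input, only on a permuted one --- and ties the obstruction to a concrete hardness result about learning permutations. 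One minor remark: your tie-breaking ``toward $c(x)$'' is not available to the learner, but this is harmless for the bound, since with any tie-breaking $h^\ast(x)\neq c(x)$ still forces $D'(x\|\overline{c(x)})\ge D'(x\|c(x))$ and hence $\mu(x)\le 2D'(x\|\overline{c(x)})$.
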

\noindent Apart from shedding some light on the relationship between function learnability and distribution learnability, what we might hope is that taken together Theorem \ref{t:learnable_implies_learnable} and Conjecture \ref{c:hard_implies_hard} (if true) would allow us to instantly leverage some existing separation between the classical versus quantum learnability of a particular Boolean function concept class $\mathcal{C}$, to obtain a separation between the classical versus quantum learnability of the associated distribution class $\mathcal{D}_\mathcal{C}$. Unfortunately however this is not the case. In particular, we stress that both Theorem \ref{t:learnable_implies_learnable} and Conjecture \ref{c:hard_implies_hard} describe a relationship between the generator-learnability of $\mathcal{D}_\mathcal{C}$, and the \textit{distribution specific} PAC learnability of the concept class $\mathcal{C}$, \textit{with respect to the uniform distribution}, as well as with respect to the \textit{classical} random example $\pex$ oracle. More specifically, what this means is that, if Conjecture \ref{c:hard_implies_hard} is true, and if there exists a concept class $\mathcal{C}$ which has the following properties:

\begin{enumerate}[label=(\alph*)]
    \item $\mathcal{C}$ \textit{is not} efficiently \textit{classically} PAC learnable, \textit{with respect to the uniform distribution} and the $\pex$ oracle,
    \item $\mathcal{C}$ \textit{is}  efficiently \textit{quantum} learnable, \textit{with respect to the uniform distribution} and the $\pex$ oracle,
\end{enumerate}
then the distribution class $\mathcal{D}_\mathcal{C}$ would not be efficiently classically PAC $\gen$-learnable (via Conjecture \ref{c:hard_implies_hard}), but it would be efficiently quantum PAC $\gen$-learnable (via Theorem \ref{t:learnable_implies_learnable}). However, at present it is not known whether a concept class $\mathcal{C}$ with both of the above properties exists. More specifically, as discussed in Ref.\ \cite{arunachalam2017survey}, Kearns and Valiant \cite{kearns1994cryptographic} have constructed a concept class which, under the assumption that there exists no efficient algorithm for the factorization of Blum integers, is not efficiently PAC learnable with respect to the $\pex$ oracle, but which Servedio and Gortler \cite{servedio2004equivalences} have shown is efficiently quantum PAC learnable with respect to the $\pex$ oracle. However, recall that in order to prove that a concept class is not efficiently PAC learnable, all one has to do is prove that there exists a single distribution $D$ with respect to which the concept class is not efficiently PAC learnable. As such, it can be that a concept class is not efficiently PAC learnable, while still being efficiently PAC learnable \textit{with respect to the uniform distribution} -- which is the case for the factoring based concept class of Kearns and Valiant.

If one restricts themselves to PAC learnability with respect to the uniform distribution, Bshouty and Jackson \cite{bshouty1998learning} have shown that the concept class of $s$-term DNF, whose best known classical learner with $\pex$ access requires quasi-polynomial time \cite{verbeurgt1990learning}, is efficiently quantum PAC learnable, if one allows the learner access to the \textit{quantum} random example oracle $\qpex$. As such we see that the factoring based concept class of Kearns and Valiant fails to satisfy our requirements due to the fact that it is efficiently classically PAC learnable with respect to the uniform distribution, while the concept class of $s$-term DNF fails to satisfy our requirements due to the fact that the efficient quantum learner requires quantum random examples. Despite this we note that Kharitonov \cite{kharitonov1993cryptographic,kharitonov1995cryptographic} has given a variety of concept classes, which under various  cryptographic assumptions, satisfy property (a) above -- i.e. are not efficiently classically PAC learnable with respect to the uniform distribution and the random example oracle\footnote{We note that Kharitonov's results \cite{kharitonov1993cryptographic,kharitonov1995cryptographic} are in fact significantly stronger. In particular, he provides concept classes which are not even \textit{weakly} learnable (i.e. with non-negligible advantage) with respect to the uniform distribution, even if the learner is allowed membership queries.}. In light of these results, we see that the truth of Conjecture \ref{c:hard_implies_hard} would at the least imply the existence of a distribution class which is not efficiently classically PAC $\gen$-learnable. Given these observations, we proceed to prove Theorem \ref{t:learnable_implies_learnable}, and to discuss in more detail Conjecture \ref{c:hard_implies_hard}.

In order to prove Theorem \ref{t:learnable_implies_learnable} we begin with a few preliminary results and observations. The first such observation follows directly from Definition \ref{d:eff_gen}:

\begin{observation} Let $\gen_D:\{0,1\}^m \rightarrow \{0,1\}^n$ be a classical generator for some probability distribution $D\in\mathcal{D}_n$. Then, for all $y \in \{0,1\}^n$ we have that 
\begin{equation}\label{e:frequency_encoding}
    D(y) = \frac{1}{2^m}\sum_{x\in\{0,1\}^m}\delta(\gen_D(x),y),
\end{equation}
where $\delta(y',y) = 1$ if $y=y'$ and $\delta(y',y) = 0$ otherwise.
\end{observation}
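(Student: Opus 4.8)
The plan is to unwind Definition~\ref{d:eff_gen} directly; no real machinery is needed. By that definition a classical generator $\gen_D$ is a classical algorithm whose sole source of randomness is a string of $m = \mathrm{poly}(n)$ uniformly random input bits, so for each fixed $x \in \{0,1\}^m$ the output is determined, and $\gen_D$ may be regarded, as in the statement, as a function $\gen_D : \{0,1\}^m \rightarrow \{0,1\}^n$. First I would record that, by the defining property of a generator for $D$ (namely that feeding $\gen_D$ uniformly random bits yields a sample distributed according to $D$), for every $y \in \{0,1\}^n$ one has
\begin{equation}
    D(y) = \mathrm{Pr}_{x \leftarrow U_m}[\gen_D(x) = y].
\end{equation}

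Next I would expand this probability over the uniform choice of the seed: since each particular $x \in \{0,1\}^m$ is chosen with probability $2^{-m}$, the event $\{\gen_D(x) = y\}$ has probability $2^{-m}$ times the number of seeds mapping to $y$, i.e.
\begin{equation}
    \mathrm{Pr}_{x \leftarrow U_m}[\gen_D(x) = y] = \frac{1}{2^m}\,\bigl|\{x \in \{0,1\}^m : \gen_D(x) = y\}\bigr| = \frac{1}{2^m}\sum_{x \in \{0,1\}^m}\delta(\gen_D(x),y),
\end{equation}
where the last equality merely rewrites the cardinality of the preimage as a sum of the indicator $\delta(\gen_D(x),y)$. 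Chaining the two displays gives \eqref{e:frequency_encoding}.

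The only point that needs a word of care is the passage to the deterministic-function picture — that is, making explicit that the $m$ random input bits really are the \emph{only} randomness used by $\gen_D$, so that $\gen_D(x)$ is a well-defined element of $\{0,1\}^n$ for each $x$. This is, however, immediate from the phrasing of Definition~\ref{d:eff_gen}, and I do not expect any genuine obstacle here: the observation is essentially a restatement of what it means for a classical circuit fed uniform seed bits to realise the distribution $D$, recording the elementary but useful ``frequency encoding'' of $D$ that will be invoked later.
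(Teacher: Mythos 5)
Your proof is correct and takes the only natural route, which is also the paper's implicit one: the paper states this observation with no explicit proof beyond noting it ``follows directly from Definition~\ref{d:eff_gen},'' and your argument simply spells out that derivation by identifying $D(y)$ with $\mathrm{Pr}_{x\leftarrow U_m}[\gen_D(x)=y]$ and expanding the uniform probability as the normalized preimage count.
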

\noindent The above observation then allows us to prove the following lemma:
\begin{lemma}\label{l:inter}
For any two Boolean functions $h,c \in \mathcal{F}_n$, we have that
\begin{equation}
    \mathrm{Pr}_{x\leftarrow U_n}[h(x) \neq c(x)]  =  d_{\mathrm{TV}}(D_h,D_c) 
\end{equation}
\end{lemma}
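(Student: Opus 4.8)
The plan is to compute both sides directly from the definition of the distributions $D_h$ and $D_c$ and observe that they coincide; there is essentially no cleverness required, only careful bookkeeping. First I would record the explicit form of $D_h$. Since $\gen_{D_h}(x) = x\|h(x)$ is fed $x$ drawn uniformly from $\{0,1\}^n$, and since $x \mapsto x\|h(x)$ is injective, Eq.~\eqref{e:frequency_encoding} (with $m=n$) gives
\[
D_h(x\|b) = \begin{cases} 1/2^n & \text{if } b = h(x), \\ 0 & \text{otherwise,}\end{cases}
\]
for all $x\in\{0,1\}^n$ and $b\in\{0,1\}$, and analogously for $D_c$.

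Next I would expand the total variation distance, grouping the sum over $y \in \{0,1\}^{n+1}$ according to its $n$-bit prefix $x$ and its last bit $b$:
\[
d_{\mathrm{TV}}(D_h,D_c) = \frac{1}{2}\sum_{x\in\{0,1\}^n}\sum_{b\in\{0,1\}}\bigl|D_h(x\|b) - D_c(x\|b)\bigr|.
\]
For a fixed $x$, if $h(x)=c(x)$ then $D_h(x\|b) = D_c(x\|b)$ for both values of $b$, so the inner sum vanishes. If $h(x)\neq c(x)$, the two mass functions disagree on exactly the two strings $x\|h(x)$ and $x\|c(x)$, each contributing $|1/2^n - 0| = 1/2^n$, so the inner sum equals $2/2^n$.

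Finally, summing these per-$x$ contributions yields
\[
d_{\mathrm{TV}}(D_h,D_c) = \frac{1}{2}\cdot\frac{2}{2^n}\,\bigl|\{x \in \{0,1\}^n : h(x)\neq c(x)\}\bigr| = \frac{1}{2^n}\bigl|\{x : h(x)\neq c(x)\}\bigr| = \mathrm{Pr}_{x\leftarrow U_n}[h(x)\neq c(x)],
\]
which is the claimed identity. I do not expect any genuine obstacle here: the only points to watch are that $x\|h(x)$ is injective (so no two preimages collapse onto the same string and the masses are exactly $1/2^n$), and that the factor $\tfrac12$ in the definition of $d_{\mathrm{TV}}$ precisely cancels the two disagreeing strings contributed by each ``bad'' input $x$.
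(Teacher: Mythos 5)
Your proof is correct and follows essentially the same route as the paper's: you derive the explicit point masses of $D_h$ and $D_c$ from Eq.~\eqref{e:frequency_encoding}, then expand the total variation distance over $\{0,1\}^{n+1}$ grouped by the $n$-bit prefix, observing that each disagreement point $x$ contributes exactly $2/2^n$ to the unnormalized sum. The paper phrases the same bookkeeping more compactly using the Kronecker $\delta$ notation, but the decomposition and the cancellation of the $\tfrac12$ factor against the two disagreeing strings per bad $x$ are identical.
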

\begin{proof}
Firstly, note that it follows from Eq. \eqref{e:frequency_encoding} that for any Boolean function $f \in \mathcal{F}_n$, we have for all $y \in \{0,1\}^{n+1}$ that
\begin{align}
    D_f(y) &= \frac{1}{2^n}\sum_{x \in \{0,1\}^n}\delta(\gen_{D_f}(x),y) \nonumber \\
    & = \frac{1}{2^n}\sum_{x \in \{0,1\}^n}\delta\Big(x||f(x),y_{[1,n]}||y_{n+1}\Big) \nonumber \\
    & = \frac{1}{2^n}\delta\Big(f(y_{[1,n]}),y_{n+1}\Big),
\end{align}
where we have denoted the $n$ bit prefix of $y$ with $y_{[1,n]}$ and the $(n+1)$'th bit of $y$ with $y_{n+1}$. Given this, we then have that
\begin{align}
    d_{\mathrm{TV}}(D_h,D_c) &= \frac{1}{2}\sum_{y\in \{0,1\}^{n+1}}|D_h(y) - D_c(y)| \nonumber \\
    & = \frac{1}{2}\left(\frac{1}{2^n}\sum_{y\in \{0,1\}^{n+1}}\Big|\delta\Big(h(y_{[1,n]}),y_{n+1}\Big) - \delta\Big(c(y_{[1,n]}),y_{n+1}\Big)\Big|\right) \nonumber \\
    & = \frac{1}{2}\left(\frac{1}{2^n}\sum_{y \in \{0,1\}^n} 2[1 - \delta(h(y),c(y))] \right)\nonumber \\
    & = \mathrm{Pr}_{y\leftarrow U_n}[h(y) \neq c(y)].
\end{align}
\end{proof}
\noindent Given this, the proof of Theorem \ref{t:learnable_implies_learnable} is then as follows:
\begin{proof}[Proof (Theorem \ref{t:learnable_implies_learnable})]
We consider some concept class $\mathcal{C} \subseteq \mathcal{F}_n$ and begin by noting that for all $c\in \mathcal{C}$
\begin{align}
    &\query(\pex(c,U_n)) = (x,c(x)) \text{ with } x \leftarrow U_n, \\
    &\query(\sample(D_c)) = x||c(x) \text{ with } x \leftarrow U_n.
\end{align}
As such it is clear that any algorithm given oracle access to $\pex(c,U)$ can efficiently simulate oracle access to $\sample(D_c)$, and vice versa. Now, we assume that $\mathcal{C}$ is efficiently classically (quantum) learnable with respect to the uniform distribution and the $\pex$ oracle -- i.e. for all valid $\epsilon$ and $\delta$ there exists an efficient classical (quantum) $(\epsilon,\delta,\pex,U)$-PAC learner for $\mathcal{C}$, which we denote $\mathcal{A}_{\epsilon,\delta}$. Using this we show that for all valid $\epsilon,\delta$ there exists an efficient classical (quantum) $(\epsilon, \delta,\sample,\mathrm{TV})$ PAC $\gen$-learner for the distribution class $\mathcal{D}_\mathcal{C}$, which we denote $\mathcal{A}'_{\epsilon,\delta}$. More specifically, given some valid $\epsilon,\delta$, when given access to $\sample(D_c)$ algorithm $\mathcal{A}'_{\epsilon, \delta}$ does the following:

\begin{enumerate}
    \item Simulate $\mathcal{A}_{\epsilon,\delta}$ on input and obtain some $h\in \mathcal{F}_n$.
    \item Output $\gen_{D_h}$.
\end{enumerate}
By Lemma \ref{l:inter} we know that if $\mathrm{Pr}_{x\leftarrow U_n}[h(x) \neq c(x)] \leq \epsilon$, then $\gen_{D_h}$ is a $(d_{\mathrm{TV}},\epsilon)$ generator for $D_c$. Therefore it follows from the fact that $\mathcal{A}_{\epsilon,\delta}$ is an efficient $(\epsilon,\delta,\pex,U)$-PAC learner for $\mathcal{C}$, that $\mathcal{A}'$ is an $(\epsilon, \delta,\sample,d_{\mathrm{TV}})$-PAC $\gen$-learner for the distribution class $\mathcal{D}_\mathcal{C}$.
\end{proof}
\noindent Before continuing we note that the proof of Theorem \ref{t:learnable_implies_learnable} relies strongly on the fact that the concept class $\mathcal{C}$ is learnable from random examples drawn from the uniform distribution. In particular, if the concept class $\mathcal{C}$ was only learnable with respect to membership queries, or random examples drawn from some other distribution, then the distribution class learner $\mathcal{A}'$ could \textit{not} simulate the concept class learner $\mathcal{A}$. It is this observation that motivates our restriction to the uniform distribution specific learnability of concept classes from random examples. 

Given the above result, we now move onto a discussion of Conjecture \ref{c:hard_implies_hard}. As per the previous discussion, if Conjecture \ref{c:hard_implies_hard} is true, this would allow one to use any concept class which is not efficiently classically PAC learnable, \textit{with respect to the uniform distribution} and random examples (such as those discussed by Kharitonov \cite{kharitonov1993cryptographic,kharitonov1995cryptographic}), to obtain a distribution class which is not efficiently classically PAC $\gen$-learnable with respect to the $\sample$ oracle and the total variation distance. As we will see, a primary obstacle in trying to proving Conjecture \ref{c:hard_implies_hard} is the non-uniqueness of exact classical generators for a given discrete probability distribution. In fact, this difficulty illustrates clearly a key difference between the learnability of Boolean functions and the $\gen$-learnability of distribution classes. More specifically, given a concept class $\mathcal{C}$ which (up to some assumption) is provably not efficiently classically learnable (with respect to random examples drawn from the uniform distribution) a natural proof strategy for Conjecture \ref{c:hard_implies_hard} would be to obtain a contradiction by showing that if the distribution class $\mathcal{D}_\mathcal{C}$ was efficiently $\gen$-learnable, then the concept class $\mathcal{C}$ would be efficiently learnable. Similarly to the proof of Theorem \ref{t:learnable_implies_learnable}, when given access to $\pex(U,c)$ for some $c\in \mathcal{C}$, a function learner $\mathcal{A}$ for $\mathcal{C}$ could easily simulate a distribution-class learner $\mathcal{A}'$ for $\mathcal{D}_\mathcal{C}$ by using the $\pex(U,c)$ oracle to simulate the $\sample(D_c)$ oracle. However, unlike in the proof of Theorem \ref{t:learnable_implies_learnable}, the concept class learner $\mathcal{A}$ cannot simply extract a function hypothesis $h$ from the approximate generator output by $\mathcal{A}'$. To make this more precise, and to pinpoint clearly the key difficulty, we begin with the following series of observations and lemmas which fully characterize the non-uniqueness of \textit{exact} classical generators for $D_c$.

\begin{observation}\label{o:restriction_obs}
For all $m \geq n$ the generator $\gen_{(D_c,m)}:\{0,1\}^m \rightarrow \{0,1\}^{n+1}$ defined via
\begin{equation}
    \gen_{(D_c,m)}(x) := \gen_{D_c}(x_{[1,n]}),
\end{equation}
is an exact generator for $D_c$, where $x := x_{[1,n]}||x_{[n+1,m]}.$
\end{observation}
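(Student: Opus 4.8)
\noindent The plan is to check directly that feeding $m$ uniformly random bits into $\gen_{(D_c,m)}$ reproduces the distribution $D_c$ exactly, using the frequency-encoding identity of Eq.~\eqref{e:frequency_encoding}. First I would apply that identity to the generator $\gen_{(D_c,m)}:\{0,1\}^m\to\{0,1\}^{n+1}$ to write the distribution $D'$ it induces on $\{0,1\}^{n+1}$ as $D'(y) = 2^{-m}\sum_{x\in\{0,1\}^m}\delta\big(\gen_{(D_c,m)}(x),y\big)$ for every $y\in\{0,1\}^{n+1}$.

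Next, since $\gen_{(D_c,m)}(x) = \gen_{D_c}(x_{[1,n]})$ ignores the suffix $x_{[n+1,m]}$, I would split the sum over $x$ according to its $n$-bit prefix and its $(m-n)$-bit suffix; each of the $2^{m-n}$ suffixes contributes the same summand, which gives $D'(y) = 2^{-n}\sum_{x'\in\{0,1\}^n}\delta\big(\gen_{D_c}(x'),y\big)$. Applying Eq.~\eqref{e:frequency_encoding} a second time, now to the original generator $\gen_{D_c}$ (which has exactly $n$ input bits), the right-hand side is precisely $D_c(y)$. Hence $D' = D_c$, so $\gen_{(D_c,m)}$ is an exact generator for $D_c$; moreover, provided $m = \mathrm{poly}(n)$, appending the extra uniform input bits costs only polynomial overhead, so $\gen_{(D_c,m)}$ remains an efficient generator in the sense of Definition~\ref{d:eff_gen}.

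I do not expect any genuine obstacle here: the statement is an immediate consequence of the frequency-encoding identity, and the only step demanding a little care is the reindexing of the sum when factoring out the $2^{m-n}$ padding assignments (together with the remark, needed only for the efficiency claim, that $m$ must stay polynomial in $n$).
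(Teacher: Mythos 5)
Your proof is correct and reaches the conclusion by the same basic route as the paper — a direct computation via the frequency-encoding identity of Eq.~\eqref{e:frequency_encoding}, splitting the sum over $x\in\{0,1\}^m$ by prefix and factoring out the $2^{m-n}$ padding assignments. The one small difference is in how you close the computation: the paper fixes strings of the form $y||c(y)$ and uses the injectivity of $\gen_{D_c}$, i.e.\ the step $\delta\bigl(\gen_{D_c}(x_{[1,n]}),\gen_{D_c}(y)\bigr)=\delta\bigl(x_{[1,n]},y\bigr)$, to count the surviving terms directly and obtain $1/2^n$; you instead apply Eq.~\eqref{e:frequency_encoding} a second time, now to the original generator $\gen_{D_c}$, and conclude $D'(y)=D_c(y)$ for every $y\in\{0,1\}^{n+1}$. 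Your variant is marginally cleaner and more general: it treats all output strings uniformly and never uses the specific structure (or injectivity) of $\gen_{D_c}$, so it would show just as well that padding \emph{any} classical generator with unused uniform bits leaves the induced distribution unchanged. The aside about $m=\mathrm{poly}(n)$ for efficiency is fine but not needed for the statement as phrased, which only asserts exactness of the generated distribution.
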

\begin{proof}
Let $D$ be the distribution generated by $\gen_{(D_C,m)}$. Then, for all $y\in\{0,1\}^n$ we have that
\begin{align}
    D(y||c(y)) &= \frac{1}{2^m}\sum_{x\in\{0,1\}^m}\delta(\gen_{(D_c,m)}(x),y||c(y)) \nonumber\\
    &= \frac{1}{2^m}\sum_{x\in\{0,1\}^m}\delta(\gen_{D_c}(x_{[1,n]}),\gen_{D_c}(y)) \nonumber \\
    &= \frac{1}{2^m}\sum_{x \in \{0,1\}^m}\delta(x_{[1,n]},y) \nonumber \\
    & = \frac{1}{2^m}2^{m-n} \nonumber \\
    &= \frac{1}{2^n}.
\end{align}
\end{proof}

\begin{observation}\label{o:permutation_obs}
For all $m\geq n$, and for all permutations $P:\{0,1\}^m\rightarrow \{0,1\}^m$ the generator $\gen_{(D_c,m,P)}:\{0,1\}^m \rightarrow \{0,1\}^{n+1}$, defined via
\begin{equation}
    \gen_{(D_c,m,P)}(x) := \gen_{(D_c,m)}(P(x)),
\end{equation}
is an exact generator for $D_c$.
\end{observation}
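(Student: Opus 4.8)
The plan is to reduce the statement to Observation \ref{o:restriction_obs} using the elementary fact that a permutation of a finite set pushes the uniform distribution forward to itself. First I would let $D$ denote the distribution generated by $\gen_{(D_c,m,P)}$ and, for an arbitrary $y \in \{0,1\}^{n+1}$, expand $D(y)$ via the frequency formula Eq.\ \eqref{e:frequency_encoding}, obtaining
\begin{equation}
    D(y) = \frac{1}{2^m}\sum_{x\in\{0,1\}^m}\delta\big(\gen_{(D_c,m,P)}(x),y\big) = \frac{1}{2^m}\sum_{x\in\{0,1\}^m}\delta\big(\gen_{(D_c,m)}(P(x)),y\big).
\end{equation}
Next, since $P:\{0,1\}^m\rightarrow\{0,1\}^m$ is a bijection, I would reindex the sum by the change of variables $x' = P(x)$, which is merely a bijective relabelling of the summation domain, to get
\begin{equation}
    D(y) = \frac{1}{2^m}\sum_{x'\in\{0,1\}^m}\delta\big(\gen_{(D_c,m)}(x'),y\big).
\end{equation}
This is precisely the frequency formula for the distribution generated by $\gen_{(D_c,m)}$, which by Observation \ref{o:restriction_obs} equals $D_c$; hence $D = D_c$, and therefore $\gen_{(D_c,m,P)}$ is an exact generator for $D_c$.

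There is essentially no technical obstacle here -- the argument is a one-line change of variables -- and the only point requiring care is conceptual rather than computational: ``exact generator'' refers solely to the generated distribution coinciding with $D_c$, and not to efficiency in the sense of Definition \ref{d:eff_gen}, since a general permutation $P$ of $\{0,1\}^m$ need not be efficiently computable and so $\gen_{(D_c,m,P)}$ need not be an \emph{efficient} generator. The role of this observation in the surrounding discussion is to exhibit, alongside Observation \ref{o:restriction_obs}, further degrees of freedom in the choice of an exact classical generator for $D_c$; it is exactly this abundance of inequivalent generators that obstructs the natural strategy for Conjecture \ref{c:hard_implies_hard}, namely extracting a canonical function hypothesis from an approximate generator output by a putative distribution-class learner.
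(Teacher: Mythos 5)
Your proof is correct and takes essentially the same approach as the paper's one-line argument, namely that a permutation preserves the uniform distribution on $\{0,1\}^m$; you simply make the change-of-variables explicit via the frequency formula rather than stating it verbally. Your caveat that ``exact'' does not entail ``efficient'' (since a general permutation need not be polynomial-time computable) is a sensible clarification consistent with the paper's usage, though the paper does not flag it.
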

\begin{proof}
As $P$ is a permutation it maps uniformly random inputs to uniformly random outputs, and as such the distribution over output strings of the generator is unaffected by composition with a permutation.
\end{proof}
\noindent Next, we note that we can rule out the possibility of an exact generator with $m < n$:

\begin{lemma}\label{l:error_bounds}
Let $\gen_{D}:\{0,1\}^m \rightarrow \{0,1\}^{n+1}$ be an exact classical generator for some distribution $D \in \mathcal{D}_{n+1}$. Then
\begin{equation}
    d_{\mathrm{TV}}(D,D_c) \geq 1 - 2^{m-n}.
\end{equation}
\end{lemma}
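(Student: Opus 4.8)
The plan is to combine the variational definition of total variation distance with a simple counting bound on the support of the distribution $D$ generated by $\gen_D$. The key observation is that $\gen_D$ maps the $2^m$ strings of $\{0,1\}^m$ into $\{0,1\}^{n+1}$, so the support of $D$ has at most $2^m$ elements, whereas the support of $D_c$ consists of the $2^n$ distinct strings $x\|c(x)$ with $x\in\{0,1\}^n$, each carrying mass exactly $2^{-n}$.

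Concretely, I would use the identity $d_{\mathrm{TV}}(D,D_c) = \max_{A \subseteq \{0,1\}^{n+1}}\big(D_c(A) - D(A)\big)$ and take $A$ to be the set of strings lying in the support of $D_c$ but \emph{not} in the support of $D$. For this choice $D(A)=0$, hence $d_{\mathrm{TV}}(D,D_c) \geq D_c(A) = 1 - D_c(B)$, where $B$ denotes the intersection of the supports of $D$ and $D_c$. Since $B$ is contained in the support of $D$ it has at most $2^m$ elements, and since every element of the support of $D_c$ has $D_c$-mass $2^{-n}$, we obtain $D_c(B) \leq 2^m \cdot 2^{-n} = 2^{m-n}$, and therefore $d_{\mathrm{TV}}(D,D_c) \geq 1 - 2^{m-n}$, which is the claim. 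An equivalent route is to invoke $d_{\mathrm{TV}}(D,D_c) = 1 - \sum_y \min\big(D(y),D_c(y)\big)$ and bound the sum, which is supported on at most $2^m$ values of $y$ each contributing at most $2^{-n}$, again by $2^{m-n}$.

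There is no genuine obstacle here; the only point worth stating carefully is that the support of $D$ has cardinality at most $2^m$, which is immediate from Definition~\ref{d:eff_gen}, since a classical generator on $m$ input bits is a deterministic function of those bits and hence takes at most $2^m$ distinct output values. Note also that the bound is informative only when $m<n$ and is vacuous for $m \geq n$, consistent with Observations~\ref{o:restriction_obs} and~\ref{o:permutation_obs}, which exhibit exact classical generators for $D_c$ with arbitrary input length $m \geq n$.
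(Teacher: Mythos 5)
Your proof is correct, and it takes a cleaner route than the paper's. The paper argues by explicitly identifying the ``optimal'' generator that minimizes $d_{\mathrm{TV}}(D,D_c)$ — namely one whose $2^m$ output strings are all distinct and all of the form $x\|c(x)$ — and then computing the total variation distance for that case exactly, so the bound follows as a consequence of optimality. That optimality claim is plausible but stated without proof. You sidestep it entirely: from the variational characterization $d_{\mathrm{TV}}(D,D_c)=\max_A\bigl(D_c(A)-D(A)\bigr)$ applied to $A=\operatorname{supp}(D_c)\setminus\operatorname{supp}(D)$, the bound follows directly from two facts both proofs share — that $\operatorname{supp}(D)$ has at most $2^m$ elements (since $\gen_D$ is a deterministic function of $m$ uniform bits) and that $D_c$ is uniform with mass $2^{-n}$ on its $2^n$ support points. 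This is more economical and more obviously rigorous; the cost, if any, is that it does not exhibit the tightness of the bound, whereas the paper's computation implicitly does by producing a generator achieving it. Your closing remark that the bound is vacuous precisely when $m\geq n$, consistent with Observations~\ref{o:restriction_obs} and~\ref{o:permutation_obs}, is the right sanity check and matches the role the lemma plays in Corollary~\ref{c:m_restriction}.
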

\begin{proof}
We recall that if $\gen_D:\{0,1\}^m \rightarrow \{0,1\}^{n+1}$ then for all $y\in \{0,1\}^{n+1}$
\begin{equation}
    D(y) = \frac{1}{2^m}\sum_{x\in\{0,1\}^m}\delta(\gen_D(x),y).
\end{equation}
From the above it follows that there exist at most $2^m$ strings $y\in\{0,1\}^{n+1}$ with non-zero probability, and all such strings have probability $\alpha/2^m$ for some $\alpha \in \{1,\ldots,2^m\}$. Next, we note that
\begin{equation}
    D_c(y) = \begin{cases} \frac{1}{2^n} \text{ if } y = x||c(x) \text{ for some } x\in \{0,1\}^n, \\
    0 \text { otherwise.}
    \end{cases}
\end{equation}
As such, we see that for $m \leq n$, the optimal $\gen_{D}:\{0,1\}^m \rightarrow \{0,1\}^{n+1}$ which minimizes $d_{\mathrm{TV}}(D,D_c)$ is the one which assigns a probability of $1/2^m$ to $2^m$ of the $2^n$ strings $y \in \{0,1\}^{n+1}$ which are of the form $x||c(x)$ for some $x\in \{0,1\}^n$, and a zero probability to the remaining $2^n - 2^m$ strings. For this optimal generator, we have that
\begin{align}
    d_{\mathrm{TV}}(D,D_c) &= \frac{1}{2}\sum_{y\in\{0,1\}^{n+1}}|D(y) - D_c(y)| \nonumber \\
    &= \frac{1}{2}\sum_{x\in\{0,1\}^{n}}|D(x||c(x)) - \frac{1}{2^n}| \nonumber \\
    &=\frac{1}{2}\left(2^m\left(\frac{1}{2^m} - \frac{1}{2^n} \right) + (2^n - 2^m)\frac{1}{2^n}\right) \nonumber \\
    &= 1 - 2^{m-n}.
\end{align}
The statement follows from the fact that the above was calculated for the optimal generator.
\end{proof}
\noindent An immediate corollary of Lemma \ref{l:error_bounds}, ruling out the possibility of an exact generator for $D_c$ with $m < n$, is then as follows:
\begin{corollary}\label{c:m_restriction}
Given some $\gen_{D}:\{0,1\}^m \rightarrow \{0,1\}^{n+1}$, if $d_{\mathrm{TV}}(D,D_c) < 1/2$, then $m \geq n$.
\end{corollary}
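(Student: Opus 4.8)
The plan is to derive the corollary as an immediate consequence of Lemma~\ref{l:error_bounds}, by establishing its contrapositive: \emph{if $m < n$ then $d_{\mathrm{TV}}(D,D_c)\geq 1/2$}.

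First I would note that by hypothesis $D$ is precisely the distribution produced by $\gen_D$, so $\gen_D:\{0,1\}^m\rightarrow\{0,1\}^{n+1}$ is an exact classical generator for $D$ and Lemma~\ref{l:error_bounds} applies directly, giving
\begin{equation}
    d_{\mathrm{TV}}(D,D_c)\geq 1-2^{m-n}.
\end{equation}
Next, assuming $m<n$, I would use the fact that $m$ and $n$ are non-negative integers, so $m<n$ is equivalent to $m\leq n-1$, hence $m-n\leq -1$ and therefore $2^{m-n}\leq 1/2$. Substituting this into the displayed inequality yields $d_{\mathrm{TV}}(D,D_c)\geq 1-\tfrac12 = \tfrac12$. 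Taking the contrapositive, $d_{\mathrm{TV}}(D,D_c)<1/2$ forces $m\geq n$, which is the claim.

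There is essentially no obstacle here: the statement is a one-line corollary of Lemma~\ref{l:error_bounds}, and the only point requiring any care is the (entirely routine) observation that integrality of $m$ and $n$ upgrades the strict inequality $m<n$ to $m\leq n-1$, so that the error term $1-2^{m-n}$ is bounded below by $1/2$ rather than merely by $0$. Conceptually, the corollary simply records that an exact classical generator for the padded-function distribution $D_c$ cannot use fewer than $n$ random input bits without incurring total-variation error at least $1/2$, which is the obstruction that makes the generator-learnability argument for Conjecture~\ref{c:hard_implies_hard} subtle.
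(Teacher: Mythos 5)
Your proof is correct and is exactly the argument the paper has in mind: the paper states the corollary as ``immediate'' from Lemma~\ref{l:error_bounds} without writing out the contrapositive step, and your observation that integrality upgrades $m<n$ to $2^{m-n}\leq 1/2$ is precisely the intended one-line deduction.
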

\noindent Finally, given the above results, the following lemma allows us to fully characterize all possible classical exact generators for $D_c$.

\begin{lemma}\label{l:exact_form}
 $\gen_D:\{0,1\}^m \rightarrow \{0,1\}^{n+1}$ is an exact classical generator for $D_c$ if and only if $m\geq n$ and $\gen_{D} = \gen_{(D_c,m,P)}$ for  some permutation $P:\{0,1\}^m\rightarrow\{0,1\}^m$.
\end{lemma}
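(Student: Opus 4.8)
The plan is to prove the two implications separately, with the ``if'' direction being essentially immediate from the preceding observations. Indeed, if $m\geq n$ and $\gen_D = \gen_{(D_c,m,P)}$ for some permutation $P:\{0,1\}^m\rightarrow\{0,1\}^m$, then $\gen_{(D_c,m)}$ is an exact generator for $D_c$ by Observation \ref{o:restriction_obs}, and hence so is $\gen_{(D_c,m,P)}$ by Observation \ref{o:permutation_obs}; thus $\gen_D$ is an exact generator for $D_c$. So the substance of the lemma lies in the ``only if'' direction, which I would approach as follows. Suppose $\gen_D:\{0,1\}^m\rightarrow\{0,1\}^{n+1}$ is an exact classical generator for $D_c$, so that the distribution $D$ it generates satisfies $D = D_c$ and in particular $d_{\mathrm{TV}}(D,D_c) = 0 < 1/2$. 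By Corollary \ref{c:m_restriction} this already forces $m \geq n$.

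Next I would read off the fiber structure of $\gen_D$ from Eq.\ \eqref{e:frequency_encoding}. For every $y \in \{0,1\}^{n+1}$ we have
\begin{equation}
    \frac{1}{2^m}\bigl|\gen_D^{-1}(y)\bigr| \;=\; D(y) \;=\; D_c(y) \;=\; \begin{cases} 1/2^n & \text{if } y = z||c(z) \text{ for some } z\in\{0,1\}^n, \\ 0 & \text{otherwise,}\end{cases}
\end{equation}
so the image of $\gen_D$ is exactly $\{\,z||c(z)\,|\,z\in\{0,1\}^n\,\}$ and each such value has exactly $2^{m-n}$ preimages under $\gen_D$. Consequently $\{0,1\}^m$ is partitioned into the $2^n$ blocks $\gen_D^{-1}(z||c(z))$, each of size $2^{m-n}$. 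On the other hand, unwinding the definitions gives $\gen_{(D_c,m)}(w) = w_{[1,n]}||c(w_{[1,n]})$, so that $\gen_{(D_c,m)}^{-1}(z||c(z)) = \{z\}\times\{0,1\}^{m-n}$; thus $\gen_{(D_c,m)}$ has the very same fiber structure, partitioning $\{0,1\}^m$ into $2^n$ blocks of size $2^{m-n}$ indexed by $z \in \{0,1\}^n$.

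Given this, I would build the required permutation $P$ blockwise: for each $z \in \{0,1\}^n$ fix an arbitrary bijection $P_z : \gen_D^{-1}(z||c(z)) \rightarrow \{z\}\times\{0,1\}^{m-n}$, which exists since both sets have cardinality $2^{m-n}$, and define $P:\{0,1\}^m\rightarrow\{0,1\}^m$ by $P(x) := P_z(x)$ for the unique $z$ with $\gen_D(x) = z||c(z)$. Since the blocks $\gen_D^{-1}(z||c(z))$ partition the domain and the blocks $\{z\}\times\{0,1\}^{m-n}$ partition the codomain, $P$ is a well-defined permutation. Finally, for any $x$ with $\gen_D(x) = z||c(z)$ we have $(P(x))_{[1,n]} = z$, hence $\gen_{(D_c,m,P)}(x) = \gen_{(D_c,m)}(P(x)) = z||c(z) = \gen_D(x)$, so $\gen_D = \gen_{(D_c,m,P)}$. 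The argument is mostly bookkeeping; the one place that needs a little care is the passage from exactness to the fiber-size identity (and hence to $m \geq n$), which is exactly where Eq.\ \eqref{e:frequency_encoding} and Corollary \ref{c:m_restriction} are used, so I do not anticipate a genuine obstacle here.
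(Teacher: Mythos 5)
Your proof is correct and follows essentially the same route as the paper: invoke Observations \ref{o:restriction_obs} and \ref{o:permutation_obs} for the easy direction, use Corollary \ref{c:m_restriction} to force $m \ge n$, observe that the fibers of $\gen_D$ over each $z||c(z)$ have size $2^{m-n}$, and stitch together blockwise bijections onto the fibers $\{z\}\times\{0,1\}^{m-n}$ of $\gen_{(D_c,m)}$ to build the permutation $P$. The paper phrases the same construction in terms of sets $X_y$ and $\tilde{X}_y$, but the argument is identical.
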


\begin{proof}
The one direction of the above statement is precisely the content of observations \ref{o:restriction_obs} and \ref{o:permutation_obs}. In the other direction, the fact that $m \geq n$ follows directly from Corollary \ref{c:m_restriction}. Then, in order for $\gen_D$ to be an exact classical generator for $D_c$, we know that for all $y\in \{0,1\}^n$ there are exactly $2^{m-n}$ strings $x\in \{0,1\}^m$ for which 
\begin{equation}
    \gen_{D}(x) = y||c(y) = \gen_{D_c}(y).
\end{equation}
Let us denote the set of these strings as 
\begin{equation}
    X_y = \{x\in \{0,1\}^m\,|\, \gen_{D}(x) = \gen_{D_c}(y)\}
\end{equation}
Additionally, let us denote by $\tilde{X}_y$ the set of all strings $x\in \{0,1\}^m$ for which $\gen_{(D_c,m)}(x) = \gen_{D_c}(y)$, i.e.
\begin{align}
    \tilde{X}_y &= \{x\in \{0,1\}^m \,| \, \gen_{(D_c,m)}(x) = \gen_{D_c}(y) \} \nonumber \\
    &= \{x \in \{0,1\}^m\,|\, x_{[1,n]} = y\}.
\end{align}
Now, as $|X_y| = |\tilde{X}_y| = 2^{m-n}$ we can define a permutation $P_y:X_y \rightarrow \tilde{X}_y$. Additionally, note that for all $x \in \{0,1\}^m$, there is exactly one $y \in \{0,1\}^n$ such that $x \in X_y$, and given this, we can define a permutation $P:\{0,1\}^m \rightarrow \{0,1\}^m$ via
\begin{equation}
    P(x) = P_y(x) \text{ if } x \in X_y.
\end{equation}
Using this, we have that
\begin{equation}
    \gen_{(D_c,m,P)}(x) = \gen_{(D_c,m)}(P(x)) =\gen_{D}(x)
\end{equation}
for all $x \in \{0,1\}^m$.
\end{proof}
\noindent Given the above characterization, let us return to a discussion of one natural strategy to prove Conjecture \ref{c:hard_implies_hard}, and the fundamental obstacle one faces with this strategy. As mentioned before, given a concept class $\mathcal{C}$ which is provably hard to learn, a natural strategy would be to assume that the distribution class $\mathcal{D}_\mathcal{C}$ \textit{is} efficiently PAC $\gen$-learnable, and use this assumption to show that the concept class $\mathcal{C}$ is efficiently PAC learnable, which would be a contradiction. To simplify the exposition, let us make the stronger assumption that the concept class $\mathcal{D}_\mathcal{C}$ is \textit{exactly} $\gen$-learnable -- i.e. for all $\delta$ there exists a $\gen$-learner $\mathcal{A}'_\delta$ which when given access to $\sample(D_c)$ outputs with probability $1-\delta$ an \textit{exact} generator $\gen_D$ for $D_c$. Given this assumption, we want to construct an efficient $(\epsilon,\delta,\pex,U)$-PAC $\gen$-learner $\mathcal{A}_{\epsilon,\delta}$ for the concept class $\mathcal{C}$. A natural approach would be as follows:

\begin{enumerate}
    \item When given access to $\pex(U,c)$, the learner $\mathcal{A}_{\epsilon,\delta}$ simulates $\mathcal{A}'_\delta$ and obtains some generator $\gen_D$, which with probability $1-\delta$ is an exact generator for $D_c$.
    \item Using $\gen_D$, the learner $\mathcal{A}'_{\epsilon,\delta}$ outputs some hypothesis $h \in \mathcal{F}_n$.
\end{enumerate}
Now, as per Lemma \ref{l:exact_form}, we know that if $\gen_D$ is an exact generator for $D_c$, then $\gen_{D} = \gen_{(D_c,m,P)}$ for  some $m\geq n$ and some permutation $P:\{0,1\}^m\rightarrow\{0,1\}^m$. Using this information, and given an exact generator for $D_c$, how should the learner construct its output hypothesis $h$? Well, note that for all $y\in \{0,1\}^m$ for which $y_{[1,n]} = x$ we have that
\begin{align}
\left[\gen_{(D_c,m,P)}(P^{-1}(y))\right]_{[n+1]} &= \left[\gen_{(D_c,m)}(P(P^{-1}(y)))\right]_{[n+1]} \nonumber \\
&= \left[\gen_{(D_c,m)}(y)\right]_{[n+1]}  \nonumber \\
&= \left[\gen_{(D_c)}(y_{[1,n]})\right]_{[n+1]} \nonumber \\
&= \left[\gen_{(D_c)}(x)\right]_{[n+1]} \nonumber \\
&= \left[x||c(x)\right]_{[n+1]}  \nonumber\\
&= c(x) .
\end{align}
As such, if the learner $\mathcal{A}$ knew or could learn the permutation $P^{-1}$, then it could simply output the hypothesis $h\in \mathcal{F}_n$ defined via
\begin{equation}
    h(x) = \left[\gen_{(D_c,m,P)}(P^{-1}(x||0\ldots 0))\right]_{[n+1]} = c(x)
\end{equation}
which would in fact be an exact solution. The key point, however, is that the learner $\mathcal{A}$ \textit{does not} even know the permutation $P$. A natural question is then whether $\mathcal{A}$ could use $\gen_{(D_c,m,P)}$ to \textit{learn} $P^{-1}$? Well, we note that
\begin{equation}
    \gen_{(D_c,m,P)}(x) = P(x)_{[1,n]}||c(P(x)_{[1,n]})
\end{equation}
and so, at least in the case that $m = n$, it is clear that one can generate a data-set of input/output pairs $(x,P(x)) := (P^{-1}(y),y)$. Unfortunately, 
however, it is known that even with respect to membership queries, there does not exist an efficient \emph{exact} learner for the concept class of permutations \cite{arvind1996complexity}, and so the possibility of efficiently \textit{exactly} learning $P^{-1}$ from $\gen_{(D_c,m,P)}$ is ruled out. Of course, in principle it could be sufficient to learn an approximation to $P^{-1}$ from polynomially many random examples, however whether or not this is possible efficiently is not known. Additionally, as mentioned before, all of this is under the overly strong assumption that the generator learner is an exact learner, which outputs an exact generator with $m=n$. As can be seen from the above discussion, lifting either of these assumptions makes the fundamental problem of defining a suitable hypothesis from the output generator significantly harder.

At this stage we have outlined the primary difficulty with one natural strategy for proving Conjecture \ref{c:hard_implies_hard}, which provides a clear illustration of a key conceptual and technical difference between the PAC learnability of Boolean function concept classes and the generator learnability of distribution classes. Of course, one could conceive of a variety of other strategies, based for example on alternative characterizations of efficient PAC learnabilty \cite{haussler1991equivalence}, Occam algorithms \cite{board1992necessity} or VC dimensions \cite{blumer1989learnability}, however it is important to keep in mind the restriction of \textit{efficient learnability with respect to random examples from the uniform distribution}, which makes it unclear how to immediately apply existing results involving some of the above mentioned tools and characterizations.

\section{Discussion and Conclusion}\label{s:conclusion}

Given the results and insights of this work, we provide here a brief summary, as well as a perspective on interesting open questions and future directions. Firstly, to summarize, in Section \ref{s:Seperation} we have constructed a class of probability distributions, specified by \textit{classical} generators, which under the DDH assumption, is provably not efficiently PAC $\gen$-learnable by any classical algorithm, but for which we have constructed an efficient quantum $\gen$-learner, which only requires access to \textit{classical} samples from the unknown distribution. This construction therefore provides a clear example of a generative modelling task for which quantum learners exhibit a provable quantum advantage. Despite this, there of course remain a variety of interesting open questions, for which the insights and conjectures from Section \ref{s:alt_classical_hardness} may provide useful:

\begin{enumerate}
    \item What can one say about the quantum versus classical PAC learnability (in a generative sense) of the probability distributions used for the demonstration of ``quantum computational supremacy" \cite{bremner_average-case_2016, BosonSampling, arute2019quantum}. In particular, there are a variety of distinct questions. Firstly, while it is known that there exists no efficient classical algorithm mapping from randomly drawn quantum circuit parameters to samples from a distribution close to the one generated by the corresponding quantum circuit, is it possible to prove that when \textit{given} samples from such distributions (as opposed to circuit parameters) it is also not possible to efficiently classically learn a description of a suitable generator? Intuitively, this question seems closely related to the question of whether or not efficient classical \emph{verification} of such distributions is possible. To this end, we note that Ref.~\cite{black-box-verification} has proven that efficiently verifying certain such distributions given classical samples is not efficiently possible. However, it seems plausible that the existence of an efficient classical PAC $\gen$-learner would imply the existence of an efficient classical black-box verification algorithm -- which would then rule out the possibility of an efficient classical PAC $\gen$-learner. However, there are two important obstacles. Firstly, as discussed at length in Section \ref{ss:from_bool_to_dist}, it is important to note that there is no \textit{unique} generator for a given probability distribution. Secondly, the PAC framework places no requirement on the behaviour of $\gen$-learners when given access to samples from some distribution outside of the learnable distribution class. As such, while plausible, it is not clear how exactly to exploit an efficient PAC $\gen$-learner for the construction of an efficient black-box verification algorithm with suitable soundness guarantees, and formalizing this connection would certainly be of great interest. Secondly, in addition to proving hardness of classically learning such distributions, we would of course also like to investigate the possibility of efficient quantum learnability, and how this relates to quantum verifiability \cite{black-box-verification,GWD16,NewSupremacy,miller_quantum_2017,hayashi_verifying_2019}. Once again, while it is known that there exist efficient quantum generators for such distributions, this certainly does not immediately imply the existence of efficient quantum PAC $\qgen$-learners. As such, understanding whether or not there exist efficient quantum PAC $\qgen$-learners for such distributions is of natural interest, particularly in light of the potential connections between generator-learnability and black-box verification.

    \item As the quantum learner that we have constructed in Section \ref{ss:DDH_result} relies on the quantum algorithm for discrete logarithms~ \cite{mosca2004exact}, it is most likely the case that the quantum advantage exhibited by this learner would require the existence of a universal fault-tolerant quantum computer. Given the current availability of NISQ devices, it is of natural interest to ask whether there exists a generative modelling problem for which near-term quantum learners (and in particular $\qgen$-learners) can exhibit a provable advantage over classical learning algorithms. In order to answer this question it will certainly be necessary to understand better the theoretical properties of previously proposed NISQ hybrid-quantum classical algorithms for generative modelling, such as Born machines \cite{coyle2019born}. Additionally, it also seems likely that techniques for proving classical hardness results under weaker assumptions, as discussed in Section \ref{s:alt_classical_hardness}, would be of great help. Alternatively, it may help to focus on probability distributions which can be generated by near term quantum devices, but not classical devices, as discussed in the previous point. It is also important to reiterate that the seperation we have obtained here relies fundamentally on the known advantage quantum computers offer for computing discrete logarithms, and as such this work does not provide a new primitive for proving classical/quantum separations. Whether one can construct a quantum/classical learning separation without relying on such prior primitives is an interesting open question.

    \item It is of interest to note that the efficient quantum $\gen$-learner that we have constructed in Section \ref{ss:DDH_result} requires only a \textit{single} oracle query, and \textit{always} outputs an \textit{exact} generator. Such a quantum $\gen$-learner is certainly formally sufficient for the purpose of answering Question~\ref{q:sample_vs_sample} in the affirmative, and from one perspective provides the ``optimal" $\gen$-learner, in the sense that its query complexity is clearly optimal, and its behaviour (both run-time and output) is independent of $\epsilon$ and $\delta$ -- i.e. for all $\epsilon$ and $\delta$ the algorithm returns an exact generator with certainty. However, intuitively we might think of a ``learning" algorithm as an algorithm which requires multiple samples (i.e. learns from a ``data-set"), and outputs most often only approximate solutions, and from this perspective it is not clear to which extent the $\gen$-learner we have constructed can be considered a ``learning algorithm". As such, from a conceptual perspective it is interesting to ask whether there exists a distribution class which provides an affirmative answer to Question \ref{q:sample_vs_sample}, but for which the efficient quantum generator learner requires a non-trivial query complexity, and at best outputs only a suitably approximate generator, with sufficiently high probability. In particular, while the $\gen$-learner we have constructed is clearly highly specific to the distribution class we have constructed, it is possible that by considering concept classes for which always exact constant query complexity learners do not exist, one may be forced to construct or consider quantum generative modelling algorithms which are not as task-specific as the learner we have constructed here, and may also be suitable for near-term devices. 
    \item In this work we have considered quantum and classical $\gen$-learners, both of which only have access to \textit{classical} samples from the unknown probability distribution -- i.e. to the $\sample$ oracle. Analogously to the Boolean function setting \cite{arunachalam2017survey}, it is also interesting to ask whether there exists a distribution class for which a quantum learner (either a $\gen$-learner or a $\qgen$-learner) exhibits a quantum advantage, but only if the quantum learner has access to \textit{quantum} samples from the $\qsample$ oracle. As discussed in Section \ref{ss:weak_PRFs}, it seems likely that if Conjecture \ref{c:weak_prf_construction} is true then one could construct such a concept class using the weak-secure pseudorandom function collection based on the Learning Parity with Noise problem. Additionally, it is also plausible that if Conjecture \ref{c:hard_implies_hard} is true, then one could modify both this and Theorem \ref{t:learnable_implies_learnable} to prove both learnability and hardness results for quantum learners with quantum samples, from corresponding results for Boolean function concept classes.
\end{enumerate}

\begin{acknowledgements}
RS is funded by the BMWi, under the PlanQK initiative, and is grateful for many fruitful discussions with Alexander Nietner,  Nathan Walk and the QML reading group at the FU Berlin.
JPS acknowledges funding of the Berlin Institute for the Foundations of Learning and Data, the Einstein Foundation Berlin and the BMBF ``Post-Quantum-Cryptography" framework. JE is funded by the BMWi under the PlanQK initiative and the DFG 
(CRC 183, Project B01, EI 519/14-1, Math$+$ and EF1-11).
DH is funded by the Templeton Foundation and is grateful for discussions with Brian Coyle. The authors also acknowledge helpful and insightful comments from the anonymous referees, and in particular for pointing out that Theorem \ref{t:kearns} could most likely be strengthened to hold for the TV-distance.
\end{acknowledgements}


\bibliographystyle{abbrvunsrtnat}
\bibliography{literature.bib}

\end{document}